\DeclareMathOperator{\conv}{conv}
\newcommand{\EMPH}[1]{\textit{\textbf{#1}}}
\newcommand{\1}{\vec{e}}
\newcommand{\im}{\text{im}}
\DeclareMathOperator{\svol}{svol}
\newcommand{\W}{\mathcal{W}}
\renewcommand{\S}{\mathcal{S}}
\renewcommand{\C}{\mathcal{C}}
\newcommand{\V}{\mathcal{V}}
\newcommand{\T}{\mathcal{T}}
\newcommand{\myitem}[1]{%
\item[#1]\protected@edef\@currentlabel{#1}%
}
\title{Topological $k$-metrics}
\date{\today}
\author{Willow Barkan-Vered\thanks{Oregon State University. \email{barkanvt@oregonstate.edu}. This author's research is supported by NSF grants CCF-1941086 and CCF-1816442.} \and Huck Bennett\thanks{Oregon State University. \email{huckbennett@gmail.com}.} \and Amir Nayyeri\thanks{Oregon State University. \email{nayyeria@oregonstate.edu}. This author's research is supported by NSF grants CCF-1941086, CCF-1816442, and CCF-2311179.}}
\begin{document}

\maketitle
\listoffixmes

\begin{abstract}
Metric spaces $(X, d)$ are ubiquitous objects in mathematics and computer science that allow for capturing (pairwise) distance relationships $d(x, y)$ between points $x, y \in X$. Because of this, it is natural to ask what useful generalizations there are of metric spaces for capturing ``$k$-wise distance relationships'' $d(x_1, \ldots, x_k)$ among points $x_1, \ldots, x_k \in X$ for $k > 2$.
To that end, G\"{a}hler (Math. Nachr., 1963) (and perhaps others even earlier) defined \emph{$k$-metric spaces}, which generalize metric spaces, and most notably generalize the triangle inequality $d(x_1, x_2) \leq d(x_1, y) + d(y, x_2)$ to the ``simplex inequality'' $d(x_1, \ldots, x_k) \leq \sum_{i=1}^k d(x_1, \ldots, x_{i-1}, y, x_{i+1}, \ldots, x_k)$. 
(The definition holds for any fixed $k \geq 2$, and a $2$-metric space is just a (standard) metric space.)

In this work, we introduce \emph{strong $k$-metric spaces}, $k$-metric spaces that satisfy a topological condition stronger than the simplex inequality, which makes them ``behave nicely.''
We also introduce \emph{coboundary $k$-metrics}, which generalize $\ell_p$ metrics (and in fact all finite metric spaces induced by norms) and \emph{minimum bounding chain $k$-metrics}, which generalize shortest path metrics (and capture all strong $k$-metrics).
Using these definitions, we prove analogs of a number of fundamental results about embedding finite metric spaces including Fr\'{e}chet embedding (isometric embedding into $\ell_{\infty}$) and isometric embedding of all tree metrics into $\ell_1$. 
We also study relationships between families of (strong) $k$-metrics, and show that natural quantities, like simplex volume, are strong $k$-metrics.

\end{abstract}

\newpage

\section{Introduction}
\label{sec:introduction}

Metric spaces $(X, d)$ consist of a set of points $X$ and a metric function $d$ that specifies the pairwise distances of elements in $X$.
Metric spaces capture and abstract most familiar notions of distance, such as the $\ell_p$ distance between pairs of points in $\R^m$ and the length of the shortest path between pairs of vertices in a weighted graph. %
A major line of research has studied \emph{metric embeddings}, which work to relate different families of metrics and classify them according to their ``richness.'' Formally, it seeks to construct isometric or low-distortion embeddings between different families of metric spaces.
Well-known examples of such results are Fr\'{e}chet embedding into $\ell_\infty$~\cite{Frechet1906SurQuelques}, Bourgain's theorem for embedding into $\ell_1$~\cite{Bou85}, the Johnson-Lindenstrauss lemma for dimension reduction in $\ell_2$~\cite{JL84}, the embeddability of tree metrics into $\ell_1$~\cite{LinialMagenSaks1998TreesEucMetrics}, and Bartal's theorem for embedding into a distribution of tree metrics~\cite{Bartal96IntoProbTrees}. See Matou\v{s}ek~\cite{matousek13} for a survey. 
Besides being inherently mathematically interesting, these results have many applications in computer science, including most notably in the design of approximation algorithms for problems involving flows and cuts, as well as for problems in network design. See, e.g.,~\cite{AvisDeza1991CutCone,Linial94, Bartal96IntoProbTrees,LeightonRao1999MultcomMaxFlow,journals/jacm/AroraRV09}.

\paragraph{Generalizing to larger $k$.}
Metrics capture pairwise distance relationships between points $x, y$ in a set $X$, but do not (directly) capture relationships among more points.
Moreover, notions of ``distance'' that capture relationships between $k > 2$ points have been studied less extensively, and much less is known about their structural properties and possible applications.
In fact, it is not even a priori clear what the right way to generalize metric spaces to $k > 2$ points is.
Potentially the most notable and natural generalization of metric spaces is \emph{$k$-metric spaces}, which were apparently introduced by \cite{Gahler-63}.%
\footnote{We note that some other work on $k$-metrics---including~\cite{DezaRosenberg2000Semimetrics} and~\cite{DezaDeza2009EncyclopediaofDistances}, which calls them \emph{$m$-hemi-metrics}---defines them in an off-by-one way from this work. In this work, the $k$ in $k$-metric refers to the arity of the function $d$, whereas in some other works it refers to the dimension $k$ of the simplex spanned by $k + 1$ affinely independent points. I.e., a $k$-metric space in this work is a $(k-1)$-metric space in~\cite{DezaRosenberg2000Semimetrics,DezaDeza2009EncyclopediaofDistances}, and in particular a (standard) metric space is a $2$-metric space in this work but a $1$-metric in theirs.}
(In fact, the definition of $k$-metrics may have been introduced even earlier, e.g., by Menger~\cite{Menger1928}.) The theory of $k$-metric spaces is rich (although somewhat disjointed). Indeed, Deza and Rosenberg~\cite{DezaRosenberg2000Semimetrics} in their work on the subject state that G\"{a}hler created an (apparently unpublished) bibliography of hundreds of works that discuss $k$-metrics. See also~\cite{DezaDeza2009EncyclopediaofDistances}.
These $k$-metric spaces are defined analogously to normal metric spaces, as the following definition makes precise.

Let $k \geq 2$ be an integer, let $X$ be a finite set, and let $d: X^k \rightarrow \R$.
We call $(X, d)$ a \EMPH{$k$-metric space}, and $d$ a \EMPH{$k$-metric function} if for any $x_1,\dots,x_k\in X$:
\begin{enumerate}[(1)]
    \item \label{item:weak-non-negative_intro} $d(x_1,\dots,x_k) \geq 0$.
    \item \label{item:weak-repeats-zero_intro} $d(x_1,\dots, x_k) = 0$ if and only if the values $x_1, \dots, x_k$ are not all distinct.
    \item \label{item:weak-permutation-invariance_intro} $d(x_1,\dots,x_k) = d(x_{\pi(1)},\dots,x_{\pi(k)})$ for any permutation $\pi : [k] \to [k]$.
    \item \label{item:weak-simplex_intro} $d(x_1,\dots,x_k) \leq  \sum_{i=1}^{k}{d(x_1,\ldots,x_{i-1},y,x_{i+1},\dots,x_k)}$ for any $y\in X$.
\end{enumerate}

It is straightforward to check that plugging $k = 2$ into the above definition yields the definition of a ``standard'' metric space, and so $k$-metrics do in fact capture and generalize metric spaces. Perhaps the most interesting aspect of this definition is its generalization of the triangle inequality $d(x_1, x_2) \leq d(x_1, y) + d(y, x_2)$ to the \EMPH{simplex inequality} $d(x_1, \ldots, x_k) \leq \sum_{i=1}^k d(x_1, \ldots, x_{i-1}, y, x_{i+1}, \ldots, x_k)$ in \cref{item:weak-simplex_intro}.
However, the condition in the simplex inequality is not the only natural generalization of the triangle inequality, and is not obviously strong enough to prove good analogs of many of the core embedding results for finite metric spaces mentioned above. 

\paragraph{Strong $k$-metric spaces.}
In this work, we introduce \emph{strong $k$-metrics}, which replace the simplex inequality (\cref{item:weak-simplex_intro}) in the definition of $k$-metric spaces with a stronger, topological condition. Furthermore, we introduce strong $k$-metric analogs of norm metrics (that is, metrics in which $d(\vec{x}, \vec{y}) = \norm{\vec{x} - \vec{y}}$ for some norm $\norm{\cdot}$) called \emph{coboundary $k$-metrics}, and of graph shortest path metrics called \emph{minimum bounding chain $k$-metrics}.
We show that these strong $k$-metric spaces ``behave nicely,'' and have many properties of regular metric spaces, which allows us to prove analogs of a number of well-known embedding results for metric spaces, including Fr\'{e}chet embedding and isometric embedding of tree metrics into $\ell_1$. See \cref{tbl:metric-class-results-summary} for a summary and \cref{sec:intro-results-techniques} for a more detailed discussion of these results. 

\paragraph{Future work and applications.} We view this paper as initial work on strong $k$-metrics, coboundary $k$-metrics, and minimum bounding chain $k$-metrics.
We hope that this work and further work in the area will result in valuable tools for solving problems in computational topology.
For example, our hope is that a strong $k$-metric analog of Bourgain's theorem~\cite{Bou85} would result in a good approximation algorithm for the topological sparsest cut problem~\cite{ParzanchevskiEtal2016IsoperSimpComp, SteenBergen2014CheegerType}.%
\footnote{There is also a more combinatorial generalization of the sparsest cut problem, for which we refer the reader to~\cite{GundertThesis2013,Gunder2015HighDimCheeger,Gundert2016Eigenvalues}.} Moreover, a variant of the Bartal tree theorem~\cite{Bartal96IntoProbTrees} could be used to solve problems about chains, such as the minimum bounding chain problem~\cite{BMN20MinHom}, via embedding into topological hypertrees.

\begin{table}[t]
\centering
\begin{tabular}{|c|l|l|}
\hline
\textbf{Symbol} & \textbf{$k$-metric space} & \textbf{Corresponding ($2$-)metric space} \\ \hline
$\W_k$ & Weak $k$-metrics & Finite metric spaces \\
$\S_k$ & Strong $k$-metrics & Finite metric spaces \\
       & Minimum bounding chain $k$-metrics & Shortest path metrics on graphs \\
$\T_k$ & Hypertree $k$-metrics & Shortest path metrics on trees \\
$\C_{k, \norm{\cdot}}$ & Coboundary $k$-metrics & Norm metric spaces (with $d(\vec{x}, \vec{y}) = \norm{\vec{x} - \vec{y}}$) \\
$\V_k$ & $(k-1)$-dimensional volume & Euclidean distance \\
\hline
\end{tabular}

\caption{Five families of (pseudo) $k$-metric spaces together with the corresponding family of metric space (i.e., $2$-metric space) that they generalize. The family of minimum bounding chain $k$-metrics is equivalent to the family of strong $k$-metrics $\S_k$ (analogously to how the family of shortest path metric spaces is equivalent to the family of finite metric spaces).} 
\label{tbl:metric-class-summary}
\end{table}

\begin{table}[t]
\centering
\begin{tabular}{|l|l|}
\hline
\textbf{The $k$-metric generalization} & \textbf{The ($2$-)metric embedding result} \\ \hline
$\C_{k, \infty} = \S_k$ (\cref{thm:frechet_embedding}) & $\ell_\infty$ metrics contain all finite metrics \\
$\C_{k, p\neq\infty}\subsetneq\S_k$ (\cref{cor:non_coboundary_metrics_exist}) & $\ell_{p\neq\infty}$ metrics do not contain all finite metrics \\
$\T_k\subseteq \C_{k, 1}$ (\cref{thm:tree_metrics_are_l1}) & Any tree metric is an $\ell_1$ metric \\  
Dimension reduction in $\C_{k, 2}$ (\cref{cor:jl-coboundary-metrics})
& Johnson-Lindenstrauss lemma \\
$\C_{k, 2}$ $(1+\eps)$-embeds into $\C_{k, p}$ (\cref{cor:Ck2-Ckp}) & $\ell_2$ $(1+\eps)$-embeds into $\ell_{p}$ \\
\hline
\end{tabular}
\caption{A list of our strong $k$-metric embedding results and the corresponding results for (standard) metrics that they generalize. The last result concerns embedding $\ell_2$ (respectively, $\C_{k, 2}$) into $\ell_p$ (respectively, $\C_{k, p}$) for $p \in [1, \infty)$ with $(1 + \eps)$ distortion.}
\label{tbl:metric-class-results-summary}
\end{table}

\subsection{Summary of Results and Techniques}
\label{sec:intro-results-techniques}
In this section, we give a summary of our results and the techniques we use to show them.  We include pointers to the main body of the paper whenever appropriate to make it easier for the reader to look into the details.
\cref{tbl:metric-class-summary} and \cref{tbl:metric-class-results-summary} give summaries of the notation that we use as well as some of the results in the paper. In this summary, we use standard terminology from algebraic topology and metric geometry. We quickly review some of this terminology, but refer the reader to \cref{sec:prelims} for detailed definitions.

\subsubsection{Strong \texorpdfstring{$k$}{k}-metrics}
\cref{item:weak-simplex_intro} in the definition of $k$-metrics is called the \EMPH{(weak) simplex inequality}.
As motivation for the definition of strong $k$-metrics, we start by noting that the triangle inequality in ``standard'' metric spaces actually enforces a stronger structural property than the weak simplex inequality does when $k > 2$ in a precise sense.
Let $(X, d)$ be a finite ($2$-)metric space, and let $G = (X, E, d)$ be the complete graph whose edges are weighted according to $d$ (i.e., with edge weights $d(u, v)$ for $(u, v) \in E$). Specifically, we note that for any $s, t \in X$, $d(s, t)$ is at most the cost of any unit $(s,t)$-flow (we prove this in \cref{lem:weak_strong_2metric_equivalency}).

A \EMPH{unit $(s, t)$-flow} $f: E \to \R$ is a function on (directed) edges of $G$ where (1) the total flow out of the source $s$ is $1$, (2) the total flow into the sink $t$ is $1$, and (3) flow is conserved at all vertices $v \in V \setminus \set{s,t}$.
The \EMPH{cost} of a (unit) $(s, t)$-flow $f$ is $\sum_{(u, v) \in E} \abs{f(u, v)} \cdot d(u, v)$.
An $(s,t)$-path is captured by the special case of a unit flow where the flow values are binary, i.e., where $f(u, v) \in \bit$ for all $u, v \in V$.
Thus, $d(s, t)$ being at most the cost of any unit $(s, t)$-\emph{flow} is a (not necessarily strictly) stronger condition than 
$d(s,t)$ being at most the length of any $(s,t)$-\emph{path} in $G$, which in turn is a (not necessarily strictly) stronger condition than the triangle inequality (which considers paths of length $2$).
We call the first of these conditions---that $d(s, t)$ is at most the cost of any unit $(s, t)$-flow---the \emph{strong} triangle inequality 
Yet, for (standard) metrics one can show that these three conditions---the strong triangle inequality, that $d(s, t)$ is at most the length of any $(s, t)$-path, and the (standard) triangle inequality---are all equivalent. 
Moreover, many metric embedding results crucially rely on this equivalence. However, unfortunately, the analogous equivalence does not hold for $k$-metrics with $k > 2$.

\begin{figure}[t]
    \centering
    \includegraphics[height=1.1in]{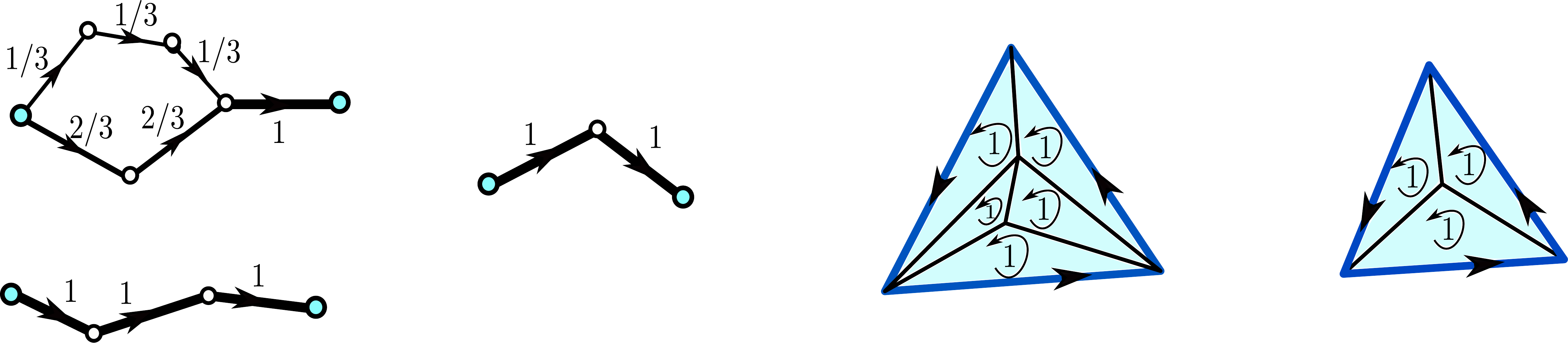}
    \caption{Left: two $s,t$-flows, the top one splits the unit flow between two paths, the bottom one is just a path. Both are checked by the strong triangle inequality but not the weak one.
    Middle-left: a $s,t$-flow that happens to be a path of length two, checked by the strong and weak triangle inequality.
    Middle-right: a $2$-chain whose boundary is the blue triangle, checked by the strong simplex inequality but not the weak one.
    Right: a simpler $2$-chain whose boundary is a the blue triangle, checked by the strong and weak simplex inequality.}
    \label{fig:chains}
\end{figure}

\paragraph{The strong simplex inequality.}
We next define a notion of $k$-metrics that \emph{does} enforce the higher-dimensional analog of the strong triangle inequality, which we call the \emph{strong simplex inequality}. To give this definition, we first need to define the higher-dimensional analog of flows. %
To do this, we use the language of algebraic topology.
For formal definitions and a more detailed summary, see \cref{sec:prelims-algebraic-topology}.

With each $k$-metric $(X, d)$, we associate a complete \EMPH{$(k-1)$-simplicial complex} $K$, i.e., the set of all subsets of $X$ with cardinality at most $k$.
Subsets of $X$ of cardinality $k$ are called the \EMPH{$(k-1)$-simplices} of $K$, e.g.~vertices are $0$-simplices and edges are $1$-simplices. We denote the set of all $(k-1)$-simplices of $K$ by $K_{k-1}$. A \EMPH{$(k-1)$-chain} $f$ is a real-valued function on the (oriented) $(k-1)$-simplices of $K$
(e.g.,~flows are $1$-chains). 
We denote the space of all $(k-1)$-chains in $K$ by $C_{k-1}[K]$.
The \EMPH{boundary} of
a $(k-1)$-simplex $t$ is a $(k-2)$-chain that 
is $\pm 1$ 
(according to the orientation of $t)$ on all $(k-2)$-faces of $t$ (all subsets of $t$ with cardinality $k-1$) and $0$ everywhere else.
We also extend the definition of boundary from individual simplices to chains by defining the boundary of a $(k-1)$-chain $\alpha$ to be the weighted sum of the boundaries of its simplices. We denote the boundary of $\alpha$ by $\partial_{k-1}\cdot\vec{\alpha}$, and remark that $\partial_{k-1}$ is a linear operator.
Finally, we define the \emph{cost} of a $(k-1)$-chain $\alpha$ to be the dot product $|\vec{\alpha}|\cdot \vec{d} = \sum_{\tau\in K_{k-1}} |\alpha[\tau]|\cdot d(\tau)$,
where $|\vec{\alpha}| := (|\alpha[\tau]|)_{\tau \in K_{k-1}}$ and $\vec{d} := (d(\tau))_{\tau \in K_{k-1}}$.
See \cref{sec:prelims} for more formal definitions.

See also \cref{fig:chains} for examples of unit $(s, t)$-flows (which are $1$-chains) and of $2$-chains whose boundaries are three-edge-cycles (i.e.,~cycles of length three). These help illustrate (chains quantified by) the strong simplex inequality when $k$ is $2$ and $3$.
We now formally present the definition of the \emph{strong simplex inequality} and \emph{strong $k$-metric spaces}.

Let $X$ be a finite set, and let $d: X^k \rightarrow \R$.
We call $(X, d)$ a \EMPH{strong $k$-metric space}, and $d$ a \EMPH{strong $k$-metric function} if any elements $x_1, \ldots, x_k \in X$ satisfy~\cref{item:weak-non-negative_intro,item:weak-repeats-zero_intro,item:weak-permutation-invariance_intro} in the definition of a $k$-metric space, and the following \EMPH{strong simplex inequality}.
\begin{enumerate}
    \item [$4'$.] \label{item:strong-simplex-intro} Let $K$ be the complete 
    $(k-1)$-dimensional simplicial complex on the vertex set $X$. Let $t$ be the $(k-1)$-simplex in $K$ with vertices $x_1, \ldots, x_k$, and let $\vec{\alpha} \in C_{k-1}(K)$ be such that $\partial_{k-1}\cdot \vec{\alpha} = \partial_{k-1}\cdot \1_t$ (i.e., the boundary of the $(k-1)$-chain $\vec{\alpha}$ is the same as the boundary of $\vec{e}_t$).
    Then
    \[
    d(x_1, \ldots, x_k) = d(t) \leq |\vec{\alpha}|\cdot \vec{d} = \sum_{\tau\in K_{k-1}} |\alpha[\tau]|\cdot d(\tau) \ \text{,}
    \]
    where 
    $|\vec{\alpha}| := (|\alpha[\tau]|)_{\tau \in K_{k-1}}$ and
    $\vec{d} := (d(\tau))_{\tau \in K_{k-1}}$.
\end{enumerate}
Following standard terminology for metrics, we say that $(X, d)$ is a (strong) \EMPH{pseudo} $k$-metric if \cref{item:weak-repeats-zero_intro} is replaced with the weaker property $d(x_1,\dots,x_k) = 0$ \emph{if} $x_i = x_j$ for any $i \neq j$, and that it is a \EMPH{meta} (strong) $k$-metric if \cref{item:weak-repeats-zero_intro} is replaced with the weaker property $d(x_1,\dots,x_k) = 0$ \emph{only if} $x_i = x_j$ for any $i \neq j$. Following standard practice, we sometimes drop the word ``pseudo'' and do not differentiate between pseudo and non-pseudo $k$-metric spaces in what follows.

We use ${\cal S}_k$ to denote the family of all strong pseudo $k$-metric spaces.
We use $\W_k$ to denote the family of all pseudo $k$-metric spaces, which we sometimes call \EMPH{weak pseudo $k$-metric spaces} for contrast with strong pseudo $k$-metric spaces.
We show that, as the names suggest, the strong simplex inequality is in fact stronger than (at least as tight as) the (weak) simplex inequality and therefore that ${\cal S}_k \subseteq {\cal W}_k$ (\cref{lem:simplex_boundary,cor:strong-k-metric-is-k-metric}). We additionally show that ${\cal S}_2 = {\cal W}_2$ coincides with the family of all finite pseudo metric spaces and therefore strong $k$-metric spaces generalize (standard) metrics spaces (\cref{cor:S2_equals_W2}). On the other hand, we show that ${\cal S}_k \subsetneq {\cal W}_k$ for $k \geq 3$ (\cref{cor:weak-k-not-strong-k}).
We also show that strong $k$-metric spaces can be verified in polynomial time, via solving multiple linear programs (\cref{lem:verify_strong_metrics}).

Similar to graph shortest paths metrics, we can define \EMPH{minimum bounding chain $k$-metrics} for a given $(k-1)$-simplicial complex with positive weights on its $(k-1)$-simplices, with complete $(k-2)$-skeleton, in which every $(k-2)$-cycle is a boundary cycle.
Let $K$ be such a complex with vertex set $X$.  
The minimum bounding chain $k$-metric $d$, assigns to a simplex $t = (x_1, \ldots, x_k)$ the minimum cost of any $(k-1)$-chain with boundary $\partial t$.  A minimum bounding chain satisfies the strong simplex inequality by its definition, as well as other properties of a strong $k$-metric. If we allow (non-negatively) weighted $(k-1)$-simplices, then the minimum bounding chain $k$-metrics can express all (finite) strong $k$-metrics.

\paragraph{Examples of (strong) $3$-metrics.}
We next give several simple examples of strong (potentially pseudo or meta) $k$-metrics. 
An important and intuitive example of a pseudo $3$-metric space $(x, d)$ is what we call an \emph{area metric}. For an area metric, the set $X$ is a finite subset of $m$-dimensional Euclidean space ($X \subset \R^m$) for some $m$, and $d(\vec{x}, \vec{y}, \vec{z})$ is defined to be the area of the triangle spanned by points $\vec{x}, \vec{y}, \vec{z} \in X$.
We also generalize this definition and define \emph{volume $k$-metrics} by defining $d(\vec{x}_1, \ldots, \vec{x}_k)$ to be the ($(k-1)$-dimensional) volume of the convex hull of $\vec{x}_1, \ldots, \vec{x}_k$, which is a $k$-simplex.
We focus on studying such $k$-metrics (i.e., $\V_k$ metrics) in~\cref{sec:volume_metric}, where in particular we show that volume $k$-metrics are strong (pseudo) $k$-metrics.

One can also check that the following other natural examples are strong \emph{meta} $3$-metric spaces:
\begin{enumerate}[(1)]
    \item \label{item:vr_perim} $X\subset \R^m$, and $d(\vec{x}, \vec{y}, \vec{z})$ is the perimeter of the triangle spanned by $\vec{x}, \vec{y}, \vec{z} \in X$.
    \item \label{item:vr_diam} $X\subset \R^m$, and $d(\vec{x}, \vec{y}, \vec{z})$ is the maximum side length of the triangle spanned by $\vec{x}, \vec{y}, \vec{z} \in X$.
    \item \label{item:cech_diam} $X\subset \R^m$, and $d(\vec{x}, \vec{y}, \vec{z})$ is the minimum diameter of a Euclidean ball containing $\vec{x}, \vec{y}, \vec{z} \in X$.
    \item \label{item:steiner} $G = (X, E)$ is a graph, and $d(x, y, z)$ is the
    weight of the minimum Steiner tree of $x, y, z \in X$.
\end{enumerate}
We note that the $2$-simplices in the Vietoris–Rips and \v{C}ech complexes of $X$ are exactly the triples of distinct points $\vec{x}, \vec{y}, \vec{z} \in X$ such that $d(\vec{x}, \vec{y}, \vec{z}) \leq c$ for some constant $c > 0$ with respect to the $3$-metric functions $d$ defined in \cref{item:vr_diam} and \cref{item:cech_diam}, respectively. These two simplicial complexes are extensively studied in topological data analysis.
We leave it as an open question whether the $k$-ary generalizations (for $k>3$) of the above examples are strong (meta) $k$-metrics, and conjecture that they are.

\subsubsection{Norms and Coboundary \texorpdfstring{$k$}{k}-metrics.}
We next introduce \emph{coboundary $k$-metric spaces}, a family of strong $k$-metric spaces that generalize (finite) metric spaces induced by norms.
(We show that coboundary $k$-metrics as defined are strong (pseudo) $k$-metrics in \cref{lem:k_coboundary_metrics}.)
That is, coboundary $k$-metrics generalize metric spaces $(X, d)$ satisfying $X \subseteq \R^m$ and $d(\vec{x}, \vec{y}) = \norm{\vec{x} - \vec{y}}$ for all $\vec{x}, \vec{y} \in X$ and some norm $\norm{\cdot}$. In particular, coboundary $k$-metric spaces generalize (finite) $\ell_p$ spaces.
We then discuss how coboundary $k$-metrics relate to other $k$-metrics (including other coboundary $k$-metrics). In particular, we show generalizations of some key embedding results for $\ell_p$ spaces to coboundary $k$-metrics.

A \EMPH{coboundary $k$-metric} $(X, d)$ of dimension $m$ with respect to a given vector norm $\norm{\cdot}$ is defined roughly as follows. (See~\cref{fig:cbd_metrics} for examples of $2$- and $3$-coboundary metrics of dimension $m = 2$ with respect to the $\ell_2$ norm, and see \cref{def:coboundary_metrics} for a formal definition.)
Let $K$ be the complete $(k-1)$-simplicial complex on vertex set $X$ (i.e., $K$ contains all simplices corresponding to sets of $k$ or fewer points in $X$), and assign fixed orientations to the $(k-2)$-simplicies in $K$.
Additionally, assign vectors in $\R^m$ to the $(k-2)$-simplices of $K$.
These vectors can be arranged as the rows of a matrix $F$; in particular, the rows of $F$ are indexed by the $(k-2)$-simplices of $K$.
We also note that the columns of $F$ are $(k-2)$-chains of $K$.
Then, apply the coboundary operator $\delta_{k-2}$ (column-wise) to $F$ to obtain $F' := \delta_{k-2} F$, where the rows of $F'$ are indexed by the $(k-1)$-simplices of $K$. 
The row of $F'$ indexed by a $(k-1)$-simplex s is a $(\pm 1)$-linear combinations of the vectors assigned to the $(k-2)$-faces of $s$.
(The columns of $F'$ are $(k-1)$-chains of $K$.)
Finally, to obtain our metric value on a $(k-1)$-simplex $t$, we compute the norm of the row of $F'$ that corresponds to $t$, i.e., set $d(t) = \norm{\vec{e}_t^T \cdot F'} =\norm{\vec{e}_t^T \cdot \delta_{k-2} \cdot F}$ (multiplying by $\1^T_t$ is to select the row that corresponds to $t$).

We remark that coboundary $2$-metrics with respect to $\norm{\cdot}$ are equivalent to (standard) metrics induced by $\norm{\cdot}$. Indeed, as shown in \cref{fig:cbd_metrics}, if one labels the vertices $\vec{v}$ of a complete graph with vectors $\vec{x}_v \in \R^m$, then the induced coboundary metric $d$ satisfies $d(u, v) = \norm{\vec{x}_u - \vec{x}_v}$.

\begin{figure}[t] \label{fig:coboundary}
    \centering
    \includegraphics[height=0.9in]{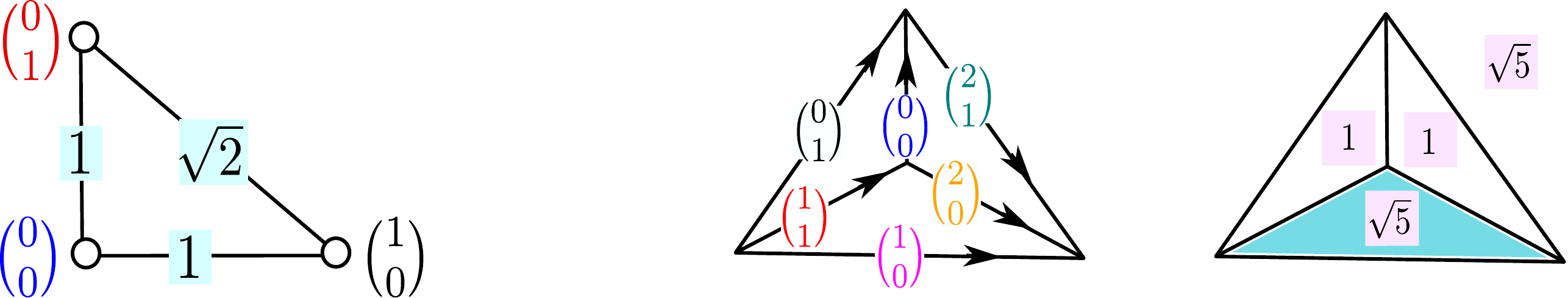}
    \caption{Left: The labeling corresponds to a pair of $0$-chains and its implied $2$-coboundary metric with the $\ell_2$ norm;
    $
    F = \begin{pmatrix} 
        \color{red}0 & \color{blue}0 & \color{black}1 \\
        \color{red}1 & \color{blue}0 & \color{black}0
    \end{pmatrix}^T
    $, 
    Right: The labeling corresponds to a pair of $1$-chains and its implied $3$-metric with $\ell_2$ norm, for example, the norm of the shaded triangle is $\Big\|\color{magenta}\begin{pmatrix}1 \\ 0\end{pmatrix}\color{black} 
    - \color{orange}\begin{pmatrix}2 \\ 0\end{pmatrix}
    - \color{red}\begin{pmatrix}1 \\ 1\end{pmatrix}\color{black}\Big\|_2 = \sqrt{5}$; 
    $
    F = \begin{pmatrix}
        0 & \color{red}1 & \color{blue}0 & \color{magenta}1 & \color{teal}2 & \color{orange}2 \\
        1 & \color{red}1 & \color{blue}0 & \color{magenta}0 & \color{teal}1 & \color{orange}0\color{black}
    \end{pmatrix}^T
    $.}
    \label{fig:cbd_metrics}
\end{figure}

We denote the space of all coboundary $k$-metrics in $m$ dimensions with respect to norm $\norm{\cdot}$ by $\C_{k, \norm{\cdot}}^m$. We slightly simplify notation for $\ell_p$ norms and write ${\cal C}_{k, p}^m$ for ${\cal C}_{k, \norm{\cdot}_p}^m$. Furthermore, we define ${\cal C}_{k, \norm{\cdot}} = \bigcup_{m\in\Z^+}{\cal C}_{k, \norm{\cdot}}^m$ and ${\cal C}_{k, p} = \bigcup_{m\in\Z^+}{\cal C}_{k, p}^m$.
As noted above, it is straightforward to show that ${\cal C}_{2, p}^m$ is the family of (finite) $\ell_p^m$ metrics in the usual sense, and that, analogously, ${\cal C}_{2, p}$ is the family of $\ell_p$ metrics.
So, coboundary $k$-metrics generalize (standard) metrics induced by norms.
There is a wealth of results regarding embeddings from and to different $\ell_p$ metrics~\cite{matousek13}. We attempt to generalize some of these results to our coboundary metrics.

\subsubsection{The Power of \texorpdfstring{$\ell_{\infty}$}{l\_infinity}}
It is well-known that any (finite) metric space is an is isometrically embedabble into $\ell_{\infty}$, via a map that is usually known as the Fr\'{e}chet embedding.
We next describe a generalization of this result that we show for coboundary metrics. Specifically, we show that any strong pseudo $k$-metric belongs to ${\cal C}_{k, \infty}$, and therefore ${\cal C}_{k, \infty} = {\cal S}_k$ (\cref{thm:frechet_embedding}).

Fr\'{e}chet's isometric embedding of finite metric spaces into $\ell_\infty$ is simple and elegant. We sketch the idea here. Let $(X, d)$ be a finite metric space, and fix $x, x'\in X$. The main idea is that there is an embedding of $X$ into the line ($\R^1$) that (1) does not expand any distance and (2) exactly preserves the distance from $x$ to $x'$. For example, the embedding that maps each $y\in X$ to $d(x,y)$ has this property. To obtain an isometric embedding of $(X, d)$ into $\ell_\infty$, one can concatenate the $\binom{n}{2}$ line embeddings corresponding to distinct pairs of elements $x, x'\in X$.%
\footnote{In fact, it suffices to use $n$ line embeddings in the Fr\'chet embedding as each line embedding preserve all distances to a single point.
But, we find this embedding (which uses $\binom{n}{2}$ line embeddings) more conducive to generalization to $k$-metrics.}
Note that each distance is never expanded and is preserved at least once, and so this does in fact give an isometric embedding into $\ell_{\infty}$.

We follow the outline of the proof above to show that any strong $k$-metric $(X, d)$ is in ${\cal C}_{k, \infty}^m$, where $m = \binom{n}{k}$ (\cref{thm:frechet_embedding}).
First, we construct analogs of the line embeddings used above. Namely, for each $k$-tuple $t = (x_1, \ldots, x_k) \in X^k$ we construct a $k$-metric $\phi_t$ in ${\cal C}_{k, \infty}^1$ that preserves the value of $d(t)$ and does not increase the value of $d(t')$ for any other such $k$-tuple $t'$ (\cref{lem:contracting_3_meteric_embedding}). 
Then, we concatenate the $m = \binom{n}{k}$ functions $\phi_t$ to obtain a $k$-metric $\phi$ in ${\cal C}_{k, \infty}^m$ that preserves all $d$ values. 

While the latter step is straightforward and similar to the case of standard metrics, the former presents a greater challenge as explicit constructions, like the mapping $y \mapsto d(x, y)$ used for standard metrics, are not readily available.

Instead, we show how to obtain such a coboundary $k$-metric as the solution of a certain (feasible) linear program. To that end, let $t = (x_1, \ldots, x_k)$. We will attempt to find a $k$-metric in ${\cal C}_{k, \infty}^1 = {\cal C}_{k, \abs{\cdot}}^1$ that preserves the value of $d(t)$ and does not expand the $d$ value on other simplices.%
\footnote{We note that all $\ell_p$ norms are equivalent in one dimension. I.e., for scalars $x \in \R$ and $p, q \in [1, \infty]$ we have that $\norm{x}_p = \norm{x}_q = \abs{x}$. Because of this, ${\cal C}_{k, \infty}^1 = {\cal C}_{k, \abs{\cdot}}^1$.}
Equivalently, we look for a $(k-2)$-chain $\vec{f}$ such that $|\delta_{k-2}\cdot\vec{f}| \leq \vec{d}$, i.e., such that $|\delta_{k-2}\cdot\vec{f}|$ assigns one non-negative value to each $(k-1)$-simplex $s$ that is not larger than $d(s)$, hence not expanding. 
(Here $|\cdot|$ and $\leq$ are treated element-wise.)
On the other hand, we want $\delta_{k-2} \cdot \vec{f}$ at $t$ to be as large as possible and ideally equal to $d(t)$.
So, we try to maximize $\1_t^T\cdot\delta_{k-2}\cdot\vec{f}$ by solving the following linear program with variables $\vec{f}$ corresponding to the $\binom{n}{k-1}$ many $(k-2)$-simplices.
\begin{align*}
\max& \quad \1_{t}^T\cdot\delta_{k-2}\cdot \vec{f} \\
\text{s.t.}&~~-\vec{d} \leq \delta_{k-2}\cdot \vec{f} \leq \vec{d} \\
&~\quad \partial_{k-2}\cdot \vec{f} = 0 \ \text{.}
\end{align*}
The equality is added for technical reasons. Namely, it ensures that the feasible region of the linear program is bounded while maintaining the same optimal value. We refer the reader to the proof of \cref{thm:frechet_embedding} for details.
Because its feasible region is a (bounded) polytope, this linear program has an optimal solution that is a vertex of this polytope.
We show that this optimal solution corresponds to a ``non-expanding'' $d$, in which $d(t)$ is preserved.

Using the fact that $\delta_{k-2} = \partial_{k-1}^T$, we rewrite the objective function of the linear program as $(\partial_{k-1}\cdot \1_t)^T\cdot \vec{f}$, and the first set of constraints as $-d(\tau)\leq(\partial_{\tau}\cdot\1_{\tau})^T\cdot\vec{f}\leq d(\tau)$ for all $(k-1)$-simplices $\tau$. 
We let $\{t_1, \ldots, t_r\}$ be the simplices whose inequality constraints are tight at the optimal solution $\vec{f}^*$, and we assume (without loss of generality) that $(\partial_{k-1}\cdot\1_{t_i})^T\cdot\vec{f} = d(t_i)$ for $i\in[r]$. Since $\vec{f}^*$ is an (optimal) solution to the linear program, the coefficient vector $\partial_{k-1} \cdot \vec{e}_t$ in the linear program must be in the cone of vectors of the tight constraint. That is, there are non-negative $\beta_1, \ldots, \beta_r$ such that,
\begin{equation} \label{eq:intro-frechet-boundary}
\partial_{k-1}\cdot \1_t
= \sum_{i = 1}^{r}{\beta_i \cdot (\partial_{k-1}\cdot \1_{t_i})}
= \partial_{k-1}\cdot\left(\sum_{i = 1}^{r}{\beta_i \1_{t_i}}\right) \ \text{.}
\end{equation}
From that, we show 
\begin{align*}
\sum_{i = 1}^{r}{\beta_i \cdot d(t_i)} &\geq d(t)
\geq (\partial_{k-1}\cdot \1_t)^T\vec{f}^*
= \sum_{i = 1}^{r}{\beta_i \cdot d(t_i)}.
\end{align*}
The first inequality uses the strong simplex inequality (the condition holds by \cref{eq:intro-frechet-boundary}), and the second holds since the coboundary metric induced by $\vec{f}^*$ is non-expanding. Therefore, we have that
$d(t)
= (\partial_{k-1}\cdot \1_t)^T\vec{f}^*$, as desired.
Again, we refer the reader to \cref{lem:contracting_3_meteric_embedding} for details.
We also remark that we crucially used the fact that $(X, d)$ was a \emph{strong} (pseudo) $k$-metric space, which we believe is good motivation for our definition of strong $k$-metrics.

\subsubsection{Other \texorpdfstring{$\ell_p$}{ell\_p} Metrics Are Not as Powerful}
Other $\ell_p$ metrics are not as expressive as $\ell_\infty$. Indeed, for any $p \neq \infty$, there is a finite metric space $(X, d)$ that is not an $\ell_p$ metric.
In this work, we show the analogous result for coboundary metrics. Specifically, we show that for any $p\neq \infty$ there is a strong pseudo $k$-metric that is not in ${\cal C}_{k, p}$ (\cref{cor:non_coboundary_metrics_exist}).

\paragraph{Apex extension.}
In order to prove this, we use a simple but powerful technique called \emph{apex extension} for constructing a $(k+1)$-metric space from a $k$-metric space in such a way that the new space shares certain properties of the original space.
Somewhat more specifically, given a $k$-metric $(X, d)$, apex extension builds a point set $X'$ and a function $d':(X')^{k+1}\rightarrow \R$ such that $(X,d)$ is in ${\cal C}_{k, p}$ if and only if $(X', d')$ is in ${\cal C}_{k+1, p}$. We can then combine this technique with examples of non-$\ell_p$-metric spaces (i.e., examples of finite metric spaces not isometrically embeddable into $\ell_p$) to construct strong $k$-metric spaces that do not belong to ${\cal C}_{k, p}$ for all $k \geq 3$.
See \cref{fig:apex_ext_1} for an illustration of apex extension.

Let $(X, d)$ be a $k$-metric, and let $a$ be an element not in $X$ that we call \EMPH{apex}.
Now, let $X' = X\cup\{a\}$, and let $d'$ be a function on $(k+1)$-tuples of $X'$ defined as follows. 
\begin{enumerate}[(i)]
    \item For any $x_1,\dots,x_{k+1} \in X'$, if $x_1, \ldots, x_{k+1}$ are not distinct or do not include $a$, $d'(x_1,\dots,x_{k+1}) = 0$.
    \item Otherwise, if $x_i = a$ for some $i$, $d'(x_1,\dots,x_{k+1}) = d(x_1, \ldots, x_{i-1}, x_{i+1}, \ldots, x_{k+1})$.
\end{enumerate}
We call $(X', d')$ the \EMPH{apex extension} of $(X, d)$.  In addition to the fact that $(X', d')$ is a weak $(k+1)$-metric (\cref{lem:apex_extension_weak_to_weak}), we show that
    $(X', d')$ is in ${\cal C}_{k+1, \norm{\cdot}}$ for any $k\geq 2$ and any norm if any only if $(X, d)$ is in ${\cal C}_{k, \norm{\cdot}}$ (\cref{lem:apex_extension_coboundary_to_coboundary}).
Since ${\cal S}_k = {\cal C}_{k, \infty}$, the statement above in particular implies that $(X', d')$ is a strong pseudo $(k+1)$-metric if and only if $(X, d)$ is a strong pseudo $k$-metric (\cref{cor:apex_extension_strong_iff_strong}). 
We use this fact to build pseudo $k$-metrics that are not strong pseudo $k$-metrics for $k>3$ from an explicit $3$-metric that is not a strong $3$-metric (\cref{lem:weak-strong-3-metric,cor:weak-k-not-strong-k}).

\begin{figure}[t]
    \centering
    \includegraphics[height=1.2in]{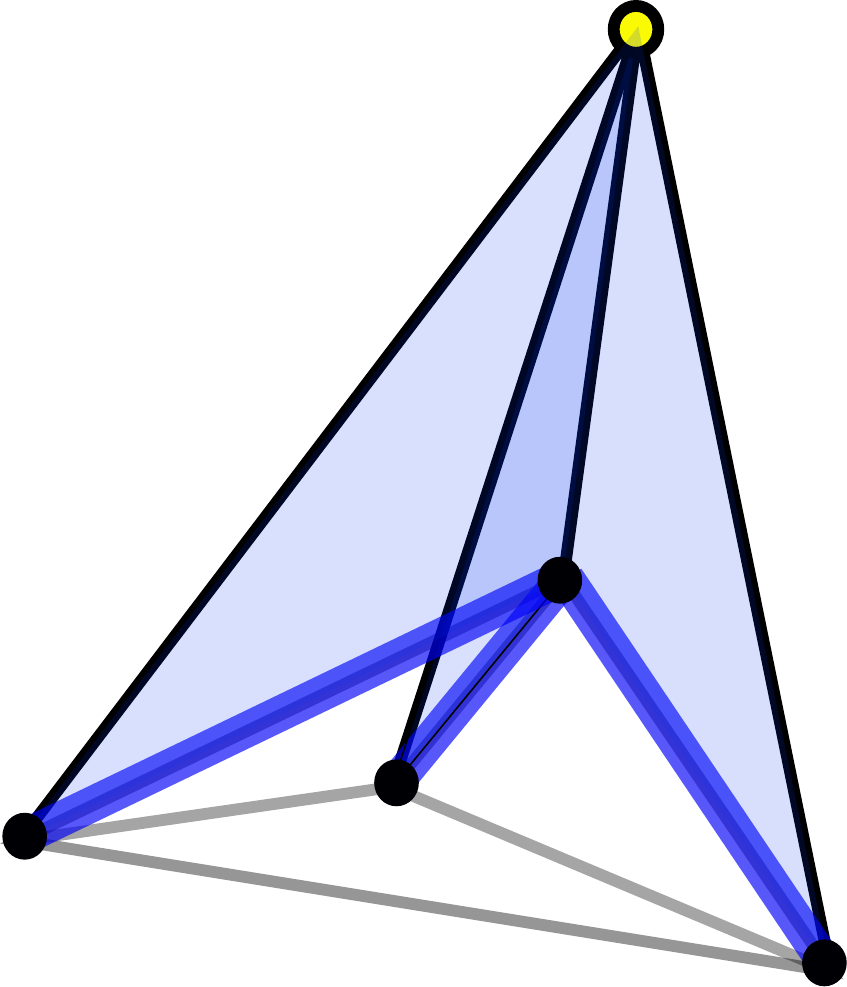}
    \caption{
        A visualization of apex extension.
        The initial metric is the coboundary (pseudo) $2$-metric induced by the black vertices, which are $0$-simplices. It has value one on the bold blue edges and zero on other edges.
        By including the apex vertex (shown at the top in yellow), this is transformed into a coboundary $3$-metric that has value one on the shaded blue triangles and zero on all other triangles. 
    }
    \label{fig:apex_ext_1}
\end{figure}

\subsubsection{Other Generalizations of Embedding Results} 
We also generalize several other well-known results from the study of metric embeddings.
First, we give a ``meta'' result for converting linear norm embedding results to results for coboundary metrics (\cref{prop:norm-embeddings-coboundary}). From this, we get two corollaries: (1) a generalization of the Johnson-Lindenstrauss lemma~\cite{JL84} for dimension reduction in $\ell_2$ and (2) a generalization of embedding $\ell_2$ into $\ell_p$ nearly isometrically, which is (a strengthening of a special case of) Dvoretzky's theorem~\cite{Dvoretzky60,FigielLM77}.
Specifically, we show any $n$-point $k$-metric in ${\cal C}_{k, 2}^m$ is $\varepsilon$-close to (1) a $k$-metric ${\cal C}_{k, 2}^{O(k\log n/\varepsilon^2)}$ (\cref{cor:jl-coboundary-metrics}), and (2) a $k$-metric in ${\cal C}_{k, p}^{m'}$ for $p \in [1, \infty)$ and $m' = \poly(m, n)$ (\cref{cor:Ck2-Ckp}).

Second, we show the generalization of the fact that all tree metrics are $\ell_1$-metrics (see, e.g.,~\cite[Chapter 1, Exercise 4]{matousek13}). 
We start by defining a higher-order analog of tree metrics called \emph{hypertree $k$-metrics}, the family of which we denote by ${\cal T}_k$. We then show that
${\cal T}_k\subseteq {\cal C}_{k, 1}$ (\cref{thm:tree_metrics_are_l1}).
Hypertree $k$-metrics are a special case of minimum bounding chain $k$-metrics where the underlying complex $K$ is a $(k-1)$-hypertree, that is, $K$ does not have any $(k-1)$-cycles and all of its $(k-2)$-cycles are boundaries of $(k-1)$-chains.%
\footnote{A $(k-1)$-cycle is a non-zero $(k-1)$-chain with no boundary. In particular, $1$-cycles are circulations in a graph, i.e., flows that have net value zero on every vertex.}
In particular, $1$-trees are ``standard'' trees from graph theory, i.e., they are acyclic and connected graphs.  

\subsubsection{Volume \texorpdfstring{$k$}{k}-metrics}
\label{sec:summary-volume}

We conclude by studying what is likely the most natural generalization of Euclidean distance to $k$ points instead of $2$: the $(k-1)$-dimensional volume of the simplex spanned by the $k$ points.
More formally, we study the spaces $(X, d)$ where $X \subsetneq \R^m$ is a finite set and $d:X^k\rightarrow\R$ is the function that assigns to each $k$-tuple $(x_1, \ldots, x_k)$ the $(k-1)$-volume of the $(k-1)$-simplex with vertices $x_1, \ldots, x_k$.
We denote the family of all such metrics by $\V_k^m$, and the family of all volume $k$-metrics (in any dimension) by $\V_k = \bigcup_{m\in \Z^+}{\V_k^m}$.  

As volume $k$-metrics, $\V_k$, and coboundary $k$-metrics with respect to the Euclidean norm, $\C_{k, 2}$ spaces, are both generalizations of Euclidean distance, it is natural to compare them with each other.  
To that end, we first show that every $\V_k$ space is representable as a ${\cal C}_{k, 2}$ space (specifically, for any $m \in \Z^+$, $\V_k^m \subseteq {\cal C}_{k, 2}^{m'}$, where $m' = \binom{m}{k-1}$), showing that coboundary $k$-metric spaces are at least as expressive as volume metrics.
On the other hand, we prove several impossibility and lower bound results, such as $\C_{3, 2} \nsubseteq \V_3$ (\cref{thm:4-point-2d-coboundary}) and $\C_{4,2} \not\subseteq \V_4$ (\cref{thm:6-point-1d-coboundary}), which imply that not every $\C_{k, 2}$ space can be represented as a $\V_k$ space.
Put together, these results show that coboundary $k$-metrics are \emph{strictly} more expressive than volume metrics.

\paragraph{Volume $k$-metrics are coboundary $k$-metrics with respect to the $\ell_2$ norm.}
We sketch the proof of ${\cal V}_k^m \subseteq {\cal C}_{k, 2}^{m'}$, $m' = \binom{m}{k-1}$ for the special case of $k = 3$ (see \cref{thm:volume_to_cbd} for the full proof).
The Cauchy-Binet theorem (\cref{thm:cauchy-binet}) implies that the area of any triangle with vertices in $\R^m$ equals the $\ell_2$ norm of the vector of the areas of its $\binom{m}{2}$ projections into axis-aligned planes (\cref{cor:cauchy-binet-volume}); see the left image in \cref{fig:area_cbd} for an illustration of these projections when $m = 3$.
This allows us to reduce showing that $\V_3^m \subseteq \C_{3, 2}^{m'}$ ($m' = \binom{m}{k-1} = \binom{m}{2}$) to showing that $\V_3^2 \subseteq \C_{3, 2}^1 = \C_{3, \abs{\cdot}}^1$.

To this end, we define the $1$-chain $f$ on the edges of the complete $2$-complex $K$ with vertex set $X$ as 
    $f(x,x')$ is the signed area of the triangle $(o, x, x')$,
where $o$ is the origin.
The signed area of a triangle is the area of the triangle times $1$ if the $(o, x, x')$ is counter clockwise and $-1$ otherwise.
See the middle-left image in \cref{fig:area_cbd} for an example.

We then show that the coboundary $3$-metric $|\delta_1\cdot\vec{f}|$ is the vector of areas of the triangles, which is what we need (here $|\cdot|$ denotes element-wise absolute value). Specifically, we show the area of any triangle $t$ equals $|\1_t^T\cdot\delta_1\cdot\vec{f}|$. See the middle, middle-right and right images of \cref{fig:area_cbd} for illustration of different cases.

\begin{figure}[t]
    \centering
    \includegraphics[height=0.9in]{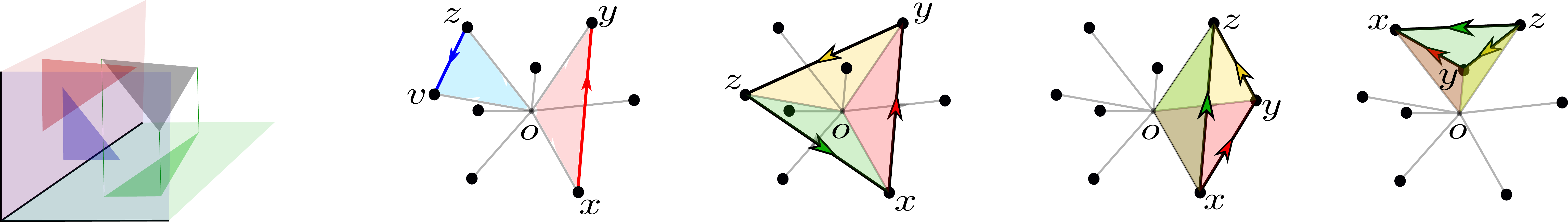}
    \caption{
        Left: a triangle an its projection into $\binom{3}{2}$ axis-aligned planes.
        Middle-left: $f(x, y)$ equals the area of the blue triangle, $f(z, v)$ is the area of the red triangle, $f(y,x) = -f(x,y)$, and $f(v, z) = -f(z, v)$.
        Middle: the coboundary $3$-metric at $(x, y, z)$ is $f(x, y) + f(y, z) + f(z, x)$, which is the area of the triangle $(x, y, z)$.
        Middle-right: the coboundary $3$-metric at $(x, y, z)$ is $f(x, y) + f(y, z) - f(z, x)$, which is the area of the triangle $(x, y, z)$.
        Right: the coboundary $3$-metric at $(x, y, z)$ is $-f(x, y) - f(y, z) + f(z, x)$, which is the area of the triangle $(x, y, z)$.
    }
    \label{fig:area_cbd}
\end{figure}

\paragraph{Relating coboundary and volume $k$-metrics.}
We also prove several formally incomparable results showing that coboundary $k$-metrics are \emph{strictly} richer than volume $k$-metrics.
First, we show that there is a $4$-point $2$-dimensional coboundary $3$-pseudometric space with respect to the Euclidean norm (i.e., $4$-point $\C_{3, 2}^2$ space) that is not a volume $3$-pseudometric space (\cref{thm:4-point-2d-coboundary}).
Second, we show that there is a $6$-point $1$-dimensional coboundary $4$-pseudometric space (i.e., $6$-point $\C_{4, \abs{\cdot}}^1$ space) that is not a volume $4$-pseudometric space (\cref{thm:6-point-1d-coboundary}).
We note that these results are incomparable; the first result holds for smaller arity $k$ and number of points $n$ (i.e., $k = 3$ and $n = 4$ versus $k = 4$ and $n = 6$), whereas the second result holds for smaller dimension $m$ (i.e., $m = 1$ versus $m = 2$).
Third, we show that there is a family of $n$-point $1$-dimensional coboundary $3$-pseudometric spaces (i.e., $\C_{3, \abs{\cdot}}^1$ spaces) that are not only not realizable as volume metric spaces using $m = o(\log n)$ dimensions, but not even embeddable into such volume metric spaces with any constant distortion (\cref{cor:one-cbd-many-vol}).
Although we typically treat $k$-pseudometric spaces the same as $k$-metric spaces, here we emphasize that the first two results hold for \emph{$k$-pseudometric} spaces, whereas the third holds for (non-pseudo) \emph{$k$-metric} spaces.

\paragraph{A $3$-metric in ${\cal C}_{3, \abs{\cdot}}^1$ that is not a low-dimensional volume metric.}
We now summarize the proof of the third result described above.
Let $(X, d)$ be the $n$-point $3$-metric that is one on all triples of distinct points in $X$, i.e., the ``discrete $3$-metric'' on $X$.
As evidence that coboundary metrics are richer than volume metrics, we show that $(X, d) \in {\cal C}_{3, 2}^1$ and yet $(X, d)$ requires $\Omega(\log n)$ dimensions to be realized as a volume $3$-metric. (In fact, we can show the stronger statement that any embedding of $(X,d)$ into the volume metric with $m \in o(\log n)$ dimensions requires super-constant distortion, although we focus on the simpler result.)

To see the former, let $K$ be a complete $2$-simplicial complex and orient edges so that there is no directed cycle in its $1$-skeleton (its $1$-skeleton is a tournament). Now, consider the $1$-chain $f$ that assigns $1$ to all these oriented edges.  Since the tournament has no directed cycle, $\delta_1 f$ is either $1$ or $-1$ on all triangles, and therefore $|\delta_1 \cdot\vec{f}|$ is a $3$-metric that is one on all distinct triples of points in $X$. 

Now, let $g:X\rightarrow \R^m$ be any isometric embedding of $(X, d)$ into a $\V_k^m$ space.
Let $x, x' \in X$ be such that $g(x), g(x')$ is a furthest pair of points in $g(X)$ (i.e., such that $\norm{g(x) - g(x')}_2$ is maximized), and let $\ell=\norm{g(x) - g(x')}_2$.
Now, consider any third point $y\in X \setminus \set{x, x'}$.  Since the area of the triangle $(g(x), g(x'), g(y))$ is one, $g(y)$ is on on the surface of the cylinder with axis $(g(x), g(x'))$ that has radius $2/\ell$.  Thus, all the points are on the surface of this cylinder. 
We apply the linear transformation that shrinks by a factor of $\ell$ in the direction of $g(x) - g(x')$ and scales by a factor of $\ell$ in every orthogonal direction.  
If we assume without loss of generality that $g(x) = \vec{0}$ and $g(x') = \ell \cdot \vec{e}_1$, then this linear transformation is expressed by the diagonal matrix $\diag(1/\ell, \ell, \ell, \cdots, \ell)$.
After the transformation, all the points are on the surface of a cylinder of constant height and constant radius.  So, they are within a ball of constant radius. Further, our transformation has the property that it does not shrink the area of any triangle.

So, we obtain $n$ points in a ball of constant radius such that the area of the triangle of each triple of them is \emph{at least} one.
These two conditions imply a constant lower bound on the distance between every pair of points. Since everything is in a ball of constant radius, a simple packing argument implies that the dimension of the space is $m \geq \Omega(\log n)$.

\subsection{Related Work} \label{sec:related-work}

As we have already discussed, prior work on $k$-metrics is closely related to this work. See, e.g.,~\cite{Gahler-63, Gahler-64, DezaRosenberg2000Semimetrics, DezaDeza2009EncyclopediaofDistances, Mustafa2006ANA}. We also again note that according to~\cite{DezaRosenberg2000Semimetrics}, around 1990 G\"{a}hler collected a bibliography of over a hundred works related to $k$-metrics, and it seems that a substantial amount of additional work has been done since then (e.g.,~\cite{Mustafa2006ANA}). We do not attempt to summarize this extensive body of work, but note that, to the best of our knowledge, none of it has defined or studied our topological (i.e., strong) $k$-metrics.

Past that, the large body of work on (finite) metric spaces and (algorithmic applications of) metric embeddings is related to this paper. We refer the interested reader to the books and surveys~\cite{matousek13, IndykMatousek17Handbook, Linial2003Survey} for summaries of this body of work. Indeed, we have provided higher-order analogs of most of the standard concepts in metric spaces, and it will be interesting to see which results on metric embeddings can be adapted to these higher-order analogs and what algorithmic applications these results have.

Additionally, we note two other notions of ``higher-order metrics'' besides $k$-metrics.
First, we note the work of Feige~\cite{Feige1998ApxBandwidth}, which defines a notion of ``volume'' for finite metric spaces in terms of embeddings into Euclidean space and studies volume-preserving embeddings. 
More specifically, Feige defines the volume of $k$ points in a finite metric space $(X, d)$ to be the maximum volume of the convex hull of their images for any non-expanding embedding $f$ into Euclidean space (i.e., embedding $f$ such that $\norm{f(x) - f(y)} \leq d(x, y)$ for all $x, y \in X$). Then, he uses embeddings that (nearly) preserve this notion of volume to design an approximation algorithm for the minimum bandwidth problem. While our work is not directly related to this type of (nearly) volume-preserving embedding, we pose as an open question whether Feige's notion of volume is a (strong) $k$-metric.

Second, we note the work of Bryant and Tupper~\cite{Bryant-Tupper-12-diversities,BryantTupper2014Diversity} on \emph{diversities}. Diversities are spaces $(X, d)$ where $d$ is a function from (finite) subsets $S \subseteq X$ to the non-negative reals that satisfies the ``triangle inequality'' condition $d(A \cup C) \leq d(A \cup B) + d(B \cup C)$ for all finite subsets $A, B, C \subseteq X$ with $B$ non-empty.
They note that the restriction of $d$ to subsets $S$ of size $2$ induces a ``standard'' metric, and so one can view diversities as a different generalization of metrics from (strong) $k$-metrics.
Indeed, we note that diversities are substantially different from (strong) $k$-metrics. In particular, the $d$ in diversities is defined on sets of elements of different sizes, and the ``triangle inequality'' in diversities upper bounds a given evaluation of $d$ as the sum of two other evaluations of $d$ (as with the ``standard'' triangle inequality) as opposed to $k$ other evaluations of $d$ in the (weak) simplex inequality.
A primary goal of Bryant and Tupper's work on diversities is to extend graph algorithms based on metric embeddings to hypergraph algorithms based on diversity embeddings. 
Their definition and techniques indeed seem well-suited to hypergraph problems, however, our definition is better suited to problems in computational topology (i.e., problems on simplicial complexes) like the topological sparsest cut problem and the minimum bounding chain problem.

\subsection{Acknowledgments}
We would like to thank Mitchell Black for helpful conversations about this work.

\section{Preliminaries}
\label{sec:prelims}
In this paper, we use ideas from algebraic topology and metric geometry that we overview here. We refer the reader to~\cite{Hatcher, StillWell93Book} for a more extensive overview of algebraic topology, and to~\cite{matousek13, DezaDeza2009EncyclopediaofDistances, IndykMatousek17Handbook, Linial2003Survey}
for a more extensive overview of metric geometry.

\subsection{Algebraic Topology}
\label{sec:prelims-algebraic-topology}

A \EMPH{simplicial complex} $K$ is a set of finite subsets of a set $X=\set{x_1,\dots, x_n}$, with the property that if a subset of $X$ is in $K$ then all its subsets are in $K$ as well. I.e., $K$ is a downward-closed set system on $X$. We call an element of $K$ a \EMPH{simplex} and an element of $K$ with cardinality $k + 1$ a \EMPH{$k$-simplex}. We call $0$-, $1$-, and $2$-simplices vertices, edges and triangles, respectively. We use $K_k$ to denote the set of all $k$ simplices of $K$, and $n_k \leq \binom{n}{k+1}$ to denote the cardinality of $K_k$.  We call $K$ a $k$-simplicial complex if it has at least one $k$-simplex and no $k'$-simplex for any $k' > k$.  The \EMPH{$k$-complete} simplicial complex on set $X$ is composed of all subsets of $X$ of cardinality at most $k$. 
The \EMPH{complete} $(k-1)$-simplicial complex on $X$ is the simplicial complex that contains all $(k-1)$-simplices and no $k$-simplex. We use $K_k$ to denote the set of all $k$-simplices in a simplicial complex $K$. Further, we denote $n_k = |K_k|$, in particular, $n_0$ is the number of vertices of a complex.

Fix a (global) ordering of $X = (x_1, x_2, \ldots, x_n)$ of the vertex set $X$ of the complex $K$.
An \EMPH{oriented $(k-1)$-simplex} of the complex $K$ is given by an (ordered) sequence $(x_{i_1}, x_{i_2}, \ldots, x_{i_k})$ of its vertices. 
Two orientations of a simplex are \emph{equivalent} if their sequences have the same permutation parity, i.e., if one can be transformed into the other with an even number of transpositions. It follows that each $(k-1)$-simplex for $k \geq 1$ has exactly two orientations. The \EMPH{standard orientation} of a simplex is the one whose vertices are ordered according to the global ordering of $X$. I.e, the standard orientation of the $k$-simplex $\set{x_{i_1}, \ldots, x_{i_{k+1}}} \subseteq X$ with $i_1 < \cdots < i_{k+1}$ is $(x_{i_1}, \ldots, x_{i_{k+1}})$. 
The sign of any oriented simplex $(x_{i_1}, \ldots, x_{i_{k+1}})$ is the sign of the permutation of its indices $i_1, \ldots, i_{k+1}$. In particular, the sign of any standard oriented simplex ix $1$.

Let $\widehat{K}_k$ be the set of all oriented simplices of $K$.
A \EMPH{$k$-chain} is a function $\alpha: \widehat{K}_k \rightarrow \R$ that assigns real values to oriented $k$-simplices of $K$ such that (i) if $t'$ and $t''$ are equivalent orientations of the same simplex $t$ then $\alpha(t') = \alpha(t'')$, and (ii) if $t'$ and $t''$ are non-equivalent orientations of the same simplex $t$ then $\alpha(t) = -\alpha(t')$.  By (i) and (ii), it suffices to specify values of $\alpha$ on the standard orientation of every simplex, or, equivalently, just non-oriented simplices. %
Hence with the abuse of notation, we write $\alpha:K_k\rightarrow \R$, as a function on the non-oriented $k$-simplices.
We use $C_k(K)$ to denote the set of all $k$-chains over the complex $K$.
We can also view $\alpha$ as a vector $\vec{\alpha} := (\alpha(\sigma))_{\sigma \in K_k}$ over the set of (standard oriented) $k$-simplices. In particular, we use $\1_t$ to denote the $k$-chain that is one on the simplex $t$ and zero everywhere else.
These vectors $\1_t$ form a basis of the $n_k$-dimensional vector space spanned by the vectors $\vec{\alpha}$ for $\vec{\alpha} \in C_k(K)$.
We use the notation $\1_t^T\cdot \vec{\alpha}$ and $\vec{\alpha}[t]$ interchangeably to refer to the value that the chain $\vec{\alpha}$ assigns to $t$. Slightly abusing notation, we use this convention even if $t$ is not a simplex with standard orientation.

The \EMPH{$k$-boundary} operator, denoted $\partial_k:C_k(K) \rightarrow C_{k-1}(K)$, is a map from $k$-chains to $(k-1)$-chains.  Since $\partial_k$ is linear, it suffices to define its action on every $k$-simplex $t = (x_1, x_2, \ldots x_k)$.
\[
\partial_k(\1_{t}) = 
\partial_k(\1_{x_1, x_2, \ldots x_k}) = 
\sum_{i=1}^{k}{(-1)^{i+1} \cdot \1_{x_1, x_2, \ldots x_{i-1}, x_{i+1}, \ldots x_k}}.
\]
The $k$-boundary operator $\partial_k$ is a linear operator that can be presented by a $n_{k-1}\times n_k$ matrix; we slightly overload notation and use $\partial_k$ to refer to this matrix as well.  The transpose of this matrix is called the \EMPH{$(k-1)$-coboundary} operator, denoted $\delta_{k-1} = \partial_k^T$. We will use the coboundary operator to define coboundary $k$-metrics.

Consider $\partial_k:C_k(K)\rightarrow C_{k-1}(K)$, $\delta_k:C_k(K)\rightarrow C_{k+1}(K)$, and their transposes $\delta_{k-1}$ and $\partial_{k+1}$, respectively.
The vector space $C_k(K)$ has the following four natural subspaces: (1) $\im(\partial_{k+1})$, the \EMPH{boundary space}, (2) $\ker(\partial_k)$, the \EMPH{cycle space}, (3) $\im(\delta_{k-1})$, the \EMPH{coboundary space}, and (4) $\ker(\delta_k)$, the \EMPH{cocycle} space.  
An important property of the boundary and coboundary operators is that $\partial_{k-1}\partial_k = 0$, and similarly $\delta_k\delta_{k-1} = 0$.  It follows that the boundary space is a subspace of the cycle space, and the coboundary space is a subspace of the cocycle space.  Further, since $\partial_k = \delta_{k-1}^T$, $C_k(K)$ orthogonally decomposes to the cycle space and the coboundary space.  Similarly, since $\delta_k = \partial_{k+1}^T$, $C_k(K)$ orthogonally decomposes into the cocycle space and the boundary space. We use $\partial_k[K]$ and $\delta_k[K]$ to emphasize that the operators are with respect to the complex $K$, when the complex is not clear from the context.

For a $k$-chain $\alpha$ and a $(k-1)$-chain $\beta$ in a complex $K$ such that $\partial_k\vec{\alpha} = \vec{\beta}$, we say that $\alpha$ is a \EMPH{bounding chain} of $\beta$. If $\vec{w}$ is a vector of non-negative weights indexed by the $(k-1)$-simplices $t$ of $K$, the \EMPH{cost} of $\alpha$ with respect to $\vec{w}$ is $\abs{\vec{\alpha}} \cdot \vec{w} = \sum_{t \in K_{k-1}} \abs{\alpha[t]} \cdot w(t)$.
A \EMPH{minimum bounding chain} of $\beta$ is a bounding chain of $\beta$ with minimum cost.

We also use special terminology for graphs that are $1$-simplicial complexes.  A $1$-chain is equivalent to a $\EMPH{flow}$.  Let $s$ and $t$ be vertices of a graph, and $f$ a $1$-chain (or flow) such that $\partial_1 f = c(\1_t - \1_s)$ for a non-negative constant $c$.  We call such an $f$ an \EMPH{$(s,t)$-flow} with \EMPH{value} $c$. We refer to it as a \EMPH{unit $(s,t)$-flow} when $c = 1$. A $1$-cycle as defined above coincides with the notion of a closed walk in graphs.  
We specifically use \EMPH{simple cycle} to refer to closed walks with no repeated vertices in a graph.

\subsection{Metric Geometry}

\begin{definition}[$k$-metric spaces]
\label{def:weak_k_metric}
Let $X$ be a finite set, and let $d: X^k \rightarrow \R$.
We call $(X, d)$ a \EMPH{$k$-metric space}, and $d$ a \EMPH{$k$-metric function} if for any $x_1,\dots,x_k\in X$:
\begin{enumerate}[(1)]
    \item \label{item:weak-non-negative} $d(x_1,\dots,x_k) \geq 0$.
    \item \label{item:weak-repeats-zero} $d(x_1,\dots, x_k) = 0$ if and only if the values $x_1, \dots, x_k$ are not all distinct.
    \item \label{item:weak-permutation-invariance} $d(x_1,\dots,x_k) = d(x_{\pi(1)},\dots,x_{\pi(k)})$ for any permutation $\pi : [k] \to [k]$.
    \item \label{item:weak-simplex} $d(x_1,\dots,x_k) \leq  \sum_{i=1}^{k}{d(x_1,\ldots,x_{i-1},y,x_{i+1},\dots,x_k)}$ for any $y\in X$.
\end{enumerate}
\end{definition}
A $k$-metric $(X, d)$ is a \EMPH{pseudo} $k$-metric if \cref{item:weak-repeats-zero} is replaced with the weaker property $d(x_1,\dots,x_k) = 0$ \emph{if} $x_i = x_j$ for any $i \neq j$, and that it is a \EMPH{meta} $k$-metric if \cref{item:weak-repeats-zero} is replaced with the weaker property $d(x_1,\dots,x_k) = 0$ \emph{only if} $x_i = x_j$ for any $i \neq j$.

We call \cref{item:weak-simplex} the \EMPH{(weak) simplex inequality} to differentiate it from the strong simplex inequality that takes its place in strong $k$-metrics.
For the same reason, sometime, we refer to $k$-metric spaces as \emph{weak} $k$-metric spaces.
The $2$-metric spaces are \EMPH{standard metric spaces} that have been the main object of study in metric geometry.

Let $(X, d)$ and $(X', d')$ be finite (pseudo) $k$-metric spaces, and let $f:X\rightarrow X'$.
We say that $f$ is an \EMPH{isometry} if for all $(x_1, \ldots, x_k) \in X^k$, $d(x_1, \ldots, x_k) = d'(f(x_1), \ldots, f(x_k))$.

A \EMPH{norm} on $\R^m$ is a function $\norm{\cdot}:\R^m\rightarrow \R$ that satisfies the following properties for all $\vec{x}, \vec{y} \in \R^m$ and all constants $c > 0$:
\begin{enumerate}[(1)]
    \item \label{item:norm-nonneg} $\|\vec{x}\| \geq 0$, with equality if and only if $\vec{x} = \vec{0}$.
    \item \label{item:norm-multiplicative} $\|c \vec{x}\| = |c|\cdot\|\vec{x}\|$, where $|\cdot|$ is the absolute value.
    \item \label{item:norm-triangle} $\|\vec{x}\|+\|\vec{y}\|\geq \|\vec{x} + \vec{y}\|$.
\end{enumerate}
Let $X$ be a finite subset of $\R^m$ and $\norm{\cdot}$ any norm on $\R^m$. 
The norm induces a metric $d_{\norm{\cdot}}$ on all points of $\R^m$, and in particular all points of $X$, where for any $x, x'\in \R^m$, $d_{\norm{\cdot}}(x, x') = \norm{x - x'}$. In this case, $(\R^m, d_{\norm{\cdot}})$ is a metrics space and $(X, d_{\norm{\cdot}})$ is a finite metric space.

For any real $p \geq 1$, the \EMPH{$\ell_p$ norm} of $\vec{x} \in \R^m$ is defined as 
\[
\norm{\vec{x}}_p := \Big(\sum_{i=1}^{m}{\abs{x_i}^p\Big)^{1/p}} \ \text{,}
\]
and the \EMPH{${\ell_\infty}$ norm} of $\vec{x}$ is defined as $\norm{\vec{x}}_\infty: =\max_i(\abs{x_i})$.
Let $d_p = d_{\norm{\cdot}_p}$ be the distance function implied by the $p$-norm.
An $\ell_p^m$ metric is a finite metric space that admits an isometric embedding into $(\R^m, \norm{\cdot}_p)$ and an \EMPH{$\ell_p$ metric} is a finite metric space that admits an isometric map into $(\R^m, \norm{\cdot}_p)$ for some $m\in\Z^+$. Euclidean metrics are $\ell_p$ metrics with $p = 2$.
\section{Strong \texorpdfstring{$k$}{k}-metric Spaces}
We start by calling recalling the definition of a strong $k$-metric space.
\begin{definition}[Strong $k$-metric spaces] \label{def:strong-k-metric}
Let $X$ be a finite set, and let $d: X^k \rightarrow \R$.
We call $(X, d)$ a \EMPH{strong $k$-metric space}, and $d$ a \EMPH{strong $k$-metric function} if for any $x_1,\dots,x_k\in X$:
\begin{enumerate}[(1)]
    \item \label{item:strong-non-negative} $d(x_1,\dots,x_k) \geq 0$.
    \item \label{item:strong-repeats-zero} $d(x_1,\dots, x_k) = 0$ if and only if the values $x_1, \dots, x_k$ are not all distinct.
    \item \label{item:strong-permutation-invariance} $d(x_1,\dots,x_k) = d(x_{\pi(1)},\dots,x_{\pi(k)})$, for any permutation $\pi : [k] \to [k]$.
    \item \label{item:strong-simplex} 
    Let $K$ be the complete 
    $(k-1)$-dimensional simplicial complex on the vertex set $X$. Let $t$ be the (oriented) $(k-1)$-simplex in $K$ with vertices $x_1, \ldots, x_k$, and let $\alpha \in C_{k-1}(K)$ be such that $\partial\cdot \alpha = \partial\cdot \1_t$ (i.e., the boundary of the $(k-1)$-chain $\alpha$ is the same as the boundary of $t$).
    Then
    \[
    d(x_1, \ldots, x_k) = d(t) \leq |\alpha|\cdot \vec{d} = \sum_{\tau\in K_k} |\alpha[\tau]|\cdot d(\tau) \ \text{,}
    \]
    where 
    $|\alpha| = (|\alpha[\tau]|)_{\tau \in K_k}$ and
    $\vec{d} = (d(\tau))_{\tau \in K_k}$.
\end{enumerate}
\end{definition}

\subsection{Relating Weak and Strong \texorpdfstring{$k$}{k}-metrics}
First, we show that strong $k$-metrics are a subset of weak $k$-metrics, as suggested by their name. This inclusion follows from the following lemma.
\begin{lemma}
\label{lem:simplex_boundary}
Consider a $k$-simplex with vertex set $\{x_1, \ldots, x_k, y\}$. Then
\[
    \partial_{k-1}\cdot \1_{x_1, \ldots, x_k} = \partial_{k-1}\cdot\left(
        \sum_{i=1}^{k}{\1_{x_1,\ldots, x_{i-1}, y, x_{i+1},\ldots, x_k}}
    \right) \ \text{.}
\]
\end{lemma}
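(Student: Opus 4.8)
The plan is to verify the claimed identity $\partial_{k-1}\cdot \1_{x_1,\ldots,x_k} = \partial_{k-1}\cdot\bigl(\sum_{i=1}^{k}\1_{x_1,\ldots,x_{i-1},y,x_{i+1},\ldots,x_k}\bigr)$ by a direct computation using the combinatorial formula for the boundary operator on an oriented simplex, and then cancelling terms in pairs. Concretely, I would first recall that for the $(k-1)$-simplex $t = (x_1,\ldots,x_k)$ we have $\partial_{k-1}\cdot\1_t = \sum_{j=1}^{k}(-1)^{j+1}\1_{x_1,\ldots,\widehat{x_j},\ldots,x_k}$, where the hat denotes omission. Thinking geometrically, the two sides are the two ``halves'' of the boundary of the full $k$-simplex on $\{x_1,\ldots,x_k,y\}$: since $\partial_{k-1}\partial_k = 0$, the boundary of the $k$-simplex $(x_1,\ldots,x_k,y)$ is a $(k-1)$-cycle, and it decomposes into the face $(x_1,\ldots,x_k)$ (opposite $y$) with one sign, and the alternating sum of the $k$ faces containing $y$ with the opposite sign; rearranging gives exactly the stated equality. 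So one clean route is simply to invoke $\partial_{k-1}\cdot\partial_k\cdot\1_{x_1,\ldots,x_k,y} = 0$ and read off the two groups of terms.

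Carrying that out: I would write $\partial_k\cdot\1_{x_1,\ldots,x_k,y}$ using the formula, separating the term in which $y$ is deleted (which is $\pm\1_{x_1,\ldots,x_k}$, with sign $(-1)^{k+2} = (-1)^k$ since $y$ sits in position $k+1$) from the $k$ terms in which some $x_i$ is deleted (each of which is $(-1)^{i+1}\1_{x_1,\ldots,x_{i-1},x_{i+1},\ldots,x_k,y}$). Applying $\partial_{k-1}$ and using $\partial_{k-1}\partial_k = 0$ yields
\[
(-1)^k\,\partial_{k-1}\cdot\1_{x_1,\ldots,x_k} + \sum_{i=1}^{k}(-1)^{i+1}\,\partial_{k-1}\cdot\1_{x_1,\ldots,x_{i-1},x_{i+1},\ldots,x_k,y} = 0 .
\]
It then remains to move $y$ from the last coordinate back into position $i$ in each summand: a cyclic shift past $k-i$ entries contributes a sign $(-1)^{k-i}$, so $\1_{x_1,\ldots,x_{i-1},x_{i+1},\ldots,x_k,y} = (-1)^{k-i}\1_{x_1,\ldots,x_{i-1},y,x_{i+1},\ldots,x_k}$ as chains (orientation convention). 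Substituting, $(-1)^{i+1}(-1)^{k-i} = (-1)^{k+1} = -(-1)^k$, so the sum becomes $(-1)^k\,\partial_{k-1}\cdot\1_{x_1,\ldots,x_k} - (-1)^k\sum_{i=1}^k \partial_{k-1}\cdot\1_{x_1,\ldots,x_{i-1},y,x_{i+1},\ldots,x_k} = 0$; dividing by $(-1)^k$ gives the claim. Alternatively, one can bypass signs entirely by expanding both sides directly into alternating sums of $(k-2)$-simplices and checking that every $(k-2)$-face appears with the same coefficient on both sides — but the $\partial\partial = 0$ argument is shorter and less error-prone.

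The main obstacle is purely bookkeeping: tracking the orientation signs correctly when $y$ is reinserted into position $i$ rather than appended, and being careful that the chain identities $\1_{\sigma} = \mathrm{sign}(\pi)\,\1_{\pi(\sigma)}$ are applied consistently with the paper's sign conventions from the preliminaries. There is no real mathematical difficulty — the content is exactly that $\partial_{k-1}\partial_k = 0$ — so I would keep the write-up short, either presenting the $\partial\partial = 0$ derivation above or, if a more elementary presentation is preferred, doing the two-sided face-by-face expansion and noting the pairwise identification of terms.
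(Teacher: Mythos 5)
Your proof is correct and follows essentially the same route as the paper: compute $\partial_k \1_{x_1,\ldots,x_k,y}$, identify its $y$-deleting term and $x_i$-deleting terms (with the sign $(-1)^{k-i}$ from moving $y$ into position $i$), and then invoke $\partial_{k-1}\partial_k = 0$ to conclude. The paper phrases it as ``the difference of the two sides is in $\operatorname{im}\partial_k \subseteq \ker\partial_{k-1}$'' rather than applying $\partial_{k-1}$ throughout, but that is only a cosmetic difference.
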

\begin{proof}
We have
\begin{align*}
\partial_k(\1_{x_1, \ldots, x_k, y}) &= \sum_{i=1}^{k}{(-1)^{i-1}\cdot\1_{x_1,\ldots, x_{i-1},x_{i+1}, \ldots, x_k, y}} + (-1)^{k} \cdot \1_{x_1,\ldots, x_k}
\\ &= (-1)^{k - 1} \cdot\sum_{i=1}^{k}{\1_{x_1,\ldots, x_{i-1},y,x_{i+1}, \ldots, x_k}} + (-1)^{k} \cdot \1_{x_1,\ldots, x_k}
\\ &= (-1)^{k} \cdot\left(\1_{x_1,\ldots, x_k} - \sum_{i=1}^{k}{\1_{x_1,\ldots, x_{i-1},y,x_{i+1}, \ldots, x_k}}
\right) \ \text{.}
\end{align*}
Therefore, $\1_{x_1,\ldots, x_k} - \sum_{i=1}^{k}{\1_{x_1,\ldots, x_{i-1},y,x_{i+1}, \ldots, x_k}}$ is in the image of $\partial_k$, and hence is in the kernel of $\partial_{k-1}$. The lemma follows by using the fact that $\partial_{k-1}$ is a linear operator and rearranging.
\end{proof}

\begin{corollary} \label{cor:strong-k-metric-is-k-metric}
    For every integer $k \geq 2$, any strong pseudo $k$-metric is a weak pseudo $k$-metric, i.e.,
    ${\cal S}_k\subseteq {\cal W}_k$.
\end{corollary}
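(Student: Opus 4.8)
The plan is to derive the weak simplex inequality directly from the strong one using \cref{lem:simplex_boundary}. Note first that the non-negativity, the vanishing-on-repeats, and the permutation-invariance axioms of a strong pseudo $k$-metric (\cref{def:strong-k-metric}) are literally the first three axioms of a weak pseudo $k$-metric (\cref{def:weak_k_metric}), so the only thing left to verify is that a strong pseudo $k$-metric $(X,d)$ satisfies the weak simplex inequality: $d(x_1,\dots,x_k) \le \sum_{i=1}^k d(x_1,\dots,x_{i-1},y,x_{i+1},\dots,x_k)$ for all $x_1,\dots,x_k,y \in X$.

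Fix such $x_1,\dots,x_k,y$ and let $K$ be the complete $(k-1)$-simplicial complex on $X$. First I would dispose of the degenerate cases. If $x_1,\dots,x_k$ are not pairwise distinct, then $d(x_1,\dots,x_k)=0$ while the right-hand side is a sum of non-negative terms, so the inequality holds. If $y=x_j$ for some $j\in[k]$, then the $i=j$ term of the right-hand side equals $d(x_1,\dots,x_{j-1},x_j,x_{j+1},\dots,x_k)=d(x_1,\dots,x_k)$, and again the inequality holds since the remaining terms are non-negative. So it remains to treat the case where $x_1,\dots,x_k,y$ are $k+1$ pairwise distinct elements of $X$, in which case the hypothesis of \cref{lem:simplex_boundary} — that $\{x_1,\dots,x_k,y\}$ is the vertex set of a $k$-simplex — is met.

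In this case, set $\vec{\alpha} := \sum_{i=1}^k \1_{x_1,\dots,x_{i-1},y,x_{i+1},\dots,x_k} \in C_{k-1}(K)$. By \cref{lem:simplex_boundary}, $\partial_{k-1}\cdot\vec{\alpha} = \partial_{k-1}\cdot\1_t$ where $t=(x_1,\dots,x_k)$, so $\vec{\alpha}$ is a chain of the kind quantified over in the strong simplex inequality; applying that inequality gives $d(x_1,\dots,x_k)=d(t)\le |\vec{\alpha}|\cdot\vec{d}$. The $k$ simplices $\{x_1,\dots,x_k,y\}\setminus\{x_i\}$, for $i\in[k]$, are pairwise distinct (they omit different $x_i$), so $\vec{\alpha}$ takes value $\pm 1$ on each of them with no cancellation and value $0$ on every other $(k-1)$-simplex. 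Hence $|\vec{\alpha}|\cdot\vec{d}=\sum_{i=1}^k d(\{x_1,\dots,x_k,y\}\setminus\{x_i\})=\sum_{i=1}^k d(x_1,\dots,x_{i-1},y,x_{i+1},\dots,x_k)$, where the last equality uses permutation-invariance of $d$. Combining the two displayed inequalities yields the weak simplex inequality, and hence ${\cal S}_k\subseteq{\cal W}_k$.

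The argument is short and presents no serious obstacle once \cref{lem:simplex_boundary} is in hand; the only point that warrants a little care is the final bookkeeping — verifying that replacing each of $x_1,\dots,x_k$ in turn by $y$ produces $k$ pairwise distinct $(k-1)$-simplices, so that $|\vec{\alpha}|$ is exactly the indicator vector of those $k$ simplices (with no cancellation coming from orientations) and so that passing from each face-swap to its standard-oriented representative leaves the value of $d$ unchanged by permutation-invariance.
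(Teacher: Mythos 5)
Your proof is correct and follows essentially the same route as the paper's: form $\vec{\alpha} = \sum_{i=1}^k \1_{x_1,\dots,x_{i-1},y,x_{i+1},\dots,x_k}$, invoke \cref{lem:simplex_boundary} to see $\partial_{k-1}\vec{\alpha} = \partial_{k-1}\1_t$, and apply the strong simplex inequality. You are a bit more careful than the paper in explicitly disposing of the degenerate cases (repeated $x_i$, or $y\in\{x_1,\dots,x_k\}$), which is a reasonable precaution since \cref{lem:simplex_boundary} as stated presupposes a genuine $k$-simplex on $k+1$ distinct vertices; the paper's proof leaves those cases implicit.
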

\begin{proof}
Since all other properties are identical, it suffices to show that for a strong $k$-metric space $(X,d)$, if $d$ satisfies the strong simplex inequality then it also satisfies the weak simplex inequality.
To that end, let $x_1,\ldots,x_n,y\in X$, and let 
\[
    \vec{\alpha} = \sum_{i=1}^{k}{\1_{x_1,\ldots, x_{i-1}, y, x_{i+1},\ldots, x_k}} \ \text{.}
\]
By \cref{lem:simplex_boundary}, $\partial_{k-1}\cdot\vec{\alpha} = \partial_{k-1}\cdot\1_{x_1, \ldots, x_k}$.
Thus, by the strong simplex inequality,
\[
d(x_1, \ldots, x_k) \leq \sum_{\tau\in K_k} |\vec{\alpha}[\tau]|\cdot d(\tau) = \sum_{i=1}^{k}{d(x_1,\ldots, x_{i-1}, y, x_{i+1},\ldots, x_k)} \ \text{,}
\]
as needed.
\end{proof}

Next, we show a weak $3$-metric that is not a strong $3$-metric.  Hence, strong $3$-metric is strictly stronger than weak $3$-metrics, unlike strong and weak $2$-metrics which are equivalent (\cref{lem:weak_strong_2metric_equivalency}).

\begin{lemma} \label{lem:weak-strong-3-metric}
There exists a weak $3$-metric space that is not a strong $3$-metric space, i.e., ${\cal W}_3\not\subset{\cal S}_3$.
\end{lemma}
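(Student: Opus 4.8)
The plan is to exhibit an explicit six-point example. Take $X=\{1,2,3,4,5,6\}$, group the vertices into the three ``antipodal'' pairs $\{1,4\},\{2,5\},\{3,6\}$, and regard $X$ as the vertex set of an octahedron, whose $8$ triangular faces are precisely the $8$ triples containing one vertex from each pair (the three triples not forming a face are the ones containing a full pair). Define $d$ on triples of distinct vertices by: $d(\{1,2,3\})=1$; $d=1/14$ on each of the other $7$ octahedron faces; and $d=1/2$ on each of the $12$ triples that contain a full antipodal pair. Set $d=0$ on triples with a repeated vertex, and extend symmetrically to all of $X^3$. Geometrically, the relevant cheap $2$-chain is the triangulated disk obtained by puncturing the octahedron at the face $\{1,2,3\}$, exactly of the flavor pictured in \cref{fig:chains}.

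First I would check $(X,d)\in\W_3$. Nonnegativity, the ``$d=0$ iff not all distinct'' condition, and permutation-invariance are immediate from the definition. For the weak simplex inequality it suffices to check, for each triple $\{x_1,x_2,x_3\}$ and each apex $y\notin\{x_1,x_2,x_3\}$ (the case $y\in\{x_1,x_2,x_3\}$ being trivial), that $d(x_1,x_2,x_3)\le d(y,x_2,x_3)+d(x_1,y,x_3)+d(x_1,x_2,y)$. By the octahedral symmetry this reduces to a handful of cases, indexed by the ``type'' of the triple (the distinguished face $\{1,2,3\}$, another octahedron face, or a triple containing a pair). For a triple containing a pair, one checks that at least one of the three replacements again contains a pair, so the right-hand side is $\ge 1/2$; for a non-distinguished octahedron face the right-hand side is a sum of three positive $d$-values, hence $\ge 3/14 > 1/14$; and for the distinguished face coned at any apex in $\{4,5,6\}$ the right-hand side always equals $\tfrac1{14}+2\cdot\tfrac12 = \tfrac{15}{14}\ge 1$. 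This is a short finite verification.

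Next I would exhibit the bounding chain witnessing failure of the strong simplex inequality at the triangle $t$ with vertices $1,2,3$. Let $K$ be the complete $2$-complex on $X$. Since the octahedron is orientable, the signed sum $\vec z\in C_2(K)$ of its $8$ faces, taken with consistent orientations, satisfies $\partial_2\cdot\vec z=0$. The simplex $\{1,2,3\}$ occurs in $\vec z$ with coefficient $\pm1$, so one of $\vec\alpha := \1_t - \vec z$ or $\vec\alpha := \1_t + \vec z$ is a $2$-chain supported on exactly the $7$ octahedron faces other than $t$, each with coefficient $\pm1$, and in either case $\partial_2\cdot\vec\alpha = \partial_2\cdot\1_t$ (because $\partial_2\cdot\vec z=0$). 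Then $|\vec\alpha|\cdot\vec d = \sum_{\sigma}d(\sigma) = 7\cdot\tfrac1{14} = \tfrac12 < 1 = d(t)$, where $\sigma$ ranges over those $7$ faces, so the strong simplex inequality fails at $t$. Hence $(X,d)\in\W_3\setminus\S_3$, as claimed.

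The point to appreciate --- and the only place real thought is needed --- is why this is possible at all: the weak simplex inequality only ever compares $d(t)$ against a single ``cone'' filling of $\partial_2\cdot\1_t$, and iterating it only produces compositions of cones, all of which are automatically bounded below by $d(t)$; the strong simplex inequality additionally tests genuinely different fillings such as the octahedral disk, which can route around the expensive simplex and come out cheaper. Concretely, the work is (i) choosing the three value levels ($1$ for the distinguished face, a small $\epsilon$ for the other faces, an intermediate $M$ for the pair-triples) so that all weak simplex inequalities hold while $7\epsilon<1$ --- a tiny linear feasibility problem, solved whenever $\epsilon+2M\ge 1$ and $7\epsilon<1$, e.g.\ $\epsilon=1/14$, $M=1/2$ --- and (ii) tracking orientations carefully when writing down $\vec\alpha$ and verifying $\partial_2\cdot\vec\alpha = \partial_2\cdot\1_t$.
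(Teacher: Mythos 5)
Your proof is correct, and it follows the same high-level strategy as the paper's: exhibit an explicit $6$-point weight assignment where a $7$-triangle bounding chain for an ``expensive'' simplex $t$ is cheaper than $d(t)$, while all weak simplex inequalities still hold. The difference is in the concrete example and the verification of weakness. The paper subdivides the triangle $\{x_0,x_1,x_2\}$ into $7$ sub-triangles using $3$ interior points, sets $d=1$ on the sub-triangles and $d=10$ everywhere else, and checks the weak simplex inequality by noting that the subdivision's $1$-skeleton contains no $K_4$, so at least one of the four faces spanned by any $4$ points costs $10$. You instead take the boundary of an octahedron (an orientable triangulated $2$-sphere on $6$ vertices), puncture it at one face $t=\{1,2,3\}$, and use three weight levels ($1$, $1/14$, $1/2$) chosen so that $7\cdot\tfrac{1}{14}<1$ while all weak simplex inequalities hold; the verification is a clean case analysis exploiting the large octahedral symmetry, with the key case being the distinguished face coned over $\{4,5,6\}$ giving $\tfrac{1}{14}+\tfrac12+\tfrac12=\tfrac{15}{14}\ge 1$. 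Both constructions fit the paper's closing remark that any $K_4$-free triangulation of a triangle yields such an example (the octahedron's $1$-skeleton $K_{2,2,2}$ is $K_4$-free, and removing one face leaves a triangulated disk), so these are two instances of the same family; the paper's two-weight version is slightly more transparent as a recipe, while your octahedral version has more symmetry and slightly nicer case bookkeeping. One small point worth making explicit in a write-up: the sign argument producing $\vec\alpha = \1_t \mp \vec z$ depends only on $\partial_2\vec z=0$ and on $t$ appearing in $\vec z$ with coefficient $\pm1$, both of which follow from orientability of the octahedron; it would be good to actually fix consistent orientations when writing the final version, since the lemma's statement lives in the oriented chain group.
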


\begin{proof}
We will consider a $3$-metric space $(X = \{x_0, x_1, x_2, x_3, x_4, x_5\}, d)$, with $d$ defined below corresponding to~\cref{fig:weak-strong-3-metric}.

\begin{figure}[t]
    \centering
    \includegraphics[height=1.75in]{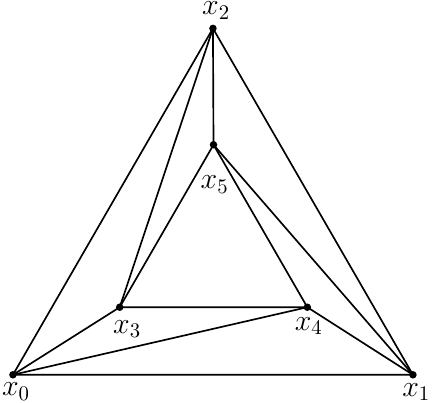}
    \caption{A subdivision of the triangle $\{x_0, x_1, x_2\}$ into seven sub-triangles. Let $d : X \times X \times X \to \R^{\geq 0}$ be such that $d(t) = 1$ for each triangle $t$ in the subdivision, and $d(t) = 10$ for each triangle $t$ (with distinct vertices) not in the subdivision (including $t = \{x_0, x_1, x_2\}$). Then $(X = \{x_0, x_1, x_2, x_3, x_4, x_5\}, d)$ gives an example of a weak $3$-metric space that is not a strong $3$-metric space.}
    \label{fig:weak-strong-3-metric}
\end{figure}

Let
\[
S := \set{\set{x_0, x_1, x_4}, \set{x_0, x_3, x_4}, \set{x_1, x_2, x_5}, \set{x_1, x_4, x_5}, \set{x_0, x_2, x_3}, \set{x_2, x_3, x_5}, \set{x_3, x_4, x_5}}
\]
denote the set of seven triangles in the subdivision of the triangle $\set{x_0, x_1, x_2}$ shown in~\cref{fig:weak-strong-3-metric}, and define
\[
d(x, y, z) :=
\begin{cases}
0 & \text{if $x = y$ or $x = z$ or $y = z$ ,} \\
1 & \text{if $\set{x, y, z} \in S$ ,} \\
10 & \text{otherwise .}
\end{cases}
\]

Now, let $\alpha$ be a $2$-chain that is $1$ on all triangles of $S$ with counterclockwise orientation,
\[
\alpha = \1_{x_0 x_1 x_4} + \1_{x_0 x_4 x_3} + \1_{x_1 x_2 x_5} + \1_{x_1 x_5 x_4} + \1_{x_0 x_3 x_2} + \1_{x_2 x_3 x_5} + \1_{x_3 x_4 x_5}.
\]
We have that $\partial\cdot \alpha = \1_{x_0 x_1} + \1_{x_1 x_2} + \1_{x_2 x_0} = \partial \cdot \1_{x_0 x_1 x_2} = \partial t$, but $d_X(t) = 10 > 7 = |\alpha|\cdot d_X$.  So, $(X, d_X)$ is \emph{not} a strong $k$-metric space, as it violates condition (3) of strong metric spaces.

On the other hand, one can check that $d(x, y, z) \leq d(w, x, y) + d(w, x, z) + d(w, y, z)$ for any $w, x, y, z \in X$, and so $(X, d)$ satisfies the (weak) simplex inequality.
The cases when $d(x, y, z) = 0$ and $d(x, y, z) = 1$ are immediate, as the sum of any three simplices is at least $1$. The case where $d(x, y, z) = 10$ follows by noting that, because at most two of the four triangles spanned by any given four points $w, x, y, z$ are in the subdivision, (at least) one term on the right-hand side must be equal to $10$.
\end{proof}

We remark that the stronger statement that $\W_k\not\subseteq \S_k$ for \emph{any} $k\geq 3$ also holds (\cref{cor:weak-k-not-strong-k}).  Proving this stronger statement, however, requires additional tools, and thus we defer it to the next section.

We also remark that, more generally, any triangulation of the triangle $(x_0, x_1, x_2)$ that does not contain $K_4$ as a subgraph can be turned into an example to show that the strong simplex inequality is strictly stronger than the weak one. Consider such a triangulation with $n$ triangles. We assign weight one to all the triangles of such a triangulation and weight $n+1$ to all other triangles of the complete $2$-simplicial complex.  Since the triangulation does not contain $K_4$ as a subgraph, the induced $2$-simplicial complex of any four vertices has at least two weight-$(n+1)$ triangles as faces, and thus satisfies the weak simplex inequality. On the other hand, the chain of all $n$ weight-$1$ triangles has boundary $(x_0, x_1, x_2)$ (since those triangles form a subdivision of $(x_0, x_1, x_2)$) and has weight $n$. This is strictly less than the weight of $(x_0, x_1, x_2)$, which is $n+1$.

Finally, we show that in contrast to $k$-metrics for $k > 2$, weak and strong $2$-metrics are equivalent. The proof follows from the folklore flow decomposition theorem.

\begin{lemma}
\label{lem:weak_strong_2metric_equivalency}
Any weak pseudo $2$-metric is a strong pseudo $2$-metric, i.e., ${\cal W}_2 \subseteq {\cal S}_2$.
\end{lemma}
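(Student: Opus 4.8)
The plan is to show that any (finite) pseudometric $(X, d)$ satisfies the strong simplex inequality, i.e., that for any pair $s, t \in X$ and any $1$-chain (flow) $f$ on the complete graph $K$ on $X$ with $\partial_1 \cdot \vec{f} = \partial_1 \cdot \vec{e}_{st} = \vec{e}_t - \vec{e}_s$, we have $d(s, t) \le |\vec{f}| \cdot \vec{d} = \sum_{(u,v) \in K_1} |f(u,v)| \cdot d(u, v)$. This is exactly the assertion that the cost (with respect to the edge weights $d$) of any unit $(s,t)$-flow is at least $d(s,t)$, which is the ``strong triangle inequality'' discussed informally in the introduction. Combined with $\mathcal{S}_2 \subseteq \mathcal{W}_2$ (which follows from \cref{cor:strong-k-metric-is-k-metric}), this gives $\mathcal{S}_2 = \mathcal{W}_2$.

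The key tool is the folklore flow decomposition theorem: any $(s,t)$-flow $f$ with value $1$ can be decomposed into a collection of weighted simple directed $(s,t)$-paths together with weighted simple directed cycles, such that the path weights sum to $1$ and, for every edge $e$, the total (signed) contribution of the paths and cycles through $e$ equals $f(e)$; moreover one can arrange this so that $|f(e)|$ is at least the sum of the weights of the paths/cycles using $e$ (in either direction), so summing $|f(e)| \cdot d(e)$ over all edges dominates the total weighted length of all the paths and cycles. The precise statement I would use is: there exist simple $(s,t)$-paths $P_1, \dots, P_r$ with weights $\lambda_1, \dots, \lambda_r > 0$, $\sum_j \lambda_j = 1$, and simple cycles $O_1, \dots, O_q$ with weights $\mu_1, \dots, \mu_q > 0$, such that $\sum_{e \in P_j} d(e) + \text{(cycle terms)} \le \sum_e |f(e)| d(e)$ after weighting. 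Since all $d(e) \ge 0$, the cycle terms only help, so $|\vec{f}| \cdot \vec{d} \ge \sum_{j=1}^r \lambda_j \sum_{e \in P_j} d(e)$.

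Then I would invoke the (iterated) triangle inequality for the pseudometric $d$: for each simple $(s,t)$-path $P_j = (s = v_0, v_1, \dots, v_\ell = t)$, repeated application of $d(v_0, v_\ell) \le d(v_0, v_1) + d(v_1, v_\ell)$ gives $d(s, t) \le \sum_{i} d(v_{i-1}, v_i) = \sum_{e \in P_j} d(e)$. Hence $|\vec{f}| \cdot \vec{d} \ge \sum_j \lambda_j \, d(s,t) = d(s,t)$ since $\sum_j \lambda_j = 1$, which is precisely the strong simplex inequality for $k = 2$. Together with \cref{cor:strong-k-metric-is-k-metric} giving the reverse inclusion $\mathcal{S}_2 \subseteq \mathcal{W}_2$, this proves $\mathcal{S}_2 = \mathcal{W}_2$, and in particular that any weak pseudo $2$-metric is a strong pseudo $2$-metric.

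The main obstacle is stating and applying flow decomposition cleanly in the signed/algebraic-topology language used here, where a ``$1$-chain'' is a real-valued (possibly negative) function on oriented edges rather than a classical nonnegative flow. I would handle this by first reducing to the nonnegative case: given $\vec{f}$ with $\partial_1 \vec{f} = \vec{e}_t - \vec{e}_s$, replace each edge carrying negative flow by the reversed edge carrying the absolute value, obtaining a genuine nonnegative flow $g$ on the (bidirected) graph with $|g(e)| = |f(e)|$ for all $e$ and the same net divergence, so the cost $|\vec{f}| \cdot \vec{d}$ is unchanged and ordinary flow decomposition applies. One minor subtlety is that the flow need not be acyclic and the ``unit'' normalization ($c = 1$) matters for getting the coefficient exactly $1$; this is exactly why we need $\sum_j \lambda_j = 1$, which flow decomposition of a value-$1$ flow guarantees. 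Everything else is routine.
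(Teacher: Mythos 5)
Your proof is correct and follows essentially the same route as the paper's: decompose the unit $(s,t)$-flow via the flow decomposition theorem into weighted $(s,t)$-paths (summing to value $1$) plus cycles, discard the cycle contribution using nonnegativity of $d$, and lower-bound each path's weighted length by $d(s,t)$ via the iterated triangle inequality. The only addition is your explicit remark on converting a signed $1$-chain to a nonnegative bidirected flow before applying decomposition, which the paper leaves implicit but is a reasonable clarification.
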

\begin{proof}
The definitions of weak and strong 2-metrics are identical, except for the weak simplex inequality (\cref{def:weak_k_metric},~\cref{item:weak-simplex}) and its strong metric equivalent (\cref{def:strong-k-metric},~\cref{item:strong-simplex}). Thus, it remains to show that the former implies the latter for $2$-metrics.

Let $(X,d)$ be a $2$-metric space, and let $G$ be the weighted complete graph on vertex set $X$ where the weight of edge $(x, y)$ is $d(x,y)$. ($G$ is the graph representation of $(X, d)$.)
Let $s,t\in X$ and let $\vec{\alpha}$ be a $1$-chain on $G$ such that $\partial_1 \vec{\alpha} =\partial_1 \1_{s,t}$, i.e., such that ~$\vec{\alpha}$ is a unit $(s,t)$-flow.
We show that $|\alpha|\cdot \vec{d} = \sum_{\tau\in E} |\alpha[\tau]|\cdot d(\tau) \geq d(s,t)$, and therefore the strong simplex inequality holds.

By the flow decomposition theorem (see, e.g.,~\cite[Theorem 3.5]{Ahuja1993Book}) we can decompose $\vec{\alpha}$ into a set of cyclic flows ${\cal C}$ and a set of flows ${\cal P}$ where $P \in \mathcal{P}$ has edges $E_p$ and value $0 < f_P \leq 1$.
So, we get that 
\begin{align*}
    |\alpha|\cdot d = \sum_{\tau\in E} |\alpha[\tau]|\cdot d(\tau)&=\sum_{C\in {\cal C}} \sum_{\tau\in E}|C[\tau]|\cdot d(\tau) + \sum_{P\in {\cal P}} \sum_{\tau\in E}P[\tau]  \cdot d(\tau) \\
    &\geq \sum_{P\in {\cal P}} \sum_{\tau\in E} P[\tau] \cdot d(\tau) \\ 
    &= \sum_{P\in {\cal P}} \sum_{\tau\in E_P} f_P \cdot d(\tau) \\ 
    &= \Big(\sum_{\tau\in E_P} d(\tau) \Big) \cdot \sum_{P\in {\cal P}} f_P \\ 
    &\geq d(s,t) \sum_{P\in {\cal P}} f_P \\ 
    &=  d(s,t) & \ \text{.}
\end{align*}
The second inequality holds because the edges $\tau \in E_P$ form an $(s, t)$-path, which one can show has weight at least $d(s, t)$ by an inductive application of the triangle inequality.
The last equality holds because $\sum_{P\in {\cal P}} f_P$ is equal to the flow value of $\vec{\alpha}$ (from $s$ to $t$), which is $1$.
\end{proof}

We end this section by summarizing the inclusion and exclusion result in the following corollaries.  The first corollary follows from \cref{cor:strong-k-metric-is-k-metric} and \cref{lem:weak_strong_2metric_equivalency}.

\begin{corollary}
\label{cor:S2_equals_W2}
Any weak pseudo $2$-metric is a strong pseudo $2$-metric and vice versa, i.e., ${\cal W}_2 = {\cal S}_2$.    
\end{corollary}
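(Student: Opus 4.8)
The plan is to derive the equality ${\cal W}_2 = {\cal S}_2$ by simply combining the two one-directional inclusions already established in this section. First I would invoke \cref{cor:strong-k-metric-is-k-metric} in the case $k = 2$, which gives ${\cal S}_2 \subseteq {\cal W}_2$; concretely, the weak simplex (triangle) inequality $d(x_1, x_2) \leq d(x_1, y) + d(y, x_2)$ is precisely the instance of the strong simplex inequality obtained from the $1$-chain $\vec{\alpha} = \1_{x_1 y} + \1_{y x_2}$, whose boundary agrees with $\partial_1 \1_{x_1 x_2}$ by \cref{lem:simplex_boundary}. Then I would invoke \cref{lem:weak_strong_2metric_equivalency}, which supplies the reverse inclusion ${\cal W}_2 \subseteq {\cal S}_2$: every unit $(s,t)$-flow $\vec{\alpha}$ decomposes (via the flow decomposition theorem) into cyclic flows, which contribute only nonnegative cost, together with a convex combination of $(s,t)$-paths, each of cost at least $d(s,t)$ by iterating the triangle inequality, so that $|\vec{\alpha}| \cdot \vec{d} \geq d(s,t)$, i.e.\ the strong simplex inequality holds.

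Putting the two inclusions together yields ${\cal W}_2 = {\cal S}_2$. Since conditions \cref{item:weak-non-negative,item:weak-repeats-zero,item:weak-permutation-invariance} are literally identical in \cref{def:weak_k_metric} and \cref{def:strong-k-metric}, and the pseudo/meta variants only relax \cref{item:weak-repeats-zero}, no additional case analysis is needed, and the argument applies verbatim to the pseudo and non-pseudo $2$-metric cases alike. I do not anticipate any real obstacle here: this corollary is pure bookkeeping on top of the two preceding results, and the one substantive ingredient — the flow-decomposition argument showing that a weak $2$-metric satisfies the strong simplex inequality — has already been carried out in \cref{lem:weak_strong_2metric_equivalency}.
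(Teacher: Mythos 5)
Your proposal matches the paper's own (one-line) proof exactly: the corollary is stated immediately after \cref{cor:strong-k-metric-is-k-metric} and \cref{lem:weak_strong_2metric_equivalency} and is derived by simply combining those two inclusions. Your recap of the flow-decomposition ingredient and the chain $\vec{\alpha} = \1_{x_1 y} + \1_{y x_2}$ is accurate and consistent with those lemmas' proofs.
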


\subsection{Verifying Strong \texorpdfstring{$k$}{k}-Metrics in Polynomial Time}
\label{sec:verifying-strong-poly-time}

Probably the most basic algorithmic question about finite $k$-metric spaces is whether we can verify them efficiently. More specifically, we would like to solve the following problem(s) for fixed $k \geq 2$.
Given a finite set $X$ and a function $d : X^k\rightarrow \R$ given by its evaluation table as input, decide whether $(X, d)$ is a weak or strong $k$-metric space.  

Deciding whether $(X, d)$ is a weak $k$-metric space in $O(n^{k+1})$ time is straightforward, as each condition can be checked in time only depending on $k$ for each of the $\binom{n}{k+1}$ sets of values $\set{x_1, \ldots, x_k, y} \subseteq X$, and $k$ is a constant.
Deciding whether $(X, d)$ is a strong $k$-metric is more challenging because of the strong simplex inequality. In particular, it is not clear that we can enumerate all chains $\vec{\alpha}$ with the required properties.
So, instead of enumerating all such chains $\alpha$, for every $(k-1)$-simplex $t = (x_1, \ldots, x_k)$ we compute the minimum cost of a $(k-1)$-chain that has the same boundary as $t$ and compare its cost with $d(t)$.
To do this, we use known algorithms for computing minimum bounding chains over the reals~\cite{Dey2010OHC, TahbazJadbabai2010DistCoverage} using linear programming.%
\footnote{Computing minimum bounding chains over the integers or finite fields is known to be $\NP$-hard. See, e.g.,~\cite{BMN20MinHom}.}
More specifically, we use these algorithms to compute minimum bounding chains for the boundary of every $(k-1)$-simplex $t = (x_1, \ldots, x_k) \in K_{k-1}$.
We present a detailed proof for the sake of completeness, but again note that the linear programs in the proof are already described in previous literature~\cite{Dey2010OHC, TahbazJadbabai2010DistCoverage}.

\begin{lemma} 
\label{lem:verify_strong_metrics}
Let $(X, d)$, $d: X^k \rightarrow \R$, be a finite $k$-metric space for constant $k$, with $n = |X|$.  We can check whether $d$ is a strong $k$-metric in polynomial time (by solving $O(n^k)$ linear programs, each with $n^{O(k)}$ variables and constraints).
\end{lemma}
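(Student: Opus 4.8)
The plan is to reduce the verification of the strong simplex inequality for each $(k-1)$-simplex $t$ to a single minimum bounding chain computation, which in turn is a linear program. First I would observe that, by \cref{def:strong-k-metric}, $(X,d)$ fails to be a strong $k$-metric exactly when there is some $(k-1)$-simplex $t = (x_1,\dots,x_k)$ and some chain $\vec{\alpha}\in C_{k-1}(K)$ with $\partial_{k-1}\cdot\vec{\alpha} = \partial_{k-1}\cdot\1_t$ and $|\vec{\alpha}|\cdot\vec{d} < d(t)$. Since the boundary constraint forces $\partial_{k-1}\cdot(\vec{\alpha} - \1_t) = 0$, any such $\vec{\alpha}$ is a bounding chain of $\partial_{k-1}\cdot\1_t = \partial t$, and conversely every bounding chain of $\partial t$ is feasible in the defining condition. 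Hence for each fixed $t$, the strong simplex inequality at $t$ holds if and only if $d(t) \leq c_t$, where $c_t$ is the cost (with respect to the weight vector $\vec{d}$) of a minimum bounding chain of $\partial t$. Note $c_t \leq d(t)$ always, since $\1_t$ itself is a bounding chain of $\partial t$ of cost $d(t)$ (using nonnegativity of $d$), so the condition is really the equality $c_t = d(t)$, i.e.\ $\1_t$ is itself a minimum bounding chain.

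Next I would spell out the linear program computing $c_t$, following \cite{Dey2010OHC, TahbazJadbabai2010DistCoverage}: introduce a variable $\alpha[\tau]$ for each of the $n_{k-1} = \binom{n}{k}$ many $(k-1)$-simplices $\tau$ of $K$, and, to handle the absolute values in the cost $|\vec{\alpha}|\cdot\vec{d}$, auxiliary variables $\gamma[\tau] \geq 0$ with constraints $-\gamma[\tau] \leq \alpha[\tau] \leq \gamma[\tau]$. The program is
\begin{align*}
\min &\quad \sum_{\tau\in K_{k-1}} d(\tau)\cdot \gamma[\tau] \\
\text{s.t.} &\quad \partial_{k-1}\cdot\vec{\alpha} = \partial_{k-1}\cdot\1_t \\
&\quad -\gamma[\tau] \leq \alpha[\tau] \leq \gamma[\tau] \quad \text{for all } \tau\in K_{k-1} \ \text{.}
\end{align*}
This has $O(n_{k-1}) = n^{O(k)}$ variables and $O(n_{k-2} + n_{k-1}) = n^{O(k)}$ constraints, and it is always feasible (take $\vec{\alpha} = \1_t$, $\gamma = |\1_t|$) and bounded below by $0$, so it has a finite optimum $c_t$ computable in polynomial time by any polynomial-time LP solver. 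The algorithm then loops over all $n_{k-1} = O(n^k)$ choices of $t$, solves this LP, and accepts iff $c_t = d(t)$ (equivalently $c_t \geq d(t)$) for every $t$; it also separately checks \cref{item:strong-non-negative,item:strong-repeats-zero,item:strong-permutation-invariance}, which takes only $O(n^{k+1})$ time as in the weak case. Total running time: $O(n^k)$ LP solves, each on an instance of size $n^{O(k)}$, which is polynomial for constant $k$.

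The correctness argument has two directions. If $(X,d)$ is a strong $k$-metric, then for every $t$ and every bounding chain $\vec{\alpha}$ of $\partial t$ we have $|\vec{\alpha}|\cdot\vec{d} \geq d(t)$, so $c_t \geq d(t)$, and the algorithm accepts. Conversely, if the algorithm accepts, then $c_t \geq d(t)$ for all $t$, meaning every chain $\vec{\alpha}$ with $\partial_{k-1}\cdot\vec{\alpha} = \partial_{k-1}\cdot\1_t$ satisfies $|\vec{\alpha}|\cdot\vec{d} \geq d(t)$ — which is exactly the strong simplex inequality at $t$ — and the other three properties were verified directly, so $(X,d)$ is a strong $k$-metric. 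I expect the main (minor) obstacle to be the bookkeeping around the absolute-value linearization and confirming that the optimum of the LP genuinely equals the combinatorial quantity $c_t$ in the definition, i.e.\ that at optimum one may take $\gamma[\tau] = |\alpha[\tau]|$ (immediate from $d(\tau)\geq 0$ and the minimization), and around the fact that one need not worry about unboundedness or infeasibility of the LP. A secondary point worth a sentence is that the boundary operator $\partial_{k-1}$ and the chain $\partial_{k-1}\cdot\1_t$ can be written down explicitly in polynomial time from the combinatorial description of $K$, so the LP can actually be constructed efficiently. None of these steps is deep; the lemma is essentially an application of the known real-coefficient minimum-bounding-chain LP, invoked once per $(k-1)$-simplex.
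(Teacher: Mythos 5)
Your proof is correct and follows the same approach as the paper: reduce verification of the strong simplex inequality to solving a minimum-bounding-chain LP for each of the $O(n^k)$ many $(k-1)$-simplices $t$, and accept iff each optimum is at least $d(t)$. The only cosmetic difference is that you linearize the absolute values via slack variables $\gamma$ with $-\gamma \leq \alpha \leq \gamma$, whereas the paper splits $\vec{\alpha} = \vec{\alpha}^+ - \vec{\alpha}^-$ and argues a complementarity property; both are standard and give the same LP optimum.
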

\begin{proof}
It suffices to check, for every $(k-1)$-simplex $t$, that the optimal value of the following optimization problem is at least $d(t)$ (in fact, the optimal value would be equal to $d(t)$, realized by $\vec{\alpha} = \1_t$ among other possible optimum solutions).
\begin{equation}
\label{eqn:check_strong_metric_opt1}
\min_{\vec{\alpha}}{\sum_{\tau\in X^k}{|\vec{\alpha}[\tau]|\cdot d(\tau)}} \quad\text{s.t.}\quad \partial_{k-1}\cdot\vec{\alpha} = \partial_{k-1} \cdot \1_t.
\end{equation}
We reduce this problem to a linear programming.
First, we modify \cref{eqn:check_strong_metric_opt1} by introducing $(k-1)$-chains $\alpha^+$ and $\alpha^-$ and constraints
\[
\vec{\alpha}[\tau] = \vec{\alpha}^+[\tau] - \vec{\alpha}^-[\tau] \quad\text{and}\quad \vec{\alpha}^+[\tau], \vec{\alpha}^-[\tau] \geq 0,
\]
for every $(k-1)$-simplex $\tau$. We obtain the following modified optimization problem.
\begin{equation}
\label{eqn:check_strong_metric_opt2}
\min_{\vec{\alpha}^+, \vec{\alpha}^-}{\sum_{\tau\in X^k}{|\vec{\alpha}^+[\tau] - \vec{\alpha}^-[\tau]|\cdot d(\tau)}} \quad\text{s.t.}\quad \partial_{k-1}\cdot(\vec{\alpha}^+ - \vec{\alpha}^-) = \partial_{k-1} \cdot \1_t,
\quad\text{and}\quad \vec{\alpha}^+, \vec{\alpha}^- \geq \vec{0},
\end{equation}
where by $\vec{\alpha}^+, \vec{\alpha}^- \geq \vec{0}$, we mean that they are vectors with non-negative entries.

For any feasible solution $\vec{\alpha}^+, \vec{\alpha}^-$ of \cref{eqn:check_strong_metric_opt2} there is a feasible solution $\vec{\alpha} = \vec{\alpha}^+ - \vec{\alpha}^-$ in \cref{eqn:check_strong_metric_opt1} with the same optimization value.  On the other hand, for every feasible solution $\vec{\alpha}$ of \cref{eqn:check_strong_metric_opt1}, consider the solution $\vec{\alpha}^+, \vec{\alpha}^-$ in \cref{eqn:check_strong_metric_opt2}, where for any $\tau\in X^k$, if $\vec{\alpha}[\tau]\geq 0$ then $\vec{\alpha}^+[\tau] = \vec{\alpha}[\tau]$ and $\vec{\alpha}^-[\tau] = 0$, and otherwise $\vec{\alpha}^+[\tau] = 0$ and $\vec{\alpha}^-[\tau] = -\vec{\alpha}[\tau]$.  The optimization value of $\vec{\alpha}^+, \vec{\alpha}^-$ in \cref{eqn:check_strong_metric_opt2} is the same as the optimization value of $\vec{\alpha}$ in \cref{eqn:check_strong_metric_opt1}.  Hence, \cref{eqn:check_strong_metric_opt1} and \cref{eqn:check_strong_metric_opt2} have the same optimal value. 

Next, we show that the following LP has the same optimal value as \cref{eqn:check_strong_metric_opt2}.
\begin{equation}
\label{eqn:check_strong_metric_lp}
\min_{\vec{\alpha}^+, \vec{\alpha}^-}{\sum_{\tau\in X^k}{(\vec{\alpha}^+[\tau] + \vec{\alpha}^-[\tau])\cdot d(\tau)}} \quad\text{s.t.}\quad \partial(\vec{\alpha}^+ - \vec{\alpha}^-) = \partial t,
\quad\text{and}\quad \vec{\alpha}^+, \vec{\alpha}^- \geq 0.
\end{equation}
To that end, we observe that in both \cref{eqn:check_strong_metric_opt2} and \cref{eqn:check_strong_metric_lp} there are optimal solutions for both with the property that
    for any $\tau\in X^k$, we have $\vec{\alpha}^+[\tau] = 0$ or $\vec{\alpha}^-[\tau] = 0$. 
Let $\vec{\alpha}^+, \vec{\alpha}^-$ be any feasible solution., and let $\tau$ be any $(k-1)$-simplex.  
Let $h = \min(\vec{\alpha}^-[\tau], \vec{\alpha}^+[\tau])$.  
Replacing $\vec{\alpha}^-[\tau], \vec{\alpha}^+[\tau]$ with $\vec{\alpha}^-[\tau] - h, \vec{\alpha}^+[\tau] - h$ does not violate the constraints of the optimization problems, and it can only reduce their objective functions as $d(\tau) \geq 0$. So, it is safe to assume that for any $\tau\in X^k$, $\vec{\alpha}[\tau]^+ = 0$ or $\vec{\alpha}[\tau]^- = 0$. This implies that if $\vec{\alpha}^-[\tau] = 0$,
\[
|\vec{\alpha}^+[\tau] - \vec{\alpha}^-[\tau]| = |\vec{\alpha}^+[\tau]| = \vec{\alpha}^+[\tau] = \vec{\alpha}^+[\tau]+\vec{\alpha}^-[\tau] \ \text{,}
\]
and similarly if $\vec{\alpha}^+[\tau] = 0$,
\[
|\vec{\alpha}^+[\tau] - \vec{\alpha}^-[\tau]| = |-\vec{\alpha}^-[\tau]| = \vec{\alpha}^-[\tau] = \vec{\alpha}^+[\tau]+\vec{\alpha}^-[\tau] \ \text{.}
\]
In either case, we have 
$
|\vec{\alpha}^+[\tau] - \vec{\alpha}^-[\tau]| = \vec{\alpha}^+[\tau]+\vec{\alpha}^-[\tau]
$.

Overall, the optimization problems in \cref{eqn:check_strong_metric_opt2} and \cref{eqn:check_strong_metric_lp} have the following properties:
\begin{enumerate}[(i)]
    \item they have the same set of constraints, 
    \item they both have optimal solutions with the property above, and 
    \item for any optimal solution with this property in one of these optimization problems there is a feasible solution in the other one with the same objective functions.
\end{enumerate} 
We conclude that these two optimization problems have the same optimal value.  Moreover, \cref{eqn:check_strong_metric_lp} is a linear program with $n^{O(k)}$ variables and constraints, as needed.
\end{proof}
\section{Coboundary Metrics}
\label{sec:coboundary-metrics}

Normed vector spaces have been extensively studied as special cases of metric spaces. Such spaces have nice properties as their structures match our geometric intuition, and accordingly they appear in many applications. 
In this section, we extend the notion of normed vector spaces to $k$-metrics (\cref{sec:cbd_definition}).  We call these spaces coboundary $k$-metric spaces because of the use of coboundary operator in their definition. Note that our coboundary $k$-metrics are in fact pseudo $k$-metrics, but we refer to them as metrics for simplicity.  We denote a coboundary $k$-metric that is realized from vectors in $\R^m$ with respect to the norm $\norm{\cdot}$ by ${\cal C}_{k,\norm{\cdot}}^m$, or ${\cal C}_{k,p}^m$ when the norm is the $\ell_p$ norm for some $p \in [1, \infty]$; see \cref{def:coboundary_metrics} below for the formal definition of coboundary metrics.  
Coboundary $k$-metrics are generalizations of $\ell_p$ metrics, specifically, $\ell_p^m = {\cal C}_{2, p}^m$.
We use ${\cal C}_{k, p}$ to denote the family of all finite coboundary $k$-metrics in $m$ dimensions with respect to the $\ell_p$ norm; in particular, ${\cal C}_{2, p}$ is the space of all $\ell_p$ metrics.
The $\ell_p$ norms have been very well studied, with many insightful results that relates them via low distortion embeddings.  We describe generalizations of some significant results of this type in this section. 

In \cref{sec:cbd_subset_k_metrics}, we show that for any $p\neq \infty$, the space of ${\cal C}_{k, p}$ is strictly contained in the space of strong pseudo $k$-metrics. In \cref{sec:frechet_embedding}, we show that ${\cal C}_{k, \infty}$ contains all strong pseudo $k$-metrics spaces, hence ${\cal C}_{k, \infty} = {\cal S}_k$. This is a generalization of the Fr\'echet embedding for finite metrics, obtained when $k=2$.
In \cref{sec:norm-embeddings-to-coboundary}, we give a recipe for adapting results about norm embeddings to analogous results about embedding coboundary $k$-metrics. From this, we get generalizations of the Johnson-Lindenstrauss lemma and Dvoretzky's theorem as corollaries.
Finally, in \cref{sec:tree_metrics} we show a generalization of the fact that any tree metric is an $\ell_1$ metric.

\subsection{Definition and Basic Properties}
\label{sec:cbd_definition}
We begin by defining coboundary $k$-metrics and proving that they are strong pseudo $k$-metrics.  In particular, coboundary $2$-metrics are equivalent to normed spaces as we know them from the study of metric spaces. 
Intuitively, a coboundary $k$-metric can be constructed on a point set $X$ as follows.  We consider the complete $(k-1)$-simplicial complex on $X$.  Then we assign vectors to the $(k-2)$-simplices of $K$.  We apply the coboundary operator to these vectors to obtain vectors on $(k-1)$-simplices.  The norm of these vectors are our metric values.  \cref{fig:cbd_metrics} illustrates $2$- and $3$-coboundary metrics. 

Now, we formally define coboundary $k$-metrics. In our definition, the vectors assigned to the $(k-2)$-simplices are the rows of the matrix $F$, which is 
equivalently defined by its columns that are $(k-2)$-chains of $K$. See the example in \cref{fig:cbd_metrics}.
\begin{definition}[Coboundary $k$-metrics]
\label{def:coboundary_metrics}
Let $k \geq 2$ be an integer, let $X$ be a finite set, let $d:X^k \rightarrow \R$, and let $K$ be the complete $(k-1)$-simplicial complex with vertex set $X$.
Also, let $m\in \Z^+$ and let $\norm{\cdot}$ be any norm on $\R^m$.
We say that $(X, d)$ is a \EMPH{coboundary $k$-metric} in $m$ dimensions with respect to norm $\norm{\cdot}$, if there exist $(k-2)$-chains $f_i:K_{k-2}\rightarrow \R$ for $i\in[m]$ in $K$ such that for any distinct $x_1,\dots, x_k\in X$, 
\begin{equation}
\label{eqn:cbd_def}
    d(x_1,x_2,\ldots, x_k) = 
    \begin{cases}
    0,& \text{if there exist $i \neq j$ with $x_i = x_j$,}\\
    \norm{
        \1_{x_1,x_2,\ldots, x_k}^T\cdot \delta_{k-2} \cdot F
    },& \text{otherwise},
    \end{cases}
\end{equation}
where $F = (\vec{f}_1,\ldots,\vec{f}_m)$ is an $n_{k-2}\times m$ matrix specified by its columns, which are $(k-2)$-chains in $K$.  We say that $F$ \emph{induces} the coboundary $k$-metric $(X, d)$. For any non-distinct $x_1,\dots, x_k\in X$, $d(x_1,\dots, x_k) = 0$.  
\end{definition}

We denote the space of all coboundary $k$-metrics in $m$ dimensions with respect to the norm $\norm{\cdot}$ by ${\cal C}_{k, \norm{\cdot}}^m$.  We slightly simplify notation for the $p$-norms, denoting ${\cal C}_{k, p}^m = {\cal C}_{k, \norm{\cdot}_p}^m$. Further, we denote ${\cal C}_{k, p} = \bigcup_{m\in\Z^+}{\cal C}_{k, p}^m$.
In particular, we remark that ${\cal C}_{2, p}^m$ is the space of $\ell_p^m$-metrics, and ${\cal C}_{2, p}$ is the space of $\ell_p$-metrics, denoted $\mathbb{L}_p$ in some texts~\cite{matousek13}.
We show that coboundary $k$-metrics are strong pseudo $k$-metrics. 

\begin{lemma}
\label{lem:k_coboundary_metrics}
Let $k\geq 2$ and $m \geq 1$ be integers, and let $\norm{\cdot}$ be any vector norm. If $(X, d)$ belongs to ${\cal C}_{k, \norm{\cdot}}^m$ then $(X, d)$ is a strong pseudo $k$-metric space, i.e.,~${\cal C}_{k, p} \subseteq {\cal S}_k$.
\end{lemma}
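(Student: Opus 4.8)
The plan is to verify the four conditions in \cref{def:strong-k-metric} one at a time, directly from the defining formula $d(x_1,\ldots,x_k)=\norm{\1_{x_1,\ldots,x_k}^T\cdot\delta_{k-2}\cdot F}$ (with $d$ equal to $0$ whenever the arguments are not all distinct). Condition~\cref{item:strong-non-negative} and the pseudo-version of~\cref{item:strong-repeats-zero} are immediate: the first holds because norms are non-negative, and the second is built into the definition of $d$ on non-distinct tuples. For permutation invariance, \cref{item:strong-permutation-invariance}, I would note that reordering $x_1,\ldots,x_k$ by a permutation $\pi$ multiplies the oriented chain $\1_{x_1,\ldots,x_k}$, and hence the vector $\1_{x_1,\ldots,x_k}^T\delta_{k-2}F$, by the sign $\pm1$ of $\pi$; homogeneity of the norm (\cref{item:norm-multiplicative} with $c=\pm1$) then leaves $d$ unchanged. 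The same observation shows that $d$ does not depend on which of the two orientations is chosen to represent each simplex, so $d$ is well-defined to begin with.

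The substantive step is the strong simplex inequality, \cref{item:strong-simplex}. Fix the $(k-1)$-simplex $t$ with vertices $x_1,\ldots,x_k$ and let $\vec\alpha\in C_{k-1}(K)$ be any chain with $\partial_{k-1}\vec\alpha=\partial_{k-1}\1_t$. The key idea is to invoke the identity $\delta_{k-2}=\partial_{k-1}^T$ from \cref{sec:prelims-algebraic-topology}, which gives $\vec\alpha^T\delta_{k-2}=(\partial_{k-1}\vec\alpha)^T$ and $\1_t^T\delta_{k-2}=(\partial_{k-1}\1_t)^T$; the boundary hypothesis then forces $\1_t^T\delta_{k-2}F=\vec\alpha^T\delta_{k-2}F$. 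Expanding $\vec\alpha=\sum_{\tau\in K_{k-1}}\alpha[\tau]\,\1_\tau$ and applying the triangle inequality together with homogeneity of $\norm{\cdot}$, I would obtain
\begin{align*}
d(t)&=\norm{\1_t^T\cdot\delta_{k-2}\cdot F}=\norm{\sum_{\tau\in K_{k-1}}\alpha[\tau]\,\1_\tau^T\cdot\delta_{k-2}\cdot F}\\
&\le\sum_{\tau\in K_{k-1}}\abs{\alpha[\tau]}\cdot\norm{\1_\tau^T\cdot\delta_{k-2}\cdot F}=\sum_{\tau\in K_{k-1}}\abs{\alpha[\tau]}\cdot d(\tau)=\abs{\vec\alpha}\cdot\vec{d},
\end{align*}
where the second-to-last equality uses that every $\tau\in K_{k-1}$ is a genuine simplex, so its vertices are distinct and $d(\tau)=\norm{\1_\tau^T\delta_{k-2}F}$ holds exactly (again independent of orientation). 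This is precisely the strong simplex inequality, so $(X,d)\in\mathcal{S}_k$; taking $\norm{\cdot}=\norm{\cdot}_p$ yields $\mathcal{C}_{k,p}\subseteq\mathcal{S}_k$.

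I do not anticipate a genuine obstacle. Everything reduces to the single structural fact $\delta_{k-2}=\partial_{k-1}^T$, which converts the boundary constraint $\partial_{k-1}\vec\alpha=\partial_{k-1}\1_t$ into an equality of images under $\delta_{k-2}\cdot F$, followed by routine applications of the norm axioms. The only thing requiring care is bookkeeping of simplex orientations and the resulting signs, and since every sign that appears is $\pm1$, the homogeneity axiom $\norm{cv}=\abs{c}\norm{v}$ absorbs all of them automatically, so the actual write-up should be short.
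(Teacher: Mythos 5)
Your proof is correct and follows essentially the same route as the paper's: both verify the first three conditions directly from the norm axioms and the defining formula, and both establish the strong simplex inequality by converting $\partial_{k-1}\vec\alpha=\partial_{k-1}\1_t$ into $\vec\alpha^T\delta_{k-2}F=\1_t^T\delta_{k-2}F$ via $\delta_{k-2}=\partial_{k-1}^T$, then applying the triangle inequality and homogeneity of the norm. The only difference is cosmetic: you spell out the transpose identity explicitly where the paper leaves it implicit.
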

\begin{proof}
Let $K$ be the complete simplicial complex with vertex set $X$.
Since $(X, d) \in {\cal C}_{k, \norm{\cdot}}^m$, there exist $(k-2)$-chains $F = (\vec{f}_1, \ldots, \vec{f}_m)$ such that 
for any $x_1, \ldots, x_k\in X$, $d(x_1, \ldots, x_k) = \norm{\1^T_{x_1,\ldots, x_k}\cdot \delta\cdot F}$ if $x_1, \ldots, x_k$ are distinct and $d(x_1, \ldots, x_k) = 0$ otherwise. 

Therefore, $d$ is non-negative since any norm is non-negative, and $d$ is zero if $x_1, \ldots, x_k$ contains duplicated elements. Hence the properties in \cref{item:strong-non-negative,item:strong-repeats-zero} of a strong pseudo $k$-metric hold.

We next show the symmetry property (\cref{item:weak-permutation-invariance}). Let $\pi : [k] \to [k]$ be a permutation. If $x_1, \ldots, x_k$ are not distinct then $d(x_1, \ldots, x_k) = d(x_{\pi(1)}, \ldots, x_{\pi(k)}) = 0$ by the ``repeated elements property'' (\cref{item:strong-repeats-zero}) of strong pseudo $k$-metrics (which we already showed holds). Otherwise, if $x_1, \ldots, x_k$ are distinct, 
\[
\1_{\pi(x_1,\ldots, x_k)}^T\cdot \delta_{k-2}\cdot\vec{f}_i = \sign(\pi)\cdot \1_{x_1,\ldots, x_k}^T\cdot \delta_{k-2}\cdot\vec{f}_i
\] 
for every $i\in[m]$.  Thus,
\begin{align*}
d(\pi(x_1, \ldots, x_k)) = \norm{
    \1_{x_{\pi(1)},\ldots, x_{\pi(k)})}^T\cdot\delta \cdot F
} = \norm{
    \1_{x_1,\ldots, x_k}^T\cdot\delta \cdot F
} = d(x_1, \ldots, x_k).
\end{align*}

To show strong simplex inequality, fix a $(k-1)$-simplex $t$, and let $\alpha$ be any $(k-1)$-chain such that $\partial_{k-1} \vec{\alpha} = \partial_{k-1} \1_t$, thus $\vec{\alpha}^T\delta_{k-2} = \1_t^T\delta_{k-2}$. We have,
\[
    d(t)
    = \norm{\1_t^T\cdot \delta_{k-2}\cdot F}
    = \norm{\alpha^T\cdot \delta_{k-2}\cdot F}
    \leq |\vec{\alpha}|\cdot (\1_s^T\cdot\delta_{k-2} \cdot F)_{s\in K_{k-1}} = |\vec{\alpha}|\cdot \vec{d},
\]
as desired.
The inequality holds since for any finite set of vectors $(v_1, \ldots, v_k)$, any set of scalars $(\beta_1, \ldots, \beta_k)$, and any norm $\norm{\cdot}$ we have $\norm{\sum_{i=1}^{k}{\beta_i v_i}} \leq \sum_{i=1}^{k}{|\beta_i|\cdot \norm{v_i}}$, by properties \eqref{item:norm-multiplicative} and \eqref{item:norm-triangle} of norms.
\end{proof}

\subsection{Coboundary \texorpdfstring{$k$}{k}-metrics with \texorpdfstring{$p\neq\infty$}{p not infinity}} 
\label{sec:cbd_subset_k_metrics}

The relation between different normed spaces and the expressiveness of one compared to others have been studied extensively. 
See Matousek~\cite[Section 1.5]{matousek13} for a concise summary.  In this section, we generalize a known result of this sort, namely that the space of finite $\ell_{p\neq\infty}$ metric spaces is strictly contained in the space of finite metric spaces: there exist finite metric spaces that are not $\ell_p$ metrics if $p\neq \infty$.  To this end, we define the apex extension of a $k$-metric, which we show is a $(k+1)$-metric.  Intuitively, this $(k+1)$-metric behaves very similarly to the given $k$-metric.  (Yet, this apex extensions are very specific $(k+1)$-metrics, thus we expect the space of $(k+1)$-metrics to be much richer than the space of $k$-metrics.)

\subsubsection{Apex Extension}
\noindent Let $(X, d)$ be a (weak) pseudo $k$-metric, and let $a$ be a new element that we call the \EMPH{apex}, and let $X' = X\cup\{a\}$. We define the function $d'$ on the $(k+1)$-tuples of $X'$, as follows.
\begin{enumerate}[(i)]
    \item For any $x_1,\dots,x_{k+1} \in X'$, if $x_1, \ldots, x_{k+1}$ are not distinct or do not include $a$, it holds that $d'(x_1,\dots,x_{k+1}) = 0$.
    \item Otherwise, if $x_i = a$ for some $i$, $d'(x_1,\dots,x_{k+1}) = d(x_1, \ldots, x_{i-1}, x_{i+1}, \ldots, x_{k+1})$.
\end{enumerate}
We refer to $(X', d')$ as the \EMPH{apex extension} of $(X, d)$ with apex $a$. 
We show that the apex extension gives a $k$-metric.
\begin{lemma}
\label{lem:apex_extension_weak_to_weak}
The apex extension of a (weak) pseudo $k$-metric is a (weak) pseudo $(k+1)$-metric.
\end{lemma}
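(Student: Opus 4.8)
The plan is to verify each of the four defining conditions of a weak pseudo $(k+1)$-metric (Definition~\ref{def:weak_k_metric}, with condition~\eqref{item:weak-repeats-zero} weakened to the ``if'' direction) for $(X', d')$ directly from the definition of $d'$ and the corresponding properties of $d$. Conditions~\eqref{item:weak-non-negative} (non-negativity) and the pseudo version of~\eqref{item:weak-repeats-zero} are immediate: by construction $d'$ is either $0$ or equals a value of $d$, which is non-negative, and $d'$ is $0$ whenever its arguments are not all distinct (in fact whenever they fail to include the apex $a$). For the permutation-invariance condition~\eqref{item:weak-permutation-invariance}, I would argue that permuting the arguments of $d'(x_1, \dots, x_{k+1})$ either keeps the multiset of arguments without $a$ (value stays $0$) or moves $a$ to a different slot; in the latter case the remaining $k$ arguments appear in some permuted order, and since $d$ is permutation-invariant, $d'$ is unchanged.

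The substantive part is the simplex inequality~\eqref{item:weak-simplex}: for all $x_1, \dots, x_{k+1}, y \in X'$,
\[
d'(x_1, \dots, x_{k+1}) \le \sum_{i=1}^{k+1} d'(x_1, \dots, x_{i-1}, y, x_{i+1}, \dots, x_{k+1}).
\]
I would split into cases according to whether the left-hand tuple is ``active'' (all distinct and containing $a$) and where $a$ and $y$ sit. If the left side is $0$ (arguments not all distinct, or $a$ absent), the inequality is trivial since the right side is a sum of non-negative terms. So assume $x_1, \dots, x_{k+1}$ are distinct with, say, $x_j = a$; then $d'(x_1,\dots,x_{k+1}) = d(x_1, \dots, \widehat{x_j}, \dots, x_{k+1})$, a value of $d$ on the $k$ points $P := \{x_i : i \ne j\} \subseteq X$. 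On the right-hand side, consider the term where $y$ is substituted for $x_j = a$: if $y = a$ this term equals $d(P)$ itself and we are done immediately; if $y \ne a$, this term is $d'(\dots, y, \dots)$ with $a$ now absent, hence $0$, so that term contributes nothing. For the other $k$ terms (substituting $y$ for some $x_i$, $i \ne j$), the apex $a = x_j$ is still present, so each such term equals $d(x_1, \dots, x_{i-1}, y, x_{i+1}, \dots, \widehat{x_j}, \dots, x_{k+1})$ — that is, $d$ evaluated on $P$ with the single point $x_i$ replaced by $y$, up to the permutation-invariance of $d$ (which lets us ignore where exactly $y$ landed). Thus the right-hand sum is at least $\sum_{i \ne j} d(\dots, y, \dots \text{ replacing } x_i \dots)$, which is precisely the right-hand side of the weak simplex inequality for the $k$-metric $(X, d)$ applied to the $k$ points $P$ and the ``pivot'' $y$. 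Since $(X,d)$ is a weak pseudo $k$-metric, that sum is at least $d(P) = d'(x_1, \dots, x_{k+1})$, completing the case.

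The main obstacle — though a minor bookkeeping one rather than a conceptual one — is being careful with indices and with the degenerate sub-cases in the simplex inequality: when $y$ already equals one of the $x_i$, when $y = a$, and matching up the ``$y$ in slot $i$ of the $(k+1)$-tuple'' substitution with the ``$y$ in the corresponding slot of the $k$-tuple'' substitution via permutation-invariance. I would handle this by fixing notation up front (WLOG $x_{k+1} = a$, using symmetry) so that the correspondence between substitutions in the $(k+1)$-metric and substitutions in the $k$-metric is transparent, and by noting at the outset that any term on the right-hand side whose arguments are not all distinct is simply $\ge 0$ and can be dropped. With that setup, the inequality for $(X', d')$ reduces cleanly to the inequality already assumed for $(X, d)$.
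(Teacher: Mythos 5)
Your proposal is correct and follows essentially the same route as the paper's proof: reduce to the case where the left-hand tuple is distinct and contains the apex, observe that the term substituting $y$ for the apex either trivializes the inequality ($y = a$) or vanishes ($y \neq a$), and identify the remaining $k$ terms with the right-hand side of the $k$-ary weak simplex inequality for $d$ via permutation-invariance. The bookkeeping details you flag are handled identically in the paper (which fixes $x_{k+1} = a$ WLOG and dispenses with $y \in \{x_1,\dots,x_{k+1}\}$ as trivial up front).
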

\begin{proof}
Let $(X, d)$ be a $k$-metric space, and $(X' = X\cup\{a\}, d')$ be its apex extension with apex $a$. 
By the definition of apex extension $d'$ is nonnegative, and $d'(x_1, \ldots, x_{k+1}) = 0$ if $x_1, \ldots, x_{k+1}$ are not distinct. 
Also, by the definition of apex extension, $d'$ is permutation invariant since $d$ is permutation invariant.  Hence, $(X', d')$ has the first three properties of weak pseudo $(k+1)$-metrics.

To show simplex inequality for $(X', d')$, let $x_1,\dots,x_k, x_{k+1}, y\in X'$.
If $x_1, \ldots, x_{k+1}$ are not distinct or they if they do not contain $a$ then $d'(x_1,\ldots,x_{k+1}) = 0$ and the simplex inequality trivially holds. 
Further, if $y\in \{x_1, \ldots, x_{k+1}\}$ again the simplex inequality trivially holds. 
So, we assume that $x_1, \ldots, x_{k+1}$ are distinct and that $x_{k+1} = a$, and $y\neq a$. So,
\begin{align*}
d'(x_1,\ldots,x_{k+1}) &= d'(x_1,\ldots,x_{k},a) \\
&= d(x_1,\ldots,x_{k}) \\
&\leq \sum_{i=0}^{k}{d(x_1,\ldots,x_i,y,x_{i+1},\ldots,x_{k})} \\
&= \sum_{i=0}^{k}{d'(x_1,\ldots,x_i,y,x_{i+1},\ldots,x_{k},a)} \\
&= \sum_{i=0}^{k}{d'(x_1,\ldots,x_i,y,x_{i+1},\ldots,x_{k},a)} + d'(x_1,\ldots,x_{k},y)\\
&= \sum_{i=0}^{k+1}{d'(x_1,\ldots,x_i,y,x_{i+1},\ldots,x_{k},x_{k+1})}.
\end{align*}
The second last equality holds as $d'(x_1,\ldots,x_{k},y) = 0$ because $y\neq a$.
\end{proof}

\begin{figure}[t]
    \centering
    \includegraphics[height=1.2in]{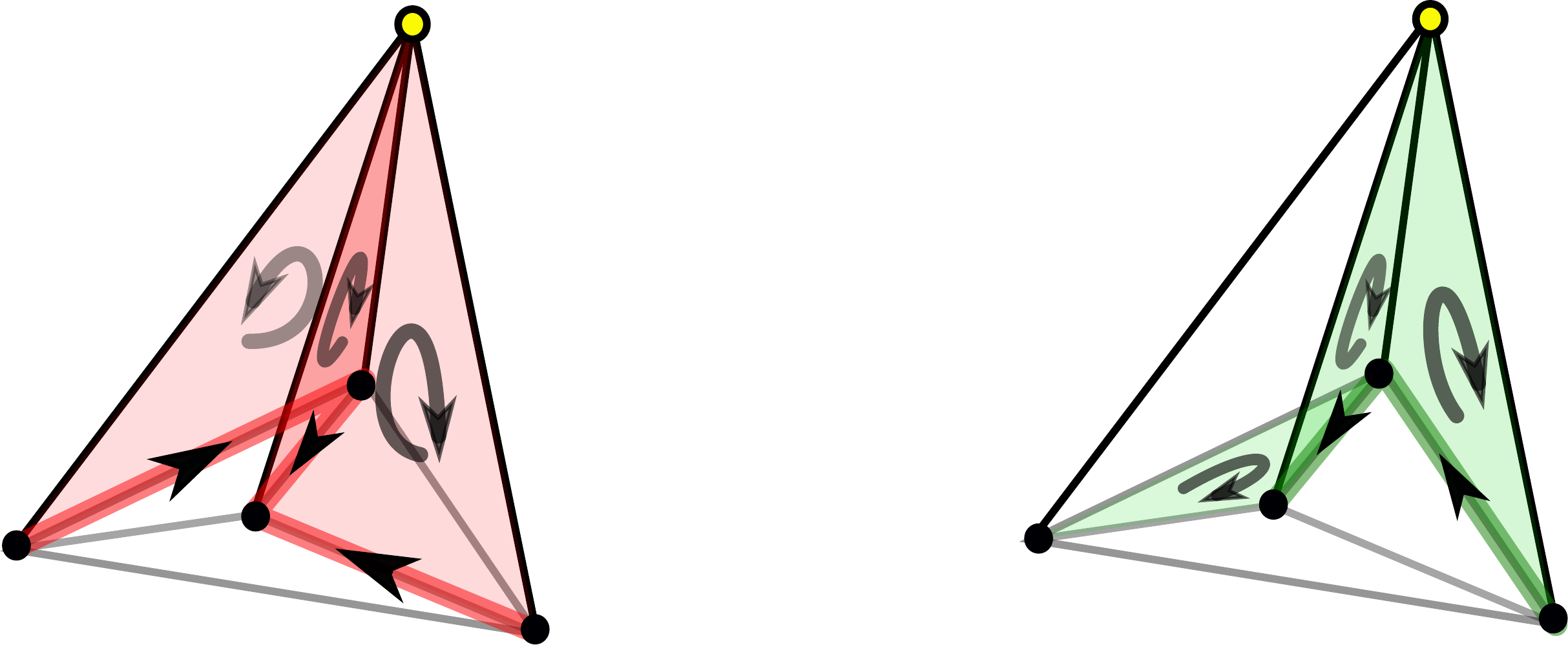}
    \caption{
        A complete graph of black vertices and its apex extension, with the yellow apex.
        Left: a $1$-chain in the graph induced by black vertices is illustrated using thick red edges, and its lift in the $2$-simplicial complex including the apex is illustrated using shaded triangles, Right: a $2$-chain in the $2$-simplicial complex with the apex is illustrated using shaded triangles and its projection in the induced graph of black vertices is illustrated using thick green edges. 
    }
    \label{fig:apex_ext_2}
\end{figure}

Next, we attempt to show that the apex extension of a strong pseudo $k$-metric is a strong pseudo $(k+1)$-metric.
To that end, let $K$ and $K'$ be complete complexes with vertices $X$ and $X'$, respectively.  For any $h\in [|X|-1]$, let $P_h:C_h[K']\rightarrow C_{h-1}[K]$ be the map defined as follows on the basis of $C_h[K']$:
\begin{enumerate}
\item [(1)] for any distinct $x_1, \ldots, x_{h+1}\in X'\backslash\{a\}$, $P_h\cdot \1_{x_1, \ldots, x_{h+1}} = 0$, and 
\item [(2)] for any distinct $x_1, \ldots, x_{h}\in X'\backslash\{a\}$, $P_h\cdot \1_{x_1, \ldots, x_h,a} = \1_{x_1, \ldots, x_h}$.
\end{enumerate}
Note this is a one-to-one map between the $h$-chain space of $K'$ that is spanned by $h$-simplices that include $a$ and the $(h-1)$-chain space of $K$.  We define $L_{h-1} = P_h^T$.  Equivalently, for any distinct $x_1, \ldots, x_{h}\in X$, $L_{h-1}\cdot \1_{x_1\ldots, x_h} = \1_{x_1\ldots x_h a}$. See \cref{fig:apex_ext_2} for examples of the application of the projection and lift operators.
We refer to $P_h$ and $L_h$ as $h$-projection and $h$-lift operators, respectively.
A crucial property of $P_h$ and $L_h$ is stated in the following lemma.
\begin{lemma}
\label{lem:Ph-commutes}
Let $K$ be a complete complex on vertex set $X$, and $K'$ the complete complex on vertex set $X\cup\{a\}$.
For any $h\in [|X|-1]$, let $P_h$ and $L_h$ as defined above.  We have,
\begin{equation}
\label{eqn:P_h-commutes}
\partial_{h-1}[K]\cdot P_h  = P_{h-1} \cdot \partial_{h}[K'],
\end{equation}
and, equivalently,
\begin{equation}
\label{eqn:L_h-commutes}
L_{h-1}\cdot \delta_{h-2}[K]   =  \delta_{h-1}[K']\cdot L_{h-2}.
\end{equation}
\end{lemma}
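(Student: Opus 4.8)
The plan is to verify~\eqref{eqn:P_h-commutes} by checking it on the basis of $C_h[K']$, since both sides are linear maps $C_h[K'] \to C_{h-2}[K]$. The basis elements split into two cases according to whether the $h$-simplex contains the apex $a$ or not, and the two cases are handled separately. For the first case, let $\sigma = \{x_1, \ldots, x_{h+1}\}$ with all $x_i \in X' \setminus \{a\}$; then $\partial_h[K'] \cdot \1_\sigma$ is a signed sum of the $(h-1)$-faces of $\sigma$, \emph{none} of which contains $a$, so $P_{h-1}$ kills each of them and the right-hand side is $0$. On the left-hand side, $P_h \cdot \1_\sigma = 0$ already (by definition of $P_h$ on apex-free simplices), so the left-hand side is also $0$. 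Hence both sides agree on apex-free basis elements.

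The main content is the second case: let $\sigma = \{x_1, \ldots, x_h, a\}$ with $x_1, \ldots, x_h \in X' \setminus \{a\}$ distinct. On the left, $P_h \cdot \1_{x_1 \cdots x_h a} = \1_{x_1 \cdots x_h}$, and then $\partial_{h-1}[K] \cdot \1_{x_1 \cdots x_h} = \sum_{i=1}^{h} (-1)^{i-1} \1_{x_1 \cdots \widehat{x_i} \cdots x_h}$. On the right, I would expand $\partial_h[K'] \cdot \1_{x_1 \cdots x_h a}$ using the boundary formula: this is $\sum_{i=1}^{h} (-1)^{i-1} \1_{x_1 \cdots \widehat{x_i} \cdots x_h a} + (-1)^{h} \1_{x_1 \cdots x_h}$ (the last term dropping the apex, which sits in position $h+1$). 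Now apply $P_{h-1}$ termwise: each term $\1_{x_1 \cdots \widehat{x_i} \cdots x_h a}$ contains $a$, so $P_{h-1}$ sends it to $\1_{x_1 \cdots \widehat{x_i} \cdots x_h}$, while the apex-free term $\1_{x_1 \cdots x_h}$ is killed by $P_{h-1}$. This leaves exactly $\sum_{i=1}^{h} (-1)^{i-1} \1_{x_1 \cdots \widehat{x_i} \cdots x_h}$, matching the left-hand side. So~\eqref{eqn:P_h-commutes} holds on all basis elements, hence everywhere by linearity.

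Finally,~\eqref{eqn:L_h-commutes} follows from~\eqref{eqn:P_h-commutes} by taking transposes: using $L_{h-1} = P_h^T$, $L_{h-2} = P_{h-1}^T$, $\delta_{h-2}[K] = \partial_{h-1}[K]^T$, and $\delta_{h-1}[K'] = \partial_h[K']^T$, transposing both sides of $\partial_{h-1}[K] \cdot P_h = P_{h-1} \cdot \partial_h[K']$ gives $P_h^T \cdot \partial_{h-1}[K]^T = \partial_h[K']^T \cdot P_{h-1}^T$, i.e., $L_{h-1} \cdot \delta_{h-2}[K] = \delta_{h-1}[K'] \cdot L_{h-2}$, as claimed. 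The only place to be careful is bookkeeping the sign of the apex term in the boundary of $\{x_1, \ldots, x_h, a\}$ (it occupies the last position, contributing sign $(-1)^{(h+1)+1} = (-1)^h$) and confirming that the apex-free leftover term is precisely the one annihilated by $P_{h-1}$; this is the step most prone to an off-by-one slip, but once the indexing convention for the boundary operator is pinned down it is routine. There is no serious obstacle beyond this sign check.
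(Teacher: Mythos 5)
Your proof is correct and takes essentially the same approach as the paper: verify \eqref{eqn:P_h-commutes} on basis elements by splitting into apex-free and apex-containing cases, expand the boundary of $\1_{x_1\cdots x_h a}$ and observe $P_{h-1}$ kills exactly the apex-free term, then obtain \eqref{eqn:L_h-commutes} by transposition. The sign bookkeeping matches (your $(-1)^{i-1}$ and $(-1)^h$ agree with the paper's $(-1)^{i+1}$ and $(-1)^{h+2}$), and you even spell out the transpose step that the paper leaves implicit.
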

\begin{proof}
It suffices to show $\partial_{h-1}[K]\cdot P_h \cdot \1_{t'}  = P_{h-1} \cdot \partial_{h}[K']\cdot \1_{t'}$ for every $h$-simplex $(x_1, \ldots, x_{h+1})$ in $K'$.  We consider the following two cases.

If $x_1, \ldots, x_{h+1}$ does not include $a$, then 
$
P_h \cdot \1_{x_1, \ldots, x_{h+1}} = 0,
$
by the definition of $P_h$, thus the left-hand side of \cref{eqn:P_h-commutes} is zero.
Further, none of the $(h-1)$-simplices in $\partial_{h}[K']\cdot \1_{x_1, \ldots, x_{h+1}}$ include $a$, so 
$
P_{h-1}\cdot \partial_{h}[K'] \cdot\1_{x_1, \ldots, x_{h+1}} = 0
$, thus the right-hand side of \cref{eqn:P_h-commutes} is also zero, as desired.

Otherwise, we assume without loss of generality that $x_{k+1} = a$. First, we simplify left-hand side of \cref{eqn:P_h-commutes}
\[
\partial_{h-1}[K]\cdot P_h \cdot\1_{x_1, \ldots, x_{h}, a} = \partial_{h-1}[K]\cdot \1_{x_1, \ldots, x_{h}}
=\sum_{i=1}^{h}{(-1)^{i+1}\1_{x_1, \ldots x_{i-1} x_{i+1} \ldots x_{h}}},
\]
by the definition of $P_h$ and the boundary map.  Then, we simplify right-hand side of \cref{eqn:P_h-commutes}.
\begin{align*}
P_{h-1} \cdot \partial_{h}[K'] \cdot \1_{x_1, \ldots, x_{h}, a}
&= P_{h-1} \cdot \left(
\sum_{i=1}^{h}{(-1)^{i+1}\1_{x_1, \ldots x_{i-1} x_{i+1} \ldots x_{h} a}} + (-1)^{h+2}\1_{x_1, \ldots x_{i-1} x_i x_{i+1} \ldots x_{h}}
\right)  \\ 
&=\sum_{i=1}^{h}{(-1)^{i+1}\1_{x_1, \ldots x_{i-1} x_{i+1} \ldots x_{h}}}.
\end{align*}
So, the left-hand side and right-hand side of \cref{eqn:P_h-commutes} are equal and so the lemma holds.
\end{proof}

\subsubsection{Extending \texorpdfstring{$k$}{k}-metrics to Similar \texorpdfstring{$(k+1)$}{(k+1)}-metrics}

We now show that the apex extension of a coboundary metric is a coboundary metric with respect to the same norm. 
This result together with a later result of this paper will imply that the apex extensions of a strong pseudo $k$-metric is a strong pseudo $(k+1)$-metric.
To that end, we need the following auxiliary lemma.
\begin{lemma}
\label{lem:zero_tree_assumption}
Let $K$ be a simplicial complex, and let $k\in\Z^+$. Also, let $\Lambda$ be a subset of $(k-1)$-simplices of $K$ such that no nonzero linear combination of them is a $(k-1)$-cycle.
Then, for any coboundary $k$-chain $\vec{h}$, there exists a $(k-1)$-chain $\vec{f}$ in $K$, such that
\begin{enumerate}
    \item [(1)] $\vec{h} = \delta_{k-1} \cdot \vec{f}$, and
    \item [(2)] $\vec{f}$ is zero on all simplices in $\Lambda$.
\end{enumerate}
\end{lemma}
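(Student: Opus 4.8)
The plan is to start from an arbitrary $(k-1)$-chain whose coboundary is $\vec{h}$ and then correct it by an element of the cocycle space $\ker\delta_{k-1}$ so that the corrected chain vanishes on $\Lambda$. First I would fix some $\vec{f}_0 \in C_{k-1}(K)$ with $\delta_{k-1}\cdot\vec{f}_0 = \vec{h}$, which exists precisely because $\vec{h}$ is a coboundary $k$-chain. Any $(k-1)$-chain $\vec{f}$ satisfies $\delta_{k-1}\cdot\vec{f} = \vec{h}$ if and only if $\vec{f} - \vec{f}_0 \in \ker\delta_{k-1}$. Hence it suffices to produce $\vec{g}\in\ker\delta_{k-1}$ with $\vec{g}[t] = -\vec{f}_0[t]$ for every $t\in\Lambda$, and then take $\vec{f} := \vec{f}_0 + \vec{g}$: this $\vec{f}$ will satisfy both (1) and (2). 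In particular, it is enough to show that the coordinate-restriction map $\rho\colon\ker\delta_{k-1}\to\R^{\Lambda}$, $\vec{g}\mapsto(\vec{g}[t])_{t\in\Lambda}$, is surjective.

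The crux is connecting the hypothesis on $\Lambda$ (phrased via cycles) to this surjectivity statement (about the cocycle space), and I would do it by a standard duality argument. The map $\rho$ is surjective iff the only $\vec{y}\in\R^{\Lambda}$ orthogonal to $\im\rho$ is $\vec{y}=\vec{0}$; and $\vec{y}\perp\im\rho$ says exactly that the chain $\sum_{t\in\Lambda}y_t\,\1_t$ is orthogonal to all of $\ker\delta_{k-1}$ in $C_{k-1}(K)$. By the orthogonal decomposition $C_{k-1}(K)=\ker\delta_{k-1}\oplus\im\partial_k$ recalled in the preliminaries (equivalently, $\ker\delta_{k-1}=(\im\partial_k)^{\perp}$, since $\delta_{k-1}=\partial_k^{T}$), this means $\sum_{t\in\Lambda}y_t\,\1_t\in\im\partial_k$. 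But $\im\partial_k\subseteq\ker\partial_{k-1}$ because $\partial_{k-1}\partial_k=0$, so $\sum_{t\in\Lambda}y_t\,\1_t$ would be a $(k-1)$-cycle supported on $\Lambda$; by the hypothesis that no nonzero linear combination of simplices of $\Lambda$ is a $(k-1)$-cycle, we get $\sum_{t\in\Lambda}y_t\,\1_t=\vec{0}$, and since the $\1_t$ are distinct basis vectors, $\vec{y}=\vec{0}$. This establishes surjectivity of $\rho$.

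Putting the pieces together finishes the proof: choose $\vec{g}\in\ker\delta_{k-1}$ with $\rho(\vec{g}) = -(\vec{f}_0[t])_{t\in\Lambda}$ and set $\vec{f}=\vec{f}_0+\vec{g}$; then $\delta_{k-1}\cdot\vec{f}=\delta_{k-1}\cdot\vec{f}_0=\vec{h}$ and $\vec{f}[t]=0$ for all $t\in\Lambda$, as required. The only step that takes any thought is the orthogonality argument, and the observation that makes it go through is simply that boundaries are cycles ($\im\partial_k\subseteq\ker\partial_{k-1}$), which is what lets the cycle hypothesis on $\Lambda$ be applied to the element of $\im\partial_k$ produced by duality. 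An essentially equivalent route would be to phrase the goal as the solvability of the single linear system ``$\delta_{k-1}\cdot\vec{f}=\vec{h}$ and $\vec{f}[t]=0$ for $t\in\Lambda$'' and invoke the Fredholm alternative, the feasibility condition of which reduces to the same computation; I would pick whichever presentation reads most cleanly alongside the surrounding apex-extension material.
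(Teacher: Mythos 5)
Your proof is correct, and it takes a genuinely different route from the paper's. The paper's proof is constructive: it introduces the subcomplex $K_\Lambda$ whose $(k-1)$-simplices are exactly $\Lambda$, observes that the no-cycle hypothesis makes the $(k-1)$-cycle space of $K_\Lambda$ trivial so that \emph{every} $(k-1)$-chain on $K_\Lambda$ is a coboundary, and then extends a witness $(k-2)$-chain $\vec{\beta}$ to produce a coboundary $\vec{f}'' = \delta_{k-2}[K]\cdot\vec{\beta}$ agreeing with $\vec{f}'$ on $\Lambda$; the correction term it subtracts therefore lives in the smaller space $\im\delta_{k-2}$. Your proof instead works entirely by duality: you reduce the lemma to surjectivity of the coordinate-restriction map $\rho\colon\ker\delta_{k-1}\to\R^{\Lambda}$ and establish that by showing $(\im\rho)^{\perp}$ is trivial, using $(\ker\delta_{k-1})^{\perp}=\im\partial_k\subseteq\ker\partial_{k-1}$ together with the cycle hypothesis. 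Your correction lives only in the larger space $\ker\delta_{k-1}$, which is all that is needed (only $\delta_{k-1}\vec{g}=0$ is ever used), and you avoid building the auxiliary subcomplex; the paper's version has the minor advantage of exhibiting an explicit coboundary correction and of staying closer in style to the chain-level manipulations used in \cref{lem:Ph-commutes,lem:apex_extension_coboundary_to_coboundary}. Both arguments ultimately rest on the same orthogonal decompositions of $C_{k-1}(K)$ recalled in the preliminaries; you simply apply the decomposition on the cocycle side, while the paper applies it on the cycle side inside $K_\Lambda$.
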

\begin{proof}
Since $\vec{h}$ is a coboundary $k$-chain, there exists a $(k-1)$-chains $\vec{f}'$, such that $\vec{h} = \delta_{k-1} \cdot \vec{f}'$. Let $\vec{f}''$  be the \emph{coboundary} $(k-1)$-chain that agrees with $\vec{f}'$ on the values assigned to the simplices of $\Lambda$.  To see that such an $\vec{f}''$ exists, let $K_{\Lambda}$ be the simplicial complex whose set of $(k-1)$-simplices is $\Lambda$, and whose set of $(k-2)$-simplices agrees with $K$. Since $\Lambda$ has no cycle, $\vec{f}''$ is a coboundary chain in $K_{\Lambda}$, thus, there is $(k-2)$-chain $\vec{\beta}$ in $L_{\Lambda}$ such that $\delta_{k-2}[K_\Lambda]\cdot \vec{\beta}$ is equal to $\vec{f}''$ restricted to $K_\Lambda$.
But,  $\vec{f}'' = \delta_{k-2}[K] \cdot \vec{\beta}$ assigns the same set of values to the simplices in $\Lambda$, so $\1_{\lambda}^T\cdot \vec{f}' = \1_{\lambda}^T\cdot \vec{f}''$, for any $\lambda \in \Lambda$.

Now let $\vec{f} = \vec{f}' - \vec{f}''$.  Since $\vec{f}'$ and $\vec{f}''$ agree on the values assigned to simplices of $\Lambda$, $\vec{f}$ assigns zero to these simplices. Further, since $\vec{f}''$ is a coboundary $(k-1)$-chain, $\delta_{k-1}\cdot \vec{f}'' = 0$, therefore, $\delta_{k-1} \cdot \vec{f} = \delta_{k-1}\cdot \vec{f}' = \vec{h}$.
\end{proof}

Now, we are ready to show that the apex extension of a $k$-metric is a couboundary metric if any only if the $k$-metric itself is a coboundary metric.
\begin{lemma}
\label{lem:apex_extension_coboundary_to_coboundary}
The apex extension of a $k$-metric is in ${\cal C}_{k+1,\norm{\cdot}}^m$ if and only if the $k$-metric is in ${\cal C}_{k, \norm{\cdot}}^m$.
\end{lemma}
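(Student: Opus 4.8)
The plan is to prove both directions using the projection and lift operators $P_h, L_h$ together with their commutation relations from \cref{lem:Ph-commutes}, and the fact (\cref{lem:zero_tree_assumption}) that a coboundary chain can be represented by a chain that vanishes on any acyclic set of simplices. Write $(X', d')$ for the apex extension of $(X, d)$, let $K$ be the complete $(k-1)$-complex on $X$ and $K'$ the complete $k$-complex on $X' = X \cup \{a\}$. Recall $d'(x_1, \ldots, x_k, a) = d(x_1, \ldots, x_k)$ and $d'$ vanishes on any $k$-tuple not containing $a$ (or not distinct). On the complex side, the $(k-1)$-simplices of $K'$ containing $a$ are in bijection with the $(k-1)$-simplices of $K$ via $L_{k-1}$, and $d'$ is supported precisely on these.

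For the ``only if'' direction, suppose $(X', d') \in {\cal C}_{k+1, \norm{\cdot}}^m$, induced by $(k-1)$-chains $\vec{g}_1, \ldots, \vec{g}_m$ in $K'$, so $d'(s') = \norm{(\1_{s'}^T \cdot \delta_{k-1}[K'] \cdot G)}$ for $(k-1)$-simplices... wait, I need to be careful with indices: a coboundary $(k+1)$-metric uses $(k-1)$-chains $F$, applies $\delta_{k-1}$, and reads off the $k$-simplices. So the columns $\vec{g}_i$ of $G$ are $(k-1)$-chains of $K'$. I would set $\vec{f}_i := P_{k-1} \cdot \vec{g}_i$, a $(k-2)$-chain of $K$. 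Then using \cref{lem:Ph-commutes} (the $L$/$\delta$ version, transposed appropriately, or equivalently the $P$/$\partial$ version) one computes, for a $(k-1)$-simplex $t = (x_1, \ldots, x_k)$ of $K$ with lift $t' = (x_1, \ldots, x_k, a)$ in $K'$,
\[
\1_t^T \cdot \delta_{k-2}[K] \cdot \vec{f}_i = \1_t^T \cdot \delta_{k-2}[K] \cdot P_{k-1} \cdot \vec{g}_i = (P_{k-1}^T \cdot \1_t)^T \cdot \delta_{k-1}[K'] \cdots
\]
— more cleanly: $L_{k-1} \cdot \delta_{k-2}[K] = \delta_{k-1}[K'] \cdot L_{k-2}$ gives $\1_t^T \delta_{k-2}[K] \vec{f}_i = \1_t^T \delta_{k-2}[K] P_{k-1}\vec{g}_i$, and since $\1_{t'} = L_{k-1}\1_t$, i.e.\ $\1_{t'}^T = \1_t^T P_{k-1}^T$, hmm, $L_{k-1} = P_{k-1}^T$ so $\1_{t'} = L_{k-1}\1_t$ and the commutation $\partial_{k-1}[K] P_k = P_{k-1}\partial_k[K']$ transposes to $\delta_{k-2}[K] L_{k-1}^T$... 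I would sort out the exact transpose bookkeeping in the writeup, but the upshot is that $d(t) = \norm{\1_{t'}^T \delta_{k-1}[K'] G} = d'(t') = d'(x_1,\ldots,x_k,a) = d(x_1,\ldots,x_k)$, so $F = (\vec{f}_1, \ldots, \vec{f}_m)$ induces $(X, d)$ and $(X,d) \in {\cal C}_{k, \norm{\cdot}}^m$.

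For the ``if'' direction, suppose $(X, d) \in {\cal C}_{k, \norm{\cdot}}^m$ with inducing $(k-2)$-chains $\vec{f}_1, \ldots, \vec{f}_m$ in $K$. Define $\vec{g}_i := L_{k-2} \cdot \vec{f}_i$, a $(k-1)$-chain in $K'$ supported on simplices containing $a$. By the commutation relation $\delta_{k-1}[K'] \cdot L_{k-2} = L_{k-1} \cdot \delta_{k-2}[K]$, for any $k$-simplex $s'$ of $K'$ containing $a$, say $s' = (x_1, \ldots, x_k, a)$ with $t = (x_1, \ldots, x_k)$, we get $\1_{s'}^T \delta_{k-1}[K'] \vec{g}_i = \1_{s'}^T L_{k-1} \delta_{k-2}[K]\vec{f}_i = \1_t^T \delta_{k-2}[K]\vec{f}_i$ (since $\1_{s'}^T L_{k-1} = (L_{k-1}^T \1_{s'})^T = (P_{k-1}\1_{s'})^T = \1_t^T$), so the induced $(k+1)$-metric value on $s'$ equals $d(t) = d'(s')$. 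For a $k$-simplex $s'$ of $K'$ \emph{not} containing $a$, every $(k-1)$-face of $s'$ also avoids $a$, so $\vec{g}_i = L_{k-2}\vec{f}_i$ is zero on all of them, hence $\1_{s'}^T \delta_{k-1}[K'] \vec{g}_i = 0$ and the induced value is $\norm{\vec{0}} = 0 = d'(s')$. Thus $G = (\vec{g}_1, \ldots, \vec{g}_m)$ induces $(X', d')$, proving $(X', d') \in {\cal C}_{k+1, \norm{\cdot}}^m$.

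The main obstacle I anticipate is purely the index/transpose bookkeeping: getting the degrees of $P_h, L_h, \partial_h, \delta_h$ to line up correctly in \cref{lem:Ph-commutes} when specialized to the coboundary-metric setting, and making sure the selection identities $\1_{t'}^T L = \1_t^T$ and $P \1_{t'} = \1_t$ are applied on the correct side. There is also a subtlety in the ``only if'' direction if one wants $\vec{f}_i$ to genuinely be a $(k-2)$-\emph{chain} (it automatically is, as the image of $P_{k-1}$) — no acyclicity argument is actually needed there; \cref{lem:zero_tree_assumption} would only be invoked if, conversely, one needed to \emph{choose} a nice preimage, which this direction of the argument avoids. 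Everything else is a short linear-algebra verification.
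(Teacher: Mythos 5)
Your ``if'' direction ($\C_{k,\norm{\cdot}}^m \Rightarrow \C_{k+1,\norm{\cdot}}^m$) is essentially the paper's argument: lift the inducing chains with $L_{k-2}$, use the commutation relation, and note that the lifted chains vanish on $(k-1)$-faces avoiding $a$. That part is fine (modulo a small index slip: $L_{k-1} = P_k^T$, so $L_{k-1}^T = P_k$, not $P_{k-1}$ — the conclusion $L_{k-1}^T \1_{s'} = \1_t$ is still correct).

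The ``only if'' direction has a genuine gap, and it is located exactly where you wave it away. You define $\vec{f}_i := P_{k-1}\vec{g}_i$ and claim this recovers the metric directly. It does not. Unwinding the definitions, for a $(k-1)$-simplex $t = (x_1,\ldots,x_k)$ of $K$ with lift $t' = (x_1,\ldots,x_k,a)$,
\[
\1_t^T\cdot\delta_{k-2}[K]\cdot P_{k-1}\cdot\vec{g}_i \;=\; \1_{t'}^T\cdot\delta_{k-1}[K']\cdot\vec{g}_i \;-\; (-1)^{k}\,\vec{g}_i[t]\ ,
\]
because $\partial_k[K']\1_{t'}$ contains, in addition to the $a$-containing faces that $P_{k-1}$ sees, the face $\1_t$ itself (with sign $(-1)^{k}$), and $P_{k-1}$ annihilates that face. (Check $k=2$: $\1_{uv}^T\delta_0[K]P_1\vec{g} = \vec{g}[(v,a)]-\vec{g}[(u,a)]$, whereas $\1_{uva}^T\delta_1[K']\vec{g} = \vec{g}[(v,a)]-\vec{g}[(u,a)]+\vec{g}[(u,v)]$.) So your construction only matches $d'(t')$ if every $\vec{g}_i$ vanishes on all $(k-1)$-simplices of $K'$ avoiding $a$ — and those simplices are not acyclic, so you cannot force this by \cref{lem:zero_tree_assumption}.

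Your remark that ``no acyclicity argument is actually needed'' in this direction is therefore wrong, and in fact it is the opposite of the paper's strategy. The paper applies \cref{lem:zero_tree_assumption} with $\Lambda$ equal to the set of $(k-1)$-simplices of $K'$ \emph{containing} $a$ (which is acyclic), replacing $F'$ by a cohomologous matrix that is zero on those simplices. The relevant data then lives on the simplices \emph{avoiding} $a$, i.e., on $K_{k-1}$, and the paper takes $F''$ to be that restriction. It then shows $\norm{\1_t^T F''} = d(t)$, and — the step your proposal has no analogue of — shows $\delta_{k-1}[K]F'' = 0$ using that $d'$ vanishes on all $k$-simplices of $K'$ avoiding $a$. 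Since $K$ is a complete complex, cocycles are coboundaries, so $F'' = \delta_{k-2}[K]\cdot G$ for some matrix $G$ of $(k-2)$-chains, and $G$ is the inducing matrix. None of this is captured by a one-line application of $P_{k-1}$. To repair your proof you would need either to reintroduce the acyclicity normalization plus the cocycle-implies-coboundary step, or some other device to account for the missing $(-1)^k\vec{g}_i[t]$ term.
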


\begin{proof}
Let $(X, d)$ be a $k$-metric, and let $(X'=\{X\cup\{a\}\}, d')$ be its apex extension.  Also, let $K$ and $K'$ be the complete $(k-1)$- and $k$-simplicial complexes with vertex sets $X$ and $X'$, respectively.

First, we show $(X, d)\in {\cal C}_{k, \norm{\cdot}}^m \Rightarrow (X', d')\in {\cal C}_{k+1, \norm{\cdot}}^m$. 
Since $(X, d)\in {\cal C}_{k, \norm{\cdot}}^m$, there are $(k-2)$-chains $F = (\vec{f}_1, \ldots, \vec{f}_m)$ in $K$ such that for any $t\in K_{k-1}$, $d(t) = \norm{\1_t^T\cdot \delta_{k-2}[K]\cdot F}$.  Let $F' = L_{k-2} F$, where $L_{k-2}$ is the lift operator defined above.
We show, for any $t'\in K'_k$, $d'(t') = \norm{\1_{t'}^T\cdot\delta_{k-1}[K']\cdot F'}$, hence $(X', d')\in {\cal C}_{k+1, \norm{\cdot}}^m$. 
If $t'$ does not include $a$, then $F' = L_{k-2} F = 0$ by the definition of $L_{k-2}$.  Therefore,
$
\norm{\1_{t'}^T\cdot\delta_{k-1}[K']\cdot F'} = 0 = d'(t')
$.
Otherwise, assume without loss of generality that $t' = (x_1, \ldots, x_k, a)\in K'_k$, and let $t = (x_1, \ldots, x_k)\in K'_{k-1}$, so $d'(t') = d(t)$.  Also, we know $d(t) = \norm{\1_t^T \cdot\delta_{k-2}[K]\cdot F}$.  Finally,
\[
\1_{t'}^T\cdot \delta_{k-1}[K']\cdot F' = \1_{t'}^T\cdot \delta_{k-1}[K']\cdot L_{k-2}\cdot F = \1_{t'}^T \cdot L_{k-1} \cdot \delta_{k-2}[K] \cdot F = \1_{t}^T \cdot \delta_{k-2}[K] \cdot F,
\]
where the second equality holds by \cref{lem:Ph-commutes}, and $\1_{t'}^T \cdot L_{k-1} = \1_t$ as $L_{k-1}$ maps $\1_t$ to $\1_{t'}$ by its definition. Therefore, $\norm{\1_{t'}\cdot \delta_{k-1}[K']\cdot F'} = \norm{\1_t^T\cdot \delta_{k-2}[K] \cdot F} = d(t) = d'(t')$, as needed.\\

Second, we show $(X', d')\in {\cal C}_{k+1, \norm{\cdot}}^m \Rightarrow (X, d)\in {\cal C}_{k, \norm{\cdot}}^m$.
Since $(X', d')\in{\cal C}_{k+1, \norm{\cdot}}^m$, there are $(k-1)$-chains $F'=(\vec{f}'_1, \ldots, \vec{f}'_m)$ in $K'$ such that for any $t'\in K'_k$, $d'(t') = \norm{\1_{t'}^T\cdot \delta_{k-1}[K']\cdot F'}$. 
Let $\Lambda = \{\lambda_1, \ldots, \lambda_q\}$ be the set of all $k$-simplices in $K'$ that contain $a$, and consider any nonzero linear combination $\beta = \sum_{i=1}^{q}{c_i{\1_{\lambda_i}}}$.  Since the linear combination is nonzero there exists $1\leq j\leq q$ such that $c_j \neq 0$. Let $\lambda_j = (y_1, \ldots, y_k, a)$, and note that $\1_{y_1, \ldots, y_k}^T\cdot \partial_k[K']\cdot \beta$ is either $c_j$ or $-c_j$, as coefficients for the simplex $(y_1, \ldots, y_k)$ in $\beta$ can only be generated from applying the boundary operator to $\lambda_j$ and not any other $\lambda_i$.  Therefore, $\partial_k[K']\cdot \beta \neq 0$, $\beta$ is not a cycle.  Thus, by \cref{lem:zero_tree_assumption}, we can assume that $F'$ is zero on all simplices in $\Lambda$. Let $F'' = (\vec{f}''_1, \ldots, \vec{f}''_m)$ be the restriction of $F'$ to the $(k-1)$-simplices in $K$, that is for each $i\in[m]$, and each $(k-1)$-simplex $t=(x_1, \ldots, x_k)\in K_{k-1}$, $\1_t^T\cdot \vec{f}''_i = \1_t^T\cdot \vec{f}'_i$. So,
\[
\norm{\1_{x_1, \ldots, x_k}^T\cdot F''} =
\norm{\1_{x_1, \ldots, x_k}^T\cdot F'}  = \norm{\1_{x_1,\ldots, x_{k} a}^T\cdot \delta_{k-1}[K']\cdot F'} = 
d'(x_1,\ldots, x_{k}, a) = d(x_1,\ldots, x_{k}),
\]
where the second equality holds because $F'$ is zero on simplices that contains $a$.
To finish the proof, we show that $\vec{f}''_i$ are coboundary chains in $K$.
To that end, observe for every $x_1, \ldots, x_{k+1}\in X$,
\[
\norm{\1_{x_1, \ldots, x_{k+1}}^T\cdot \delta_{k-1}[K]\cdot F''} =
\norm{\1_{x_1, \ldots, x_{k+1}}^T\cdot \delta_{k-1}[K']\cdot F'}  = 
d'(x_1,\ldots, x_{k+1}) = 0,
\]
by the definition of $d'$.  So, $\delta_{k-1}[K]\cdot F'' = 0$, in particular $\delta_{k-1}[K]\cdot \vec{f}''_i = 0$ for any $i\in [m]$.  Therefore, $\vec{f}''_i$ is a coboundary chain.
\end{proof}

This lemma is powerful in generalizing non-inclusion results from regular metric spaces.  
Specifically, consider ${\cal C}_{2, p}$ and ${\cal C}_{2, q}$ that are the spaces of $\ell_p$ and $\ell_q$ metrics. Given a metric space that belongs to ${\cal C}_{2, p}$ but not ${\cal C}_{2, q}$, one can build a metric space that belongs to ${\cal C}_{k, p}$ but not ${\cal C}_{k, q}$, for $k > 2$ using induction and the lemma above.  In particular, one can generalize the fact that for $p\neq \infty$, the $\ell_p$-metrics does not include all metrics (see Matousek~\cite[Section 1.5-(iv)]{matousek13}).

\begin{corollary}
\label{cor:non_coboundary_metrics_exist}
For every $p\neq \infty$ and $k\geq 2$, there exists a strong $k$-metric space that is not in ${\cal C}_{k, p}$.
\end{corollary}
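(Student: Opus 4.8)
The plan is to prove the statement by induction on $k$, with the apex extension supplying the inductive step.

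For the base case $k = 2$, recall that $\mathcal{C}_{2, p}$ is precisely the family of finite $\ell_p$ metrics and that $\mathcal{S}_2 = \mathcal{W}_2$ is the family of all finite (pseudo) metric spaces (\cref{cor:S2_equals_W2}). It is a classical fact that for every $p \neq \infty$ there is a finite metric space $(X_0, d_0)$ admitting no isometric embedding into $\ell_p$, i.e., $(X_0, d_0) \notin \mathcal{C}_{2, p}$; see, e.g., Matou\v{s}ek~\cite[Section 1.5-(iv)]{matousek13}. Being a finite metric space, $(X_0, d_0)$ is a strong $2$-metric, which establishes the base case.

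For the inductive step, suppose $(X, d)$ is a strong (pseudo) $k$-metric with $(X, d) \notin \mathcal{C}_{k, p}$, and let $(X', d')$ be its apex extension. I would then check two things. First, $(X', d') \notin \mathcal{C}_{k+1, p}$: by \cref{lem:apex_extension_coboundary_to_coboundary}, $(X', d') \in \mathcal{C}_{k+1, p}$ holds if and only if $(X, d) \in \mathcal{C}_{k, p}$ does, and the latter fails by the inductive hypothesis. Second, $(X', d')$ is a strong (pseudo) $(k+1)$-metric: since $(X, d) \in \mathcal{S}_k = \mathcal{C}_{k, \infty}$ by the Fr\'echet embedding (\cref{thm:frechet_embedding}), \cref{lem:apex_extension_coboundary_to_coboundary} applied with the $\ell_\infty$ norm yields $(X', d') \in \mathcal{C}_{k+1, \infty}$, and then \cref{lem:k_coboundary_metrics} gives $(X', d') \in \mathcal{S}_{k+1}$ — this is exactly \cref{cor:apex_extension_strong_iff_strong}. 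Iterating the step $k-2$ times starting from $(X_0, d_0)$ produces a strong (pseudo) $k$-metric not in $\mathcal{C}_{k, p}$, as desired.

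The step I expect to require the most care is the claim that the apex extension of a strong $k$-metric is again strong. We already know it is weak by \cref{lem:apex_extension_weak_to_weak}, so the content is verifying the strong simplex inequality for $(X', d')$. Above I route this through the Fr\'echet embedding theorem, but it can also be proved directly using the projection operator: for a $k$-simplex $t' = (x_1, \ldots, x_k, a)$ of $K'$ (the case $t' \not\ni a$ is trivial since then $d'(t') = 0$) and any $k$-chain $\alpha'$ with $\partial_k \alpha' = \partial_k \1_{t'}$, the chain $P_k \alpha'$ is a $(k-1)$-chain in $K$ with $\partial_{k-1}(P_k \alpha') = \partial_{k-1} \1_t$ by \cref{lem:Ph-commutes}, while $|P_k \alpha'| \cdot \vec{d} = |\alpha'| \cdot \vec{d}'$ because $d'$ vanishes off simplices containing $a$; hence the strong simplex inequality for $(X, d)$ transfers to $t'$. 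Either way, once this is in hand the induction closes and the corollary follows.
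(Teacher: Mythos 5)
Your proposal is correct and follows essentially the same route as the paper: a classical finite non-$\ell_p$ metric supplies the base case, and the apex extension (via \cref{lem:apex_extension_coboundary_to_coboundary}) supplies the inductive step. One small streamlining is available: you invoke the full Fr\'echet equivalence $\mathcal{S}_k = \mathcal{C}_{k,\infty}$ (\cref{thm:frechet_embedding}) to argue that the apex extension stays strong, but it suffices to carry the invariant ``in $\mathcal{C}_{k,\infty}$ but not in $\mathcal{C}_{k,p}$'' through the induction and then apply only the easy inclusion $\mathcal{C}_{k,\infty} \subseteq \mathcal{S}_k$ (\cref{lem:k_coboundary_metrics}) at the end, which is cleaner and avoids a forward reference to the Fr\'echet theorem in the paper's ordering. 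Your alternative direct argument via the projection operator $P_k$ is also correct (and arguably more illuminating, since it shows the strong simplex inequality transfers through the apex extension without any embedding machinery), but neither route is needed beyond the weaker inclusion just mentioned.
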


\subsection{Fr\'{e}chet Embedding for \texorpdfstring{$k$}{k}-metrics}
\label{sec:frechet_embedding}
Contrary to other $\ell_p$ spaces, it is well-known that $\ell_\infty$ contains all finite metric spaces, a result obtained via Fr\'{e}chet embedding. In this section, we show that this result generalizes to strong $k$-metrics.  More specifically, we show that any strong $n$-point $k$-metric belongs to ${\cal C}_{k, \infty}^m$ where $m = \binom{n}{k}$.

The embedding works as follows.
Let $(X, d)$ be a strong pseudo $k$-metric. For each $k$-tuple $x_1, \ldots, x_k\in X$, we show there exists a $k$-metric in ${\cal C}_{k, \infty}^1$ that preserves the value of $d$ for this specific tuple $x_1, \ldots, x_k$, and does not expand the value of $d$ for any other $k$-tuples. This is the main technical part for obtaining our embedding. Our construction is different from the well-known explicit construction of the Fr\'echet embedding for regular metrics. In particular, we build a linear program and show that its (optimal) solution has the properties that we need.

\begin{lemma}
\label{lem:contracting_3_meteric_embedding}
Let $(X, d)$ be a strong pseudo $k$-metric, and let $x_1,\ldots,x_k\in X$.  There exists a coboundary $k$-metric $(X, d')\in {\cal C}_{k, \infty}^1$ such that
\begin{enumerate}[(1)]
\item \label{item:non-expansion} (Non-expansion.) For any $y_1, \ldots, y_k \in X$, $d'(y_1, \ldots, y_k) \leq d(y_1, \ldots, y_k)$, and
\item \label{item:one-dist-preservation} (One-tuple distance preservation.) $d'(x_1,\ldots,x_k) = d(x_1,\ldots,x_k)$.
\end{enumerate}
\end{lemma}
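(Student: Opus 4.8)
The plan is to realize $d'$ as the coboundary $k$-metric induced by an optimal solution of the linear program from the introduction. Let $K$ be the complete $(k-1)$-simplicial complex on $X$, let $t = (x_1,\ldots,x_k)$, and consider
\begin{align*}
\max&\quad \1_{t}^{T}\cdot\delta_{k-2}\cdot \vec{f}\\
\text{s.t.}&\quad -\vec{d}\ \leq\ \delta_{k-2}\cdot \vec{f}\ \leq\ \vec{d},\\
&\quad \partial_{k-2}\cdot \vec{f} = 0,
\end{align*}
over $(k-2)$-chains $\vec{f}$ of $K$ (with $\leq$ and $\abs{\cdot}$ read entrywise). Since all norms on $\R^1$ coincide, a single $(k-2)$-chain $\vec{f}$ induces a $k$-metric $d'\in\mathcal{C}_{k,\infty}^1 = \mathcal{C}_{k,\abs{\cdot}}^1$ with $d'(s) = \abs{\1_s^{T}\delta_{k-2}\vec{f}}$ for $(k-1)$-simplices $s$ and $d'$ vanishing on non-distinct tuples; I will take $d'$ to be the one induced by an optimal $\vec{f}^*$. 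First I would check the LP is well posed: $\vec{f} = \vec{0}$ is feasible since $\vec{d}\geq\vec{0}$, and the feasible region is bounded. Boundedness is exactly where the equality constraint is used, and it relies on $H_{k-2}(K) = 0$ (which holds because $K$ is the $(k-1)$-skeleton of a simplex): this gives $\ker\partial_{k-2} = \im(\partial_{k-1})$, while the orthogonal decomposition $C_{k-2}(K) = \ker\delta_{k-2}\oplus\im(\partial_{k-1})$ shows that $\vec{f}\mapsto\delta_{k-2}\vec{f}$ is injective on $\im(\partial_{k-1})$; hence $\partial_{k-2}\vec{f} = 0$ together with $\abs{\delta_{k-2}\vec{f}}\leq\vec{d}$ confines $\vec{f}$ to a bounded polytope. (The same decomposition shows the equality does not move the optimum: projecting any box-feasible $\vec{f}$ onto $\im(\partial_{k-1})$ along $\ker\delta_{k-2}$ leaves $\delta_{k-2}\vec{f}$, hence the objective, unchanged, and lands in $\ker\partial_{k-2}$.) So the LP attains its maximum at a vertex $\vec{f}^*$.

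\cref{item:non-expansion} is then immediate: the box constraint reads $\abs{\1_s^{T}\delta_{k-2}\vec{f}^*}\leq d(s)$ for every $(k-1)$-simplex $s$, i.e.\ $d'(s)\leq d(s)$, and both sides vanish on non-distinct tuples. For \cref{item:one-dist-preservation} I would use LP optimality together with the strong simplex inequality, following the introduction. Writing $\delta_{k-2} = \partial_{k-1}^{T}$, the objective is $(\partial_{k-1}\1_t)^{T}\vec{f}$ and each box constraint is $-d(\tau)\leq(\partial_{k-1}\1_\tau)^{T}\vec{f}\leq d(\tau)$. Restricting attention to the subspace $\im(\partial_{k-1})$ (which contains the entire feasible region, so the equality constraint disappears there), optimality of $\vec{f}^*$ forces $\partial_{k-1}\1_t$ into the cone generated by the gradients of the tight box constraints; flipping orientations as needed so that each tight constraint reads $(\partial_{k-1}\1_{t_i})^{T}\vec{f}^* = d(t_i)$ for $i\in[r]$, this is precisely $\partial_{k-1}\1_t = \partial_{k-1}\bigl(\sum_{i=1}^{r}\beta_i\1_{t_i}\bigr)$ for some $\beta_i\geq 0$, as in \eqref{eq:intro-frechet-boundary}. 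Setting $\vec{\alpha} = \sum_i\beta_i\1_{t_i}$ (merging any repeated $t_i$), the strong simplex inequality gives $d(t)\leq\abs{\vec{\alpha}}\cdot\vec{d} = \sum_i\beta_i d(t_i)$, while, since the constraints are tight,
\[
\sum_{i=1}^{r}\beta_i d(t_i) = \sum_{i=1}^{r}\beta_i(\partial_{k-1}\1_{t_i})^{T}\vec{f}^* = (\partial_{k-1}\1_t)^{T}\vec{f}^* = \1_t^{T}\delta_{k-2}\vec{f}^* \leq \abs{\1_t^{T}\delta_{k-2}\vec{f}^*} = d'(t) \leq d(t).
\]
Hence every inequality is an equality, so $d'(t) = d(t)$, as needed.

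I expect the bookkeeping around boundedness and the optimality cone to be the only real obstacle: one must notice that $H_{k-2}(K) = 0$ is exactly what makes the added equality both sufficient to bound the feasible region and harmless for the optimum, and one must reduce the Farkas/cone condition to the clean form \eqref{eq:intro-frechet-boundary} by working inside $\im(\partial_{k-1})$ so that the multipliers of the equality constraints drop out. Everything else --- feasibility, the non-expansion bound, and the closing chain of inequalities --- is routine. It is worth emphasizing, as the introduction does, that the first inequality in the displayed chain uses the \emph{strong} simplex inequality, and this is the step that fails for a general weak $k$-metric. (For $k = 2$ the argument collapses to the classical line embedding $y\mapsto d(x_1,y)$.)
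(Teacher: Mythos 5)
Your proof is correct and follows essentially the same route as the paper: set up the same LP, take an optimal vertex $\vec{f}^*$, apply the Farkas/cone condition after flipping orientations so the tight constraints read $(\partial_{k-1}\1_{t_i})^{T}\vec{f}^* = d(t_i)$, and close with the strong simplex inequality and the chain of (in)equalities. You flesh out the paper's terse remarks about why the equality constraint $\partial_{k-2}\vec{f}=\vec{0}$ bounds the feasible region without changing the optimum (via $H_{k-2}(K)=0$ and the orthogonal decomposition of $C_{k-2}(K)$), but the underlying argument is the same.
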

\begin{proof}
Let $K$ be the complete $(k-1)$-simplicial complex with vertex set $X$.  We show how to compute a $(k-2)$-chain $\vec{f}$ such that $\vec{d}' = |\delta_{k-2}\cdot\vec{f}|$ has the required properties of this lemma.

To satisfy \cref{item:non-expansion} of the lemma, we need 
\[
|\delta_{k-2}\cdot\vec{f}| \leq \vec{d} \Longleftrightarrow 
-\vec{d} \leq \delta_{k-2}\cdot\vec{f} \leq \vec{d}.
\]
Here, we have two parallel hyperplane for each $(k-1)$-simplex, so we get $\binom{n}{k}$ non-parallel constraints.
Note that any two $(k-2)$-chains $\vec{f}$ and $\vec{f'}$ with the same cyclic part realize the same metrics as $\delta_{k-2}\cdot(\vec{f} - \vec{f'}) = 0$.
So, we can further reduce our search space for $\vec{f}$ that give rise to metrics with \cref{item:non-expansion}, by adding the extra constraint that $\vec{f}$ is a cocycle (it is orthogonal to the cycle space, which is equal to the boundary space in a complete complex), that is $\partial_{k-2} \vec{f} = 0$.
These two sets of constraints together specify a polytope $P$ within the space of all $(k-2)$-chains.
Since, $\vec{d}$ is non-negative and the boundary of the zero vector is the zero vector, $P$ contains the zero vector, so it is not empty. 
To obtain an $\vec{f}$ that satisfies \cref{item:one-dist-preservation} as well, we solve the following linear program for $\vec{f}$ in the space of $(k-2)$-chains in $K$.
\begin{align*}
&\max~\left(\1_{t}^T\cdot\delta_{k-2}\cdot \vec{f}\right) \\
& \text{s.t.~}-\vec{d} \leq \delta_{k-2}\cdot \vec{f} \leq \vec{d} \\
&~~~~~~~\partial_{k-2}\cdot \vec{f} = 0,
\end{align*}
where $t = (x_1, \ldots, x_k)$.
Intuitively, we would like to maximize the coboundary value on $t$ while preserving \cref{item:non-expansion} for all simplices.  We show that a certain point in $P$ that maximizes the coboundary value at $t$ satisfies \cref{item:one-dist-preservation}.  This is sufficient to prove the lemma as every point in $P$ satisfies \cref{item:non-expansion}.

Let $\vec{f}$ be an optimal \emph{vertex} solution of our linear program; such a vertex solution exists since any vector in the kernel of $\partial_{k-2}$ is a linear combination of coboundary vectors.
Consider a maximal number of independent conditions in the linear program that are tight at $\vec{f}$, suppose these conditions hold at $(k-2)$-simplices $\{t_1, t_2, \ldots, t_r\}$, for which at least one of the corresponding conditions in the linear program is tight.  Note that each of these simplices contributes two conditions to the linear program, namely for each $i\in[r]$, we have $\1_{t_i}^T \cdot \delta_{k-2}\cdot \vec{f}\leq d(t_i)$ and $-\1_{t_i}^T\cdot \delta_{k-2}\cdot \vec{f}\leq d(t_i)$.  Without loss of generality we can assume that all of our tight conditions are of the first type, if not we change the orientation of the simplices to make it so.

Note $\1_t^T\cdot(\delta_{k-2}\cdot\vec{f}) = (\partial_{k-1}\cdot\1_t)^T\vec{f}$. Therefore, $\partial_{k-1}\cdot\1_t$ is the objective vector of our linear program. Also, since $\partial_{k-2}\partial_{k-1} = \vec{0}$, the vector $\partial_{k-1}\cdot\1_t$ and the vectors $\partial_{k-1}\cdot \1_{t_i}$ for $i\in[r]$ are in the subspace specified by $\partial_{k-2} \vec{f} = \vec{0}$.
Since $\vec{f}$ is optimal, $\partial_{k-1}\cdot\1_t$ is in the cone defined by $\partial_{k-1}\cdot \1_{t_i}$ for $i\in[r]$.  So, $\partial_{k-2} \cdot \1_t$ is a non-negative linear combination of these vectors, that is
\begin{equation}
\label{eqn:partial_expand}
\partial_{k-1}\cdot \1_t
= \sum_{i = 1}^{r}{\beta_i (\partial_{k-1}\cdot \1_{t_i})}
= \partial_{k-1}\cdot\left(\sum_{i = 1}^{r}{\beta_i \1_{t_i}}\right),
\end{equation}
with all non-negative $\beta_i$'s.  Since $d$ is a strong $k$-metrics, we have
\begin{equation}
\label{eqn:dt_leq}
d(t) = \1_t \cdot \vec{d} \leq \sum_{i = 1}^{r} |\beta_i| \cdot d(t_i) = \sum_{i = 1}^{r} \beta_i \cdot d(t_i) \ \text{.}
\end{equation}
On the other hand,
\begin{equation}
\label{eqn:dt_geq}
(\partial_{k-1}\cdot \1_t)^T\vec{f} = \1_t^T\cdot(\delta_{k-2}\cdot \vec{f}) \leq d(t)
\end{equation}
as $\vec{f}$ is a feasible solution of the linear program, and that for all $i\in[r]$,
\begin{equation}
\label{eqn:tight_dts}
(\partial\cdot \1_{t_i})^T\vec{f} = d(t_i)
\end{equation}
because these are tight conditions.  So, 
\begin{align*}
\sum_{i = 1}^{r}{\beta_i d(t_i)} &\geq d(t) &&\text{(see \cref{eqn:dt_leq})} \\
&\geq (\partial_{k-1}\cdot \1_t)^T\vec{f} &&\text{(see \cref{eqn:dt_geq})}\\
&= (\sum_{i = 1}^{r}{\beta_i (\partial_{k-1}\cdot \1_{t_i})})^T\vec{f} &&\text{(see \cref{eqn:partial_expand})}\\
&= \sum_{i = 1}^{r}{\beta_i \left((\partial_{k-1}\cdot \1_{t_i})^T\vec{f}\right)} &&\text{}\\
&= \sum_{i = 1}^{r}{\beta_i d(t_i)}
&&\text{(see \cref{eqn:tight_dts})}
\end{align*}
But, since the first and last term are the same, the inequalities in the middle have to be equalities. Therefore, $d(t) = (\partial_{k-1}\cdot \1_t)^T\vec{f}$ as desired.
\end{proof}

Now, we are ready to prove the main theorem of this section.  Specifically, for $k=2$, our theorem implies that any $n$-point metric space is isometrically embeddable into $\ell_\infty^m$ with $m = \binom{n}{2}$, which is similar to the Fr\'echet embedding, but embeds into $m = O(n^2)$ dimensions rather than $m = O(n)$. 
\begin{theorem}
\label{thm:frechet_embedding}
For any integer $k \geq 2$, any $n$-point strong pseudo $k$-metric belongs to ${\cal C}_{k, \infty}^{\binom{n}{k}}$. Therefore, ${\cal C}_{k, \infty} = \S_k$.
\end{theorem}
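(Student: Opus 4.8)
The plan is to prove the two inclusions ${\cal C}_{k,\infty} \subseteq \S_k$ and $\S_k \subseteq {\cal C}_{k,\infty}^{\binom{n}{k}}$ separately. The first is immediate: it is exactly \cref{lem:k_coboundary_metrics} specialized to $\norm{\cdot} = \norm{\cdot}_\infty$, so nothing new is required. All the work is in the reverse inclusion, which follows the outline of the classical Fr\'echet embedding: build one ``line embedding'' per simplex that preserves that simplex's value without expanding anything, then concatenate.

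Concretely, let $(X, d)$ be an $n$-point strong pseudo $k$-metric and let $K$ be the complete $(k-1)$-simplicial complex on $X$. I would enumerate the $(k-1)$-simplices of $K$ (equivalently, the $k$-subsets of $X$) as $t_1, \ldots, t_m$ with $m = \binom{n}{k}$. For each $t_i$, \cref{lem:contracting_3_meteric_embedding} supplies a coboundary $k$-metric $(X, d'_i) \in {\cal C}_{k, \infty}^1 = {\cal C}_{k, \abs{\cdot}}^1$ that satisfies non-expansion, $d'_i(s) \leq d(s)$ for every $(k-1)$-simplex $s$, and one-tuple preservation, $d'_i(t_i) = d(t_i)$. (Since $d$ and each $d'_i$ are permutation invariant, treating the preserved value as attached to the $(k-1)$-simplex rather than to an ordered tuple is harmless, so one simplex per $k$-subset suffices.) Each $d'_i$ is induced by a single $(k-2)$-chain $\vec{f}_i$, a column vector indexed by $K_{k-2}$, with $d'_i(s) = \abs{\1_s^T \cdot \delta_{k-2} \cdot \vec{f}_i}$.

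Next I would assemble the $n_{k-2} \times m$ matrix $F = (\vec{f}_1, \ldots, \vec{f}_m)$ and let $(X, \hat d) \in {\cal C}_{k, \infty}^m$ be the coboundary $k$-metric it induces. For any $(k-1)$-simplex $s$, the row $\1_s^T \cdot \delta_{k-2} \cdot F$ has $i$-th coordinate $\1_s^T \cdot \delta_{k-2} \cdot \vec{f}_i$, so
\[
\hat d(s) = \norm{\1_s^T \cdot \delta_{k-2} \cdot F}_\infty = \max_{i \in [m]} \abs{\1_s^T \cdot \delta_{k-2} \cdot \vec{f}_i} = \max_{i \in [m]} d'_i(s).
\]
By non-expansion each term is at most $d(s)$, and choosing the index $i$ with $t_i = s$ gives a term equal to $d(s)$; hence $\hat d(s) = d(s)$. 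For non-distinct tuples both $\hat d$ and $d$ vanish by definition. Thus $\hat d = d$, so $(X, d) \in {\cal C}_{k, \infty}^{\binom{n}{k}}$, and combined with \cref{lem:k_coboundary_metrics} this yields ${\cal C}_{k, \infty} = \S_k$.

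The one place I would be careful is the concatenation step: checking that stacking the one-dimensional inducing chains $\vec{f}_i$ as the columns of a single matrix $F$ really produces a coboundary $k$-metric in the sense of \cref{def:coboundary_metrics} with the $\ell_\infty$ norm, and that the $\ell_\infty$ norm of the resulting row vector is precisely the coordinatewise maximum of the one-dimensional values (so that ``preserve at least once, never expand'' translates into an exact identity). This is the $k$-ary analogue of the familiar fact that concatenating non-expanding line embeddings---one per pair of points---gives an isometric embedding into $\ell_\infty$; everything else reduces cleanly to \cref{lem:contracting_3_meteric_embedding} and \cref{lem:k_coboundary_metrics}, so I expect no genuine obstacle beyond this bookkeeping.
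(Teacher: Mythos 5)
Your proposal is correct and follows essentially the same approach as the paper: for each $(k-1)$-simplex apply \cref{lem:contracting_3_meteric_embedding} to get a one-dimensional non-expanding coboundary metric preserving that simplex's value, stack the inducing $(k-2)$-chains as columns of a matrix $F$, and observe that the $\ell_\infty$ norm picks out the coordinatewise maximum so that $\hat d = d$. (Your explicit appeal to \cref{lem:k_coboundary_metrics} for the ``$\subseteq$'' direction, and your insertion of the absolute value inside the max, are both minor tidyings of what the paper leaves implicit; the concatenation bookkeeping you flagged is exactly what the paper does and raises no obstacle.)
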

\begin{proof}
Let $(X, d)$ be a strong pseudo $k$-metric, and let $K$ be a complete simplex with vertex set $X$.
By \cref{lem:contracting_3_meteric_embedding}, for any $(k-1)$-simplex $\lambda$ in $K$, there exists a coboundary $k$-metric $(X, d_\lambda)\in{\cal C}_{k, \infty}^1$ such that (1) $d_\lambda(t)\leq d(t)$ for any $(k-1)$-simplex $t$, and (2) $d_\lambda(\lambda) = d(\lambda)$.  Since $(X, d_\lambda)\in{\cal C}_{k, \infty}^1$, there exists $(k-2)$-chain $\vec{f}_\lambda$ such that $\vec{d}_\lambda = \norm{\delta\cdot \vec{f}_\lambda}_\infty$.

Now, let $F = \left(\vec{f}_\lambda\right)_{\lambda\in K_{k-1}}$ be a matrix specified by its $\binom{n}{k}$ columns, and let for any $(k-1)$-simplex $t$, $d'(t) = \norm{\1_t^T\cdot\delta_{k-2}\cdot F}_\infty$.
We show that $d = d'$. For any $(k-1)$-simplex $t$, we have
\[
\1_t^T\cdot \delta_{k-1}\cdot F = \left(
    \1^T_t\cdot \delta_{k-1}\cdot\vec{f}_\lambda
\right)_{\lambda\in K_{k-1}},
\]
therefore,
\[
d'(t) = \norm{
    \1_t^T\cdot \delta_{k-1}\cdot F}_\infty = \max_{\lambda\in K_{k-1}}{\left(
    \1_t^T\cdot \delta_{k-1}\cdot\vec{f}_\lambda
\right)} = \max_{\lambda\in K_{k-2}}{\left(
    d_\lambda(t)
\right)}.
\]
But, we know that $d_\lambda(t) \leq d(t)$ for all $\lambda\in K_{k-1}$, and that $d_t(t) = d(t)$.  Thus, 
$
\max_{\lambda\in K_{k-2}}{\left(
    d_\lambda(t)
\right)} = d(t).
$
Hence, $d' = d$, as desired.
\end{proof}

We now get two corollaries of \cref{thm:frechet_embedding} (and in particular its implication that ${\cal C}_{k,\infty} = {\cal S}_k$).
First, by combining \cref{thm:frechet_embedding} with the guarantee about apex extensions of coboundary $k$-metrics in \cref{lem:apex_extension_coboundary_to_coboundary} we get the following.
\begin{corollary}
\label{cor:apex_extension_strong_iff_strong}
For every integer $k \geq 2$, the apex extension of a $k$-metric is in ${\cal S}_{k+1}$ if and only if the $k$-metric is in ${\cal S}_{k}$.    
\end{corollary}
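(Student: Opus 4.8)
The plan is to derive this corollary purely by combining two results already established above: the Fr\'echet-type identity ${\cal C}_{k,\infty} = {\cal S}_k$ (\cref{thm:frechet_embedding}), valid for every $k \geq 2$, and the fact that apex extension preserves membership in coboundary classes (\cref{lem:apex_extension_coboundary_to_coboundary}), specialized to the norm $\norm{\cdot}_\infty$. No new structural argument is needed; the content is a short chain of equivalences together with a bit of bookkeeping.

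Concretely, let $(X,d)$ be a $k$-metric and let $(X',d')$ be its apex extension. First I would record that $(X',d')$ is a weak pseudo $(k+1)$-metric by \cref{lem:apex_extension_weak_to_weak}, so that asking whether it lies in ${\cal S}_{k+1}$ is meaningful and \cref{thm:frechet_embedding} applies to it. For the forward direction, assume $(X,d) \in {\cal S}_k$. By \cref{thm:frechet_embedding}, ${\cal S}_k = {\cal C}_{k,\infty} = \bigcup_{m \in \Z^+} {\cal C}_{k,\infty}^m$, so $(X,d) \in {\cal C}_{k,\infty}^m$ for some $m$. Applying \cref{lem:apex_extension_coboundary_to_coboundary} with the norm $\norm{\cdot}_\infty$ then gives $(X',d') \in {\cal C}_{k+1,\infty}^m \subseteq {\cal C}_{k+1,\infty} = {\cal S}_{k+1}$, using \cref{thm:frechet_embedding} once more. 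The converse runs the same chain backwards: if $(X',d') \in {\cal S}_{k+1} = {\cal C}_{k+1,\infty}$, then $(X',d') \in {\cal C}_{k+1,\infty}^m$ for some $m$, and the ``only if'' half of \cref{lem:apex_extension_coboundary_to_coboundary} (again with $\norm{\cdot}_\infty$) yields $(X,d) \in {\cal C}_{k,\infty}^m \subseteq {\cal C}_{k,\infty} = {\cal S}_k$.

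I do not expect a genuine obstacle; the two points to be careful about are minor. First, both ${\cal C}_{k,\infty}$ and ${\cal C}_{k+1,\infty}$ are unions over all dimensions $m$, whereas \cref{lem:apex_extension_coboundary_to_coboundary} is stated for a fixed $m$ (and preserves it), so the argument must be funneled through a single $m$ on each side rather than working with the unions directly. Second, one should observe that \cref{lem:apex_extension_coboundary_to_coboundary} is a genuine biconditional, so the same lemma handles both implications. It is worth emphasizing in the write-up that it is precisely the coincidence ${\cal S}_k = {\cal C}_{k,\infty}$ that allows a statement about coboundary metrics to be promoted to one about strong $k$-metrics here.
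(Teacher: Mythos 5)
Your proposal is correct and follows exactly the paper's approach: the paper derives this corollary by combining \cref{thm:frechet_embedding} (which gives ${\cal C}_{k,\infty} = {\cal S}_k$) with \cref{lem:apex_extension_coboundary_to_coboundary} specialized to $\norm{\cdot}_\infty$. Your careful handling of the dimension $m$ (funneling through a single $m$ on each side rather than working with the unions directly) is a useful bit of bookkeeping that the paper leaves implicit.
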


Second, by \cref{lem:weak-strong-3-metric} there exists a weak $3$-metric that is not a strong $3$-metric.
Combining this with \cref{lem:apex_extension_weak_to_weak,cor:apex_extension_strong_iff_strong} and using an induction argument we get the following.

\begin{corollary} \label{cor:weak-k-not-strong-k}
For every integer $k \geq 3$, there exists a (weak) pseudo $k$-metric space that is not a strong pseudo $k$-metric space, i.e., $\S_{k} \subsetneq \W_k$.
\end{corollary}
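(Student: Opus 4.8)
The plan is to prove the statement by induction on $k$. Recall first that $\S_k \subseteq \W_k$ for every $k \geq 2$ by \cref{cor:strong-k-metric-is-k-metric}, so to get the strict inclusion it suffices to exhibit, for each $k \geq 3$, a single weak pseudo $k$-metric space that fails the strong simplex inequality. The base case $k = 3$ is exactly \cref{lem:weak-strong-3-metric}, which constructs a weak $3$-metric space $(X_3, d_3)$ (the subdivided-triangle example, weight $1$ on the seven sub-triangles and weight $10$ elsewhere) with $(X_3, d_3) \in \W_3 \setminus \S_3$.

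For the inductive step, suppose we already have a weak pseudo $k$-metric space $(X_k, d_k) \in \W_k \setminus \S_k$. Form its apex extension $(X_{k+1}, d_{k+1})$ by adjoining a fresh apex vertex $a$, as in the definition preceding \cref{lem:apex_extension_weak_to_weak}. First I would invoke \cref{lem:apex_extension_weak_to_weak} to conclude that $(X_{k+1}, d_{k+1})$ is again a weak pseudo $(k+1)$-metric, so $(X_{k+1}, d_{k+1}) \in \W_{k+1}$. Next I would apply \cref{cor:apex_extension_strong_iff_strong}, which states that the apex extension lies in $\S_{k+1}$ if and only if the original $k$-metric lies in $\S_k$; since $(X_k, d_k) \notin \S_k$ by the inductive hypothesis, this forces $(X_{k+1}, d_{k+1}) \notin \S_{k+1}$. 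Hence $(X_{k+1}, d_{k+1}) \in \W_{k+1} \setminus \S_{k+1}$, which advances the induction, and combined with $\S_k \subseteq \W_k$ this yields $\S_k \subsetneq \W_k$ for all $k \geq 3$.

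I do not expect a genuine obstacle here: all of the real work is upstream, in the combinatorial construction of the base example (\cref{lem:weak-strong-3-metric}), the verification that apex extension preserves the weak simplex inequality (\cref{lem:apex_extension_weak_to_weak}), and—most substantially—\cref{cor:apex_extension_strong_iff_strong}, whose proof routes through the Fr\'echet-type identity $\C_{k,\infty} = \S_k$ (\cref{thm:frechet_embedding}) together with the coboundary apex-correspondence (\cref{lem:apex_extension_coboundary_to_coboundary}). The only point requiring care when writing the argument is to set up the induction cleanly and to use the correct direction of the ``if and only if'' in \cref{cor:apex_extension_strong_iff_strong}, namely that $(X_k, d_k) \notin \S_k$ implies its apex extension is not in $\S_{k+1}$. (If one wanted a self-contained alternative avoiding apex extension, one could instead directly generalize the construction of \cref{lem:weak-strong-3-metric}, assigning weight $1$ to the $(k-1)$-simplices of a triangulation of a $(k-1)$-simplex whose $1$-skeleton is $K_{k+1}$-free and a large weight to all other $(k-1)$-simplices, but the inductive apex-extension route is shorter.)
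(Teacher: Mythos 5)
Your proposal is correct and follows exactly the route the paper takes: it explicitly cites the combination of \cref{lem:weak-strong-3-metric} (base case), \cref{lem:apex_extension_weak_to_weak} (apex extension preserves membership in $\W$), and \cref{cor:apex_extension_strong_iff_strong} (apex extension preserves non-membership in $\S$), assembled by induction on $k$. You have also used the correct direction of the iff, so there is nothing to fix.
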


We remark that \cref{cor:weak-k-not-strong-k} shows that the restriction of the Fr\'{e}chet embedding result in~\cref{thm:frechet_embedding} to strong $k$-metrics (and not all weak $k$-metrics) is necessary and not just an artifact of our proof techniques. We again emphasize this as motivation for the definition of strong $k$-metrics.

\subsection{Relating Coboundary \texorpdfstring{$k$}{k}-metrics using Norm Embeddings}
\label{sec:norm-embeddings-to-coboundary}

\color{black}
We now give a technique for generalizing linear norm embeddings to embeddings between coboundary $k$-metric spaces.
Somewhat more precisely, the following proposition says that if there exists a (family of) linear embeddings from $(\R^{m}, \norm{\cdot}_p)$ to $(\R^{m'}, \norm{\cdot}_q)$ that are norm-preserving to within a $1 \pm \eps$ factor on a fixed vector with good probability, then there exists an embedding of $\C_{k, p}^m$ into $\C_{k, q}^{m'}$ that preserves $k$-distances to with a factor of $1 \pm \eps$. 
\begin{proposition} \label{prop:norm-embeddings-coboundary}
Let $k, m, m', n \in \Z^+$, $p, q \in [1, \infty]$, and $\eps \geq 0$. Suppose that there exists a distribution $\mathcal{D}$ of linear maps $R \in \R^{m' \times m}$ such that for every $\vec{x} \in \R^m$,
\begin{equation} \label{eq:norm-embedding-prob}
\Pr[
    (1-\eps) \cdot \norm{\vec{x}}_p \leq \norm{R\cdot \vec{x}}_q \leq (1+\eps) \cdot \norm{\vec{x}}_p] 
    \geq 1 - \eta
\end{equation}
for $\eta = \eta(\eps, k, m, m', n) < 1/\binom{n}{k}$.
Then for any $n$-point $C_{k, p}^m$ space $(X, d)$ there exists an $n$-point $\C_{k, q}^{m'}$ space $(X, d')$ such that for any $t = (x_1, \ldots, x_k) \in X^k$,
\begin{equation} \label{eq:coboundary-embedding-distortion}
(1-\eps) \cdot d(t) \leq d'(t) \leq (1+\eps) \cdot d(t) \ \text{.}
\end{equation}
\end{proposition}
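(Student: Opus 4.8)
The plan is to obtain the target $\C_{k,q}^{m'}$ space by post-composing (the matrix inducing) $(X,d)$ with a single good draw of the random linear map. Since $(X,d)\in\C_{k,p}^m$, Definition~\ref{def:coboundary_metrics} gives an $n_{k-2}\times m$ matrix $F$ whose columns are $(k-2)$-chains of the complete $(k-1)$-complex $K$ on $X$ and with $d(t)=\norm{\1_t^T\cdot\delta_{k-2}\cdot F}_p$ for every $(k-1)$-simplex $t$. For any $R\in\R^{m'\times m}$ set $F_R:=F\cdot R^T$, an $n_{k-2}\times m'$ matrix. Each column of $F_R$ is a linear combination of columns of $F$, hence is itself a $(k-2)$-chain of $K$, so $F_R$ induces a $\C_{k,q}^{m'}$ space, which I will call $(X,d_R)$, with $d_R(t)=\norm{\1_t^T\cdot\delta_{k-2}\cdot F_R}_q$.

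The key identity is that $R$ commutes past the coboundary structure by associativity of matrix multiplication: writing $\vec{v}_t:=(\1_t^T\cdot\delta_{k-2}\cdot F)^T\in\R^m$ (so that $d(t)=\norm{\vec{v}_t}_p$), we have
\[
\1_t^T\cdot\delta_{k-2}\cdot F_R \;=\; \big(\1_t^T\cdot\delta_{k-2}\cdot F\big)\cdot R^T \;=\; (R\vec{v}_t)^T,
\]
and hence $d_R(t)=\norm{R\vec{v}_t}_q$. This reduces the proposition to finding one $R$ in the support of $\mathcal{D}$ that simultaneously satisfies $(1-\eps)\norm{\vec{v}_t}_p\le\norm{R\vec{v}_t}_q\le(1+\eps)\norm{\vec{v}_t}_p$ for all $(k-1)$-simplices $t$ of $K$.

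To produce such an $R$, apply the hypothesis \eqref{eq:norm-embedding-prob} with $\vec{x}=\vec{v}_t$ for each $t$ and take a union bound. For $R\sim\mathcal{D}$ the two-sided bound fails for a fixed $t$ with probability at most $\eta$, so it fails for some $t\in K_{k-1}$ with probability at most $|K_{k-1}|\cdot\eta=\binom{n}{k}\eta<1$, using $\eta<1/\binom{n}{k}$. Therefore some $R$ works for all $t$ at once; fix it and set $d':=d_R$. This yields \eqref{eq:coboundary-embedding-distortion} for every tuple $t=(x_1,\dots,x_k)$ with distinct entries, and for tuples with a repeated entry both $d(t)$ and $d'(t)$ equal $0$, so the inequality holds trivially; permutation invariance of coboundary metrics means that checking one orientation per simplex suffices.

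I do not expect a genuine obstacle here: the argument is essentially ``apply the random linear map to the chain matrix and union bound over simplices.'' The only points requiring care are (i) confirming that $F_R=FR^T$ is a legitimate inducing matrix, i.e., that its columns remain $(k-2)$-chains (immediate, since the $(k-2)$-chains form a vector space), and (ii) the transpose/dimension bookkeeping in the identity $\1_t^T\delta_{k-2}FR^T=(R\vec{v}_t)^T$, which works because $R^T$ multiplies $F$ on the ambient-coordinate side while $\delta_{k-2}$ acts on the simplicial side. The quantitative hypothesis $\eta<1/\binom{n}{k}$ is precisely what makes the union bound over the $\binom{n}{k}$ simplices of $K$ go through.
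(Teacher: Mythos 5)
Your proof is correct and follows essentially the same approach as the paper's: form $F' = FR^T$, observe $d'(t) = \norm{R\vec{v}_t}_q$, and union bound over the $\binom{n}{k}$ simplices using $\eta < 1/\binom{n}{k}$. The extra remarks you include (columns of $FR^T$ remain $(k-2)$-chains; degenerate tuples give $0=0$) are implicit in the paper but useful bookkeeping.
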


\begin{proof}
By definition of $\C_{k,p}^m$, there exists a matrix $F = (\vec{f}_1, \ldots, \vec{f}_m) \in \R^{\binom{n}{k-1} \times m}$ that induces $(X, d)$. I.e., $F$ is a matrix whose rows are indexed by simplices $t = (x_1, \ldots, x_k)$ and whose columns $\vec{f}_i$ are $(k-2)$-chains such that $d(t) = \norm{\vec{y}_t}_p$ for $\vec{y}_t := (\vec{e}_t^T \delta_{k-2} F)^T$.
Sample $R \sim \mathcal{D}$ and let $F' := F R^T \in \R^{\binom{n}{k-1} \times m'}$.
We claim that with positive probability $F'$ induces a $\C_{k,q}^{m'}$ metric $(X, d')$ satisfying the condition in \cref{eq:coboundary-embedding-distortion} for all $t = (x_1, \ldots, x_k) \in X^k$, and note that this claim implies the proposition.

We now prove the claim. 
By definition, $(X, d')$ satisfies $d'(t) = \norm{R \cdot \vec{y}_t}_q = \norm{(\vec{e}_t^T \delta_{k-2} F')^T}_q$. Furthermore, by \cref{eq:norm-embedding-prob} we have that for any fixed $(k-1)$-simplex $t = (x_1, \ldots, x_k) \in X^k$,
\[
\Pr_{R \sim \mathcal{D}}[
    (1-\eps) \cdot \norm{\vec{y}_t}_p 
    \leq \norm{R\cdot \vec{y}_t}_q 
    \leq (1+\eps) \cdot \norm{\vec{y}_t}_p] 
    \geq 1 - \eta \ \text{.}
\]
The claim follows by taking a union bound over all $\binom{n}{k}$ such simplices $t$ consisting of distinct points in $X$ with standard orientation and using the fact that $\eta < 1/\binom{n}{k}$.

\end{proof}

We next give two corollaries of \cref{prop:norm-embeddings-coboundary} corresponding to the Johnson-Lindenstrauss lemma for dimension reduction in $\ell_2$ and norm embeddings from $\ell_2$ to $\ell_p$.
We start by giving the version of the Johnson-Lindenstrauss lemma (dimension reduction using random Gaussian projections) given in Matou\v{s}ek~\cite[Lemma 2.3.1]{matousek13}.
\begin{theorem}[Johnson-Lindenstrauss lemma]
\label{thm:jl-lemma}
Let $m, m' \in \Z^+$, and $\varepsilon\in(0,1)$. 
There exists an (efficiently sampleable) distribution of linear maps $R:\R^m\rightarrow \R^{m'}$ such that for every $\vec{x} \in \R^m$ we have
\[
\Pr[
    (1-\eps) \cdot \norm{\vec{x}}_2 \leq \norm{R\cdot \vec{x}}_2 \leq (1+\eps) \cdot \norm{\vec{x}}_2]\geq 1 - 2^{-c\varepsilon^2 m'} \ \text{,}
\]
where $c>0$ is a constant.
\end{theorem}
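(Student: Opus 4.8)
This is the classical Johnson--Lindenstrauss lemma, and the plan is to prove it via random Gaussian projections, following the standard argument (see, e.g.,~\cite{matousek13}). First I would reduce to the case $\norm{\vec{x}}_2 = 1$: the event in the statement is unchanged under rescaling $\vec{x} \mapsto c\vec{x}$ since $R$ is linear, and $\vec{x} = \vec{0}$ is trivial. Then I would take the distribution to be $R = \tfrac{1}{\sqrt{m'}} G$, where $G \in \R^{m' \times m}$ has i.i.d.\ standard Gaussian entries; this is clearly efficiently sampleable. Fixing a unit vector $\vec{x}$, each coordinate of $G\vec{x}$ is a linear combination $\sum_j G_{ij} x_j$ of independent Gaussians, hence distributed as $N(0, \norm{\vec{x}}_2^2) = N(0,1)$, and the coordinates are independent across the rows $i$. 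Therefore $\norm{R\vec{x}}_2^2 = \tfrac{1}{m'} Y$, where $Y = \sum_{i=1}^{m'} Z_i^2$ is a $\chi^2_{m'}$ random variable built from i.i.d.\ standard Gaussians $Z_i$; in particular $\mathbb{E}[\norm{R\vec{x}}_2^2] = 1$.

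It then suffices to show $\Pr[\,|Y - m'| \le \eps m'\,] \ge 1 - 2^{-c\eps^2 m'}$, since $\norm{R\vec{x}}_2^2 \in [1-\eps, 1+\eps]$ forces $\norm{R\vec{x}}_2 \in [\sqrt{1-\eps}, \sqrt{1+\eps}] \subseteq [1-\eps, 1+\eps]$ for $\eps \in (0,1)$. For the chi-squared tail I would use the Chernoff/moment-generating-function bound: from $\mathbb{E}[e^{tY}] = (1-2t)^{-m'/2}$ (valid for $t < 1/2$), applying Markov's inequality to $e^{tY}$ and to $e^{-tY}$ and optimizing $t$ in each case gives $\Pr[Y > (1+\eps)m'] \le \bigl((1+\eps)e^{-\eps}\bigr)^{m'/2}$ and $\Pr[Y < (1-\eps)m'] \le \bigl((1-\eps)e^{\eps}\bigr)^{m'/2}$. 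Bounding $\ln(1+\eps) - \eps$ and $\ln(1-\eps) + \eps$ by $-\Theta(\eps^2)$ via their second-order Taylor expansions, each tail is $e^{-\Omega(\eps^2 m')}$, and a union bound yields $\Pr[|Y - m'| > \eps m'] \le 2 e^{-\Omega(\eps^2 m')}$. Absorbing the factor $2$ into the constant in the exponent (the claimed bound is vacuous when $2^{-c\eps^2 m'} \ge 1$, so one may assume $\eps^2 m'$ is large) then finishes the proof.

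The argument is entirely standard, so there is no genuine obstacle; the only mildly delicate points are (i) getting the chi-squared exponent to be of order $\eps^2 m'$ rather than a weaker rate, which needs the second-order Taylor estimates for $\ln(1 \pm \eps)$ rather than just $\ln(1+u) \le u$, and (ii) converting the factor-$2$ loss in the union bound into the precise stated form $1 - 2^{-c\eps^2 m'}$, which is handled by the trivial observation that the inequality carries content only once $\eps^2 m'$ exceeds an absolute constant.
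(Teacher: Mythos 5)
The paper does not prove this statement; it cites it directly as Matou\v{s}ek~\cite[Lemma 2.3.1]{matousek13} and uses it as a black box in \cref{cor:jl-coboundary-metrics}. Your proof is the standard Gaussian-projection argument (reduce to unit $\vec{x}$, observe $\norm{R\vec{x}}_2^2$ is a normalized $\chi^2_{m'}$, and bound its tails via the moment generating function with second-order Taylor estimates for $\ln(1\pm\eps)$), which is correct and is essentially the proof that appears in the cited reference, so there is nothing to compare against within the paper itself.
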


From this, we get an analog of the Johnson-Lindenstrauss lemma for coboundary $k$-metric spaces.

\begin{corollary} \label{cor:jl-coboundary-metrics}
For every $n$-point $\C_{k,2}^m$ space $(X, d)$ and $\eps \in (0, 1)$, there exists a $\C_{k,2}^{m'}$ space $(X, d')$ such that
\[
(1-\eps) \cdot d(t) \leq d'(t) \leq (1+\eps) \cdot d(t) \ \text{,}
\]
where $m' = c' k \log n/\eps^2$ for some absolute constant $c' > 0$.
\end{corollary}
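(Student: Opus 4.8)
The plan is to derive \cref{cor:jl-coboundary-metrics} as an immediate consequence of \cref{prop:norm-embeddings-coboundary}, instantiated with $p = q = 2$, where the required distribution of linear maps is furnished by the Johnson--Lindenstrauss lemma (\cref{thm:jl-lemma}). So essentially no new ideas are needed beyond the two results already in hand; the only real work is choosing the parameters correctly.

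First I would fix the target dimension $m'$ (to be pinned down momentarily) and apply \cref{thm:jl-lemma} to obtain an efficiently sampleable distribution $\mathcal{D}$ of linear maps $R : \R^m \to \R^{m'}$ such that every $\vec{x} \in \R^m$ satisfies
\[
\Pr_{R \sim \mathcal{D}}\bigl[(1-\eps)\norm{\vec{x}}_2 \le \norm{R \cdot \vec{x}}_2 \le (1+\eps)\norm{\vec{x}}_2\bigr] \ge 1 - 2^{-c \eps^2 m'},
\]
for an absolute constant $c > 0$. This is exactly the hypothesis \cref{eq:norm-embedding-prob} of \cref{prop:norm-embeddings-coboundary} with $p = q = 2$ and failure probability $\eta = 2^{-c\eps^2 m'}$, which depends only on $\eps$ and $m'$.

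It then remains only to choose $m'$ so that $\eta < 1/\binom{n}{k}$, the single condition \cref{prop:norm-embeddings-coboundary} requires. Using the elementary bound $\binom{n}{k} \le n^k$, it suffices to ensure $2^{-c\eps^2 m'} < n^{-k}$, i.e., $c \eps^2 m' > k \log_2 n$; taking $m' = \lceil c' k \log n / \eps^2 \rceil$ for a sufficiently large absolute constant $c'$ (depending only on $c$ and the logarithm base) achieves this. \cref{prop:norm-embeddings-coboundary} then produces an $n$-point $\C_{k,2}^{m'}$ space $(X, d')$ with $(1-\eps) d(t) \le d'(t) \le (1+\eps) d(t)$ for every $t = (x_1, \ldots, x_k) \in X^k$, which is the claim. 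I do not expect any genuine obstacle here: the structural content lives entirely in \cref{prop:norm-embeddings-coboundary} and the JL lemma, and the only thing to verify is the routine inequality $\log_2\binom{n}{k} \le k\log_2 n$ needed to absorb the union-bound cost over all $\binom{n}{k}$ simplices into the stated dimension bound $m' = O(k\log n/\eps^2)$.
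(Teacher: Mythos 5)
Your proposal is correct and follows exactly the paper's approach: invoke the Johnson--Lindenstrauss lemma (\cref{thm:jl-lemma}) to supply the distribution of linear maps required by \cref{prop:norm-embeddings-coboundary} with $p=q=2$, and use $\binom{n}{k}\le n^k$ to verify that $m' = O(k\log n/\eps^2)$ makes the failure probability $\eta = 2^{-c\eps^2 m'}$ small enough for the union bound. There is no substantive difference from the paper's proof.
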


\begin{proof}
The corollary follows by combining \cref{prop:norm-embeddings-coboundary,thm:jl-lemma}.
In particular, we note that $\binom{n}{k} \leq n^k$, and so for a sufficiently large constant $c' > 0$, $2^{-c \eps^2 m'} = 2^{-c c' \log(n^k)} < 1/\binom{n}{k}$.
\end{proof}

We next give a result about embedding $\ell_2$ spaces into $\ell_p$ spaces from~\cite{FigielLM77}, building on work of~\cite{Dvoretzky60}. See also~\cite[Theorem 3.2]{conf/stoc/RegevR06}.

\begin{theorem}[{Embedding $\ell_2$ into $\ell_p$;~\cite{FigielLM77}}] \label{thm:l2-into-lp}
Let $p$ be a real number satisfying $1 \leq p < \infty$, let $m \in \Z^+$, and let $\eps > 0$. There exists a distribution $\mathcal{D}$ of (efficiently sampleable) linear maps $R \in \R^{m' \times m}$ such that with probability at least $1 - 2^{-cm}$ it holds that for all $\vec{x} \in \R^m$,
\[
(1 - \eps) \cdot \norm{\vec{x}}_2 \leq \norm{R \cdot \vec{x}}_p \leq (1 + \eps) \cdot \norm{\vec{x}}_2 \ \text{,}
\]
where $c > 0$ is a constant and $m'$ is the smallest integer satisfying
\[
m' \geq
\begin{cases}
\frac{m}{\eps^2} & \text{if $1 \leq p < 2$ ,} \\
\big(\frac{m}{\eps^2 p}\big)^{p/2} & \text{if $2 \leq p < \infty$ .}
\end{cases}
\]
\end{theorem}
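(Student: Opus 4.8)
The plan is to recall the classical random-projection proof of Figiel--Lindenstrauss--Milman; nothing new is needed beyond specializing it to our setting. The distribution $\mathcal{D}$ will be a fixed rescaling of a Gaussian matrix: let $G \in \R^{m' \times m}$ have i.i.d.\ standard normal entries and set $R = \gamma_p^{-1} (m')^{-1/p} G$, where $\gamma_p := (\mathbb{E}\,|g|^p)^{1/p}$ with $g \sim N(0,1)$, chosen so that $\mathbb{E}\norm{R\vec{x}}_p^p = \norm{\vec{x}}_2^p$ for every fixed $\vec{x}$. (Subgaussian entries work just as well, which is what makes $\mathcal{D}$ efficiently sampleable.)

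First I would establish the single-vector concentration bound. For a fixed unit vector $\vec{x}$, the coordinates of $G\vec{x}$ are i.i.d.\ $N(0,1)$, so $\norm{R\vec{x}}_p^p$ is an average of the $m'$ i.i.d.\ nonnegative variables $\gamma_p^{-p}|g_i|^p$, each of mean $1$. The goal is a tail bound of the form $\Pr[\,|\norm{R\vec{x}}_p^p - 1| > \eps'\,] \le 2\exp(-c_p\, \eps'^2 m')$ for an appropriate $\eps' \asymp \eps$ and a constant $c_p$ depending only on $p$; taking $p$-th roots turns this into the two-sided estimate $(1-\eps)\norm{\vec{x}}_2 \le \norm{R\vec{x}}_p \le (1+\eps)\norm{\vec{x}}_2$. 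When $1 \le p < 2$ the variable $|g|^p$ is light-tailed (it has a finite moment generating function), so a standard Bernstein/Chernoff argument gives $c_p = \Theta(1)$. When $p \ge 2$ the MGF of $|g|^p$ diverges, so one instead uses truncation or the method of moments, carefully tracking how the moments of $|g|^p$ grow with $p$ (a ratio of Gamma functions); this is what degrades the effective rate and ultimately forces the larger dimension in the statement.

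Second, I would promote the single-vector bound to a statement uniform over all of $\R^m$ by a net argument: fix a $\delta$-net $N$ of the Euclidean sphere $S^{m-1}$ with $\delta = \Theta(\eps)$ and $|N| \le (3/\delta)^m$, take a union bound over $N$, and then use the usual successive-approximation trick (expand an arbitrary unit vector as a geometric series of net points and apply the near-isometry termwise, together with a high-probability bound on the operator norm of $R$) to extend the $(1\pm\eps/2)$ guarantee from $N$ to the whole sphere, hence by homogeneity to all of $\R^m$; a constant rescaling of $\eps$ absorbs the loss. The dimension $m'$ must be large enough that the union-bound failure probability $|N|\cdot 2\exp(-c_p\eps'^2 m')$ is at most $2^{-cm}$, i.e.\ $c_p\,\eps'^2 m' \gtrsim m\log(1/\eps)$; plugging in $c_p = \Theta(1)$ for $p<2$ gives $m' = \Theta(m/\eps^2)$, while the weaker $p \ge 2$ estimate yields the $(m/(\eps^2 p))^{p/2}$ bound claimed. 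The main obstacle is precisely the $p \ge 2$ regime: once $p>2$ the variable $|g|^p$ has no finite exponential moment, so the clean Bernstein bound is unavailable and one must run a more delicate moment/truncation computation, and it is that computation (together with the $(3/\eps)^m$ size of the net) that pins down the value of $m'$.
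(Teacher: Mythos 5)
This statement is a \emph{cited} theorem (Figiel--Lindenstrauss--Milman~1977, see also~\cite{conf/stoc/RegevR06}); the paper quotes it as a known result and gives no proof of its own. So there is no ``paper's proof'' to compare against, only the question of whether your sketch would actually produce the stated bounds. The $1 \le p < 2$ half is fine: each $\gamma_p^{-p}|g_i|^p$ is sub-exponential, Bernstein gives single-vector concentration with rate $\Theta(\eps^2 m')$, and a net argument finishes (the extra $\log(1/\eps)$ coming from an $\eps$-fine net is a genuine nuisance, but it can be absorbed into the constant $c$ in the $2^{-cm}$ failure probability, so it does not contradict the claimed $m'$).

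The gap is in the $p \ge 2$ case, which you yourself flag as ``the main obstacle'' and then handle by gesturing at ``truncation or the method of moments.'' That is not a proof plan, and I do not see how that route reproduces the specific threshold $m' \ge (m/(\eps^2 p))^{p/2}$. Concretely: once $p>2$ the tail of $|g|^p$ is $\exp(-\Theta(t^{2/p}))$, a stretched exponential with exponent $< 1$, so $\sum_i |g_i|^p$ has no Bernstein-type bound; the natural sum-concentration tools (Fuk--Nagaev, Rosenthal) produce polynomial-type tails and do not naturally yield a failure probability of the form $2^{-\Omega(m)}$ at $m' = (m/(\eps^2 p))^{p/2}$. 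The argument that actually produces that exponent does not go through $\norm{R\vec{x}}_p^p$ at all. It applies Gaussian concentration directly to $\norm{\cdot}_p$: for $p\ge 2$ the map $\vec{g}\mapsto\norm{\vec{g}}_p$ on $\R^{m'}$ is $1$-Lipschitz in the Euclidean norm, so $\Pr[\,|\norm{\vec{g}}_p - M_p| > t\,] \le 2e^{-t^2/2}$ where $M_p := \E\norm{\vec{g}}_p \asymp (m')^{1/p}\sqrt{p}$; setting $t=\eps M_p$ gives failure probability $\exp(-\Theta(\eps^2 p\,(m')^{2/p}))$, and the requirement that this beat the $e^{O(m)}$ net-size is exactly $(m')^{2/p} \gtrsim m/(\eps^2 p)$, i.e.\ $m'\gtrsim (m/(\eps^2 p))^{p/2}$. (The $p<2$ branch also falls out of the same Lipschitz argument with Lipschitz constant $(m')^{1/p-1/2}$, which is cleaner than the Bernstein route.) So the key missing idea in your sketch is to concentrate the $1$-Lipschitz functional $\norm{\cdot}_p$ rather than its $p$-th power; without it, the $p\ge 2$ bound is not established.
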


We remark that, unlike \cref{thm:jl-lemma}, with high probability the embedding $R$ in \cref{thm:l2-into-lp} approximately preserves norms of \emph{all} $\vec{x} \in \R^m$. 
That is, the result holds even for infinite $X \subseteq \R^m$ (including $X = \R^m$), and in particular $m'$ has no dependence on $n = \card{X}$ in the finite case.
From \cref{thm:l2-into-lp}, we get a corollary about embedding $\C_{k, 2}^m$ spaces into $\C_{k, p}^{m'}$ spaces.

\begin{corollary} \label{cor:Ck2-Ckp}
Let $p$ be a real number satisfying $1 \leq p < \infty$, let $k, m \in \Z^+$, and let $\eps > 0$.
For every $\C_{k,2}^m$ space $(X, d)$ there exists a $\C_{k,p}^{m'}$ space $(X, d')$ such that
\[
(1-\eps) \cdot d(t) \leq d'(t) \leq (1+\eps) \cdot d(t) \ \text{,}
\]
where $c > 0$ is a constant and $m'$ is the smallest integer satisfying
\[
m' \geq
\begin{cases}
\frac{m}{\eps^2} & \text{if $1 \leq p < 2$ ,} \\
\big(\frac{m}{\eps^2 p}\big)^{p/2} & \text{if $2 \leq p < \infty$ .}
\end{cases}
\]
\end{corollary}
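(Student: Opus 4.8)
The plan is to follow the same route as the proof of \cref{cor:jl-coboundary-metrics}, with \cref{thm:l2-into-lp} (embedding $\ell_2$ into $\ell_p$) playing the role that the Johnson--Lindenstrauss lemma (\cref{thm:jl-lemma}) played there. Concretely, I would start from a matrix $F = (\vec{f}_1, \ldots, \vec{f}_m) \in \R^{\binom{n}{k-1} \times m}$ that induces the given $\C_{k,2}^m$ space $(X,d)$, so that $d(t) = \norm{\vec{y}_t}_2$ with $\vec{y}_t := (\vec{e}_t^T \delta_{k-2} F)^T \in \R^m$ for every $(k-1)$-simplex $t$ (with standard orientation). I would then sample $R \in \R^{m' \times m}$ from the distribution $\mathcal{D}$ guaranteed by \cref{thm:l2-into-lp} for the stated value of $m'$, set $F' := F R^T \in \R^{\binom{n}{k-1} \times m'}$, and take $(X, d')$ to be the $\C_{k,p}^{m'}$ space induced by $F'$; the regrouping $\delta_{k-2} F' = (\delta_{k-2} F) R^T$ (exactly the computation in the proof of \cref{prop:norm-embeddings-coboundary}) then gives $d'(t) = \norm{R \vec{y}_t}_p$.

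The one genuine difference from the proof of \cref{cor:jl-coboundary-metrics} is that \cref{thm:l2-into-lp} supplies, with probability at least $1 - 2^{-cm} > 0$, a \emph{single} map $R$ that $(1 \pm \eps)$-preserves the $\ell_2$ norm of \emph{every} vector in $\R^m$ at once, not merely of one fixed vector with good probability. On this (positive-probability) event $R$ simultaneously preserves the norms of all of the vectors $\vec{y}_t$, so that $(1-\eps)\,d(t) \le d'(t) \le (1+\eps)\,d(t)$ holds for all $t$; hence such an $F'$ exists, which is the corollary. Because no union bound over the $\binom{n}{k}$ simplices is needed, $m'$ inherits the bound of \cref{thm:l2-into-lp} with no dependence on $n$. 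One could try to invoke \cref{prop:norm-embeddings-coboundary} verbatim, but that proposition asks for failure probability below $1/\binom{n}{k}$, whereas here it is cleaner to observe directly that $2^{-cm} < 1$ already suffices, since the good event handles every $\vec{y}_t$ simultaneously.

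I do not expect a real obstacle: the argument is a direct composition of an existing norm-embedding theorem with the coboundary-metric machinery already developed. The only points requiring minor care are the bookkeeping that the columns of $F' = F R^T$ are still $(k-2)$-chains (being linear combinations of the columns of $F$), so $F'$ legitimately induces a coboundary $k$-metric, and the verification that $\norm{R \vec{y}_t}_p = \norm{\vec{e}_t^T \delta_{k-2} F'}_p$ — both of which are already carried out in the proof of \cref{prop:norm-embeddings-coboundary} and go through unchanged.
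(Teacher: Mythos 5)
Your proposal is correct and follows the same route as the paper, which proves the corollary by "combining \cref{prop:norm-embeddings-coboundary,thm:l2-into-lp}." You go one step further, however, and catch a small but real subtlety that the paper's one-line proof elides: a literal application of \cref{prop:norm-embeddings-coboundary} requires the per-vector failure probability $\eta$ to satisfy $\eta < 1/\binom{n}{k}$, but the $\eta = 2^{-cm}$ supplied by \cref{thm:l2-into-lp} need not be that small when $n$ is large relative to $m$ (and $m'$ in the corollary has no dependence on $n$, so we cannot inflate $m'$ to fix this). Your fix is exactly right: because \cref{thm:l2-into-lp} gives a \emph{single} event of positive probability on which $R$ preserves the $\ell_2$ norm of \emph{every} vector in $\R^m$ at once, the union bound in the proof of \cref{prop:norm-embeddings-coboundary} is unnecessary, and $2^{-cm} < 1$ already guarantees the existence of a good $R$. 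The paper's remark immediately after \cref{thm:l2-into-lp} shows the authors understood this point, but their stated proof does not make it explicit; your version does, and is therefore the more rigorous rendering of the intended argument.
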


\begin{proof}
The corollary follows by combining \cref{prop:norm-embeddings-coboundary,thm:l2-into-lp}. 
\end{proof}

We conclude with a remark about the \emph{(non-)efficiency} of the embedding results we gave between coboundary $k$-metric spaces. First, we note that the known distributions of linear norm embeddings $R$ in \cref{thm:jl-lemma,thm:l2-into-lp} that we used are efficiently sampleable, i.e., it is possible to sample an $R$ meeting the guarantees of the theorems in time $\poly(n, m)$ (for fixed $\eps$ and $p$). However, in order to make the embedding result in \cref{prop:norm-embeddings-coboundary} (and those in \cref{cor:jl-coboundary-metrics,cor:Ck2-Ckp}) efficient, it seems necessary not just to know $(X, d)$, but to know a matrix $F$ that induces $(X, d)$.
We leave the corresponding (inverse) problem---of efficiently computing such an $F$ given a space $(X, d)$ promised to be a $\C_{k, p}^m$ space as input---as an intriguing open question.

\subsection{Hypertree \texorpdfstring{$k$}{k}-metrics}
\label{sec:tree_metrics}
In this section, we study minimum bounding chain $k$-metrics and hypertree $k$-metrics, generalizing shortest path metrics and tree metrics of graphs. Our main result in this section is a generalization of the well-known fact that any tree metric is an $\ell_1$-metric~\cite{IndykMatousek17Handbook}.

\paragraph{Minimum bounding chain $k$-metrics.}
Let $k \geq 2$ be an integer, and let $K$ be a $(k-1)$-simplicial complex with vertex set $X$ that has complete $(k-2)$-skeleton, in which all $(k-2)$-cycles are boundary cycles. Also, let $w$ be a set of non-negative weights on $(k-1)$-simplices of $K$.
Let $(X, d)$ be the $k$-metric defined as follows.  For any $x_1, \ldots, x_k\in X$,
\[
d(x_1, \ldots, x_k) = \min_{\vec{\alpha}}{\left(|\vec{\alpha}|\cdot\vec{w}\right)}
\]
where the minimum ranges over all $(k-1)$-chains $\vec{\alpha}$ such that $\partial_{k-1}\cdot\alpha = \sum_{i=1}^{k}{(-1)^{i+1}\1_{x_1 x_2 \ldots x_{i-1} x_{i+1} \ldots x_k}}$.  In this case, we call $(X, d)$ the minimum bounding chain $k$-metric of $K$. It is straightforward to check that $(X, d)$ is a strong $k$-metric. The shortest paths metric of graphs is the special case of minimum bounding chain $k$-metrics for $k=2$.

We study the minimum bounding chain metric for the case that $K$ has no $(k-1)$-cycle, in addition to the fact that all its $(k-2)$-cycles are boundary cycles.
In this case, we call $K$ a \EMPH{$(k-1)$-hypertree} and $(X, d)$ a \EMPH{hypertree $k$-metric}.  In particular, trees in graph theory are $1$-hypertrees and tree metrics are hypertree $2$-metrics.  We define ${\cal T}_k$ to be the space of all finite hypertree $k$-metrics.
It is well-known that any tree metric isometrically embeds into $\ell_1$~\cite{IndykMatousek17Handbook}.  We next generalize this result to hypertree $k$-metrics.
\begin{theorem}\label{thm:tree_metrics_are_l1}
Any hypertree $k$-metric is a coboundary $k$-metric with respect to the $\ell_1$-norm; i.e.,~${\cal T}_k \subseteq {\cal C}_{k,1}$.
\end{theorem}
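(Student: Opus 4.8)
The plan is to mimic the classical embedding of a tree metric into $\ell_1$, which introduces one coordinate per edge $e$ (recording, in a $\pm w(e)$-weighted way, which side of the cut $T \setminus e$ a vertex lies on) so that the $\ell_1$ distance between two vertices telescopes to the total weight of the unique path joining them. The higher-dimensional analog will use one coordinate per $(k-1)$-simplex $\sigma$ of the hypertree $K$, and will exploit that in a $(k-1)$-hypertree the bounding chain of a $(k-2)$-cycle is \emph{unique}. Throughout, let $K$ be a $(k-1)$-hypertree on vertex set $X$ with weights $w$ on $K_{k-1}$, and let $\bar K$ be the complete $(k-1)$-simplicial complex on $X$; note $K$ and $\bar K$ have the same (complete) $(k-2)$-skeleton.

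First I would record two consequences of the hypertree hypotheses. (a) Since $K$ has no $(k-1)$-cycles, $\partial_{k-1}[K]$ is injective; and for every tuple $t=(x_1,\dots,x_k)$ of distinct vertices the chain $\partial_{k-1}[\bar K]\cdot\1_t$ is a $(k-2)$-cycle (because $\partial\partial=0$) supported on the common $(k-2)$-skeleton, hence a boundary in $K$ by the hypertree hypothesis. So there is a \emph{unique} $\vec{\alpha}_t\in C_{k-1}(K)$ with $\partial_{k-1}[K]\cdot\vec{\alpha}_t=\partial_{k-1}[\bar K]\cdot\1_t$, and the hypertree $k$-metric value is $d(t)=|\vec{\alpha}_t|\cdot\vec{w}$. (b) Injectivity of $\partial_{k-1}[K]$ (as a matrix) is equivalent to surjectivity of its transpose $\delta_{k-2}[K]=\partial_{k-1}[K]^T$; hence for each $\sigma\in K_{k-1}$ there is a $(k-2)$-chain $\vec{h}_\sigma$ on $K$ (equivalently on $\bar K$, since the $(k-2)$-skeletons agree) with $\delta_{k-2}[K]\cdot\vec{h}_\sigma=\1_\sigma$.

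Then I would build the coboundary $\ell_1$-metric. Set $\vec{f}_\sigma:=w(\sigma)\,\vec{h}_\sigma$ for each $\sigma\in K_{k-1}$, let $F$ be the $n_{k-2}\times m$ matrix with columns $\vec{f}_\sigma$ (so $m=|K_{k-1}|\le\binom{n}{k}$), and let $(X,d')$ be the coboundary $k$-metric it induces on $\bar K$ with respect to $\norm{\cdot}_1$. For distinct $x_1,\dots,x_k$ with $t=(x_1,\dots,x_k)$, the $\sigma$-coordinate of $\1_t^T\cdot\delta_{k-2}[\bar K]\cdot F$ is
\[
\1_t^T\cdot\delta_{k-2}[\bar K]\cdot\vec{f}_\sigma
=(\partial_{k-1}[\bar K]\cdot\1_t)^T\cdot\vec{f}_\sigma
=(\partial_{k-1}[K]\cdot\vec{\alpha}_t)^T\cdot\vec{f}_\sigma
=\vec{\alpha}_t^T\cdot\delta_{k-2}[K]\cdot\vec{f}_\sigma
=w(\sigma)\cdot\vec{\alpha}_t[\sigma],
\]
using $\delta_{k-2}=\partial_{k-1}^T$, the defining identity $\partial_{k-1}[\bar K]\cdot\1_t=\partial_{k-1}[K]\cdot\vec{\alpha}_t$, and $\delta_{k-2}[K]\cdot\vec{f}_\sigma=w(\sigma)\1_\sigma$. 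Taking the $\ell_1$ norm over $\sigma\in K_{k-1}$ gives $d'(t)=\sum_{\sigma}w(\sigma)\,|\vec{\alpha}_t[\sigma]|=|\vec{\alpha}_t|\cdot\vec{w}=d(t)$, and for non-distinct arguments both sides are $0$ by definition. Hence $(X,d)=(X,d')\in{\cal C}_{k,1}$.

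I do not expect a serious obstacle here: the content is the one-line linear-algebra identity above together with the two observations (a) and (b), and the rest is bookkeeping about which complex each boundary/coboundary operator is taken in. The single point that deserves care is that $\delta_{k-2}[\bar K]\cdot\vec{f}_\sigma$ may be nonzero on $(k-1)$-simplices of $\bar K$ lying outside $K$, but those values never enter the computation: the transpose manipulation rewrites everything in terms of $\partial_{k-1}[\bar K]\cdot\1_t$ (supported on the common $(k-2)$-skeleton) and $\vec{\alpha}_t$ (supported on $K_{k-1}$). It is worth noting that the ``no $(k-1)$-cycles'' hypothesis is exactly what makes $\delta_{k-2}[K]$ surjective and the dual chains $\vec{h}_\sigma$ exist, which is why the argument is special to hypertree $k$-metrics and does not extend to arbitrary minimum bounding chain $k$-metrics.
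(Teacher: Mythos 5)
Your proposal is correct and takes essentially the same approach as the paper's proof: the paper writes $W = \diag(\vec{w}) = \delta_{k-2}[T] \cdot F$ using the fact that absence of $(k-1)$-cycles forces every $(k-1)$-chain to be a coboundary, which is precisely your construction of $F$ via the dual chains $\vec{h}_\sigma$ with $\delta_{k-2}\cdot\vec{h}_\sigma = \1_\sigma$ and $\vec{f}_\sigma = w(\sigma)\vec{h}_\sigma$. The only difference is presentational (the paper computes $\norm{\vec{\alpha}_t^T W}_1$ and then substitutes, while you compute $\1_t^T\cdot\delta\cdot F$ coordinate-by-coordinate), and your explicit remark about why the extra $(k-1)$-simplices of $\bar K$ do not interfere is a point the paper handles only implicitly via the cochain identity.
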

\begin{proof}
Let $(X, d)$ be a hypertree $k$-metric: the minimum bounding chain $k$-metric of a $(k-1)$-simplicial complex $T$ with vertex set $X$, complete $(k-2)$-skeleton, and no $(k-1)$-cycle, in which all $(k-2)$-cycles are boundary cycles. Also, let $w$ be non-negative weights on the $(k-1)$-simplices of $T$.
Further, let $K$ be the complete complex with vertex set $X$.

Let $\vec{\alpha}$ be the unique $(k-1)$-chain in $T$ such that $\partial_{k-1}[T]\cdot\vec{\alpha} = \partial_{k-1}[K]\cdot \1_t$, or equivalently, $\vec{\alpha}^T\cdot \delta_{k-2}[T] = \1_t^T\cdot \delta_{k-2}[K]$
\footnote{Note that the two spaces $C_{k-2}[K]$ and $C_{k-2}[T]$ are isomorphic as $T$ and $K$ both have the same vertex set and complete $(k-2)$-skeleton.  We implicitly use this isomorphism when we compare $(k-2)$-chains in $K$ and $T$}.
Such an $\vec{\alpha}$ exists since all $(k-2)$-cycles in $T$ are boundary cycles, and is unique since $T$ has no $(k-1)$-cycle. Therefore, by the definition of the minimum bounding chain $k$-metrics, $d = |\vec{\alpha}|\cdot \vec{w}$.

Let $W = \diag(\vec{w})$, and observe 
$|\vec{\alpha}|\cdot \vec{w} = \norm{\vec{\alpha}^T\cdot W}_1$.  Each column of $W$ is a $(k-1)$-chain, and $T$ does not have a $(k-1)$-cycle. Therefore, each column of $W$ is a coboundary $(k-1)$-chain.  So, we can write $W = \delta_{k-2}\cdot F$ for some matrix $F$ whose columns are $(k-2)$-chains. Overall,
\[
d = \norm{\vec{\alpha}^T\cdot W}_1 = \norm{\vec{\alpha}^T\cdot \delta_{k-2}\cdot F}_1.
\]
So, $(X, d) \in {\cal C}_{k,1}$, as needed.

To find out the dimension, note that the number of columns of $W$ and so $F$ equals the number of $(k-1)$-simplices of $T$, denoted $n_{k-1}$, which is $O(n^{k-1})$.
To see that, note $n_{k-1}$ is equal to the dimension of the $(k-2)$-boundary cycles, which is the same as $(k-2)$-cycles, as in $T$ these two .  This cycle space is the kernel of $\partial_{k-2}$, whose dimensionality is bounded by $O(n^{k-1})$.
\end{proof}

\section{Volume Metrics}
\label{sec:volume_metric}

We study volume metrics as intuitive examples of $k$-metrics.  They are also natural generalization of Euclidean metrics.
Therefore, it makes sense to study them with coboundary metrics under $\ell_2$ norm, the other, probably less immediate, generalization of Euclidean metrics, introduced in this paper.  We show evidence that $\ell_2$ coboundary $k$-metrics are richer than the volume $k$-metrics.  Specifically, we show that volume $k$-metric of points in $\R^m$ belong to coboundary $k$-metrics with the $2$-norm in $O(m^k)$ dimensions (\cref{sec:embedding_volume_to_coboundary_metrics}).  On the other hand, we show a coboundary $n$-point metric in ${\cal C}_{3, 2}^1$ that can not be realized as the volume metric of points in $\Omega(\log n)$ dimensions (\cref{sec:cbd_into_volume_embd}).

\subsection{High-Dimensional Volume}
\label{sec:vol_metric_def}

In this section, we review definitions and useful facts concerning high-dimensional volume.
The $k$-dimensional \EMPH{signed volume} of the $k$-simplex with vertices $x_0, x_1, \ldots x_{k}\in \R^{k}$ is defined as
\begin{equation} \label{eq:signed-vol}
\svol_k(\vec{x}_0, \vec{x}_1, \ldots, \vec{x}_{k}) := \frac{1}{k!} \cdot \det (A) \ \text{,}
\end{equation}
where $A := (\vec{x}_1 - \vec{x}_0, \ldots, \vec{x}_k - \vec{x}_0)$.
In turn, the \EMPH{volume} of this simplex is
\begin{equation} \label{eq:volume}
\vol_k(\vec{x}_0, \vec{x}_1, \ldots, \vec{x}_{k}) := |\svol_k(\vec{x}_0, \vec{x}_1, \ldots, \vec{x}_{k})| \ \text{.}
\end{equation}

More generally, for $x_0, \ldots, x_{k}\in \R^m$ (where $m$ is not necessarily $k + 1$) the volume of the simplex $\conv(x_0, x_1, \ldots, x_k)$ is equal to the square root of the determinant of the Gram matrix of $A := (\vec{x}_1 - \vec{x}_0, \ldots, \vec{x}_k - \vec{x}_0)$:
\begin{equation} \label{eq:gram-vol}
\vol_{k}(\vec{x}_0, \vec{x}_1,\ldots, \vec{x}_k) = \frac{1}{k!}\sqrt{\det(A^T A)} \ \text{.}
\end{equation}
We note that the (signed) volume of the simplex $(x_0, x_1, \ldots, x_k)$ is non-zero if and only if $A$ has full column rank.

One can also compute the volume of a $k$-simplex in $\R^m$ in terms of its projections onto axis-aligned hyperplanes using the \emph{Cauchy-Binet formula}, as we describe below (see, e.g.,~\cite{cauchy-binet} for a proof).
For a matrix $A \in \R^{m \times k}$ and sets $I \subseteq [m]$ and $J \subseteq [k]$, let $A_{I, J} \in \R^{\card{I} \times \card{J}}$ denote the submatrix of $A$ formed by the rows of $A$ indexed by $i \in I$ and columns of $A$ indexed by $j \in J$.

\begin{theorem}[Cauchy-Binet formula] \label{thm:cauchy-binet}
Let $A \in \R^{m \times k}$ for $k, m \in \Z^+$ with $k \leq m$. Then
\[
\det(A^T A) = \sum_{\substack{I \subseteq [m], \\ \card{I} = k}} \det(A^T_{[k], I} \cdot A_{I, [k]}) \ \text{.}
\]
\end{theorem}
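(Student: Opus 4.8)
The plan is to prove the identity by a direct expansion of $\det(A^T A)$ via the Leibniz formula, regrouping the resulting multi-index sum according to the image of an auxiliary index function. I would start by writing $\det(A^T A) = \sum_\sigma \sign(\sigma) \prod_{i=1}^{k} (A^T A)_{i, \sigma(i)}$, where the sum is over all permutations $\sigma$ of $[k]$, and substituting $(A^T A)_{i, j} = \sum_{\ell = 1}^{m} A_{\ell, i} A_{\ell, j}$. Expanding the product over $i \in [k]$ introduces a function $f : [k] \to [m]$ (the choice of summation index for each of the $k$ factors), and after interchanging the two finite sums one obtains
\[
\det(A^T A) = \sum_{f : [k] \to [m]} \Bigl( \prod_{i=1}^{k} A_{f(i), i} \Bigr) \det(B_f) \ \text{,}
\]
where $B_f$ is the $k \times k$ matrix whose $i$-th row is the $f(i)$-th row of $A$, and where $\det(B_f) = \sum_\sigma \sign(\sigma) \prod_{i} A_{f(i), \sigma(i)}$ is recognized as its own Leibniz expansion.

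Next I would discard the non-injective $f$: if $f$ is not injective then $B_f$ has two equal rows, hence $\det(B_f) = 0$. For injective $f$, write its image as $I = f([k]) = \{i_1 < \cdots < i_k\} \subseteq [m]$, and let $\pi \in S_k$ be the permutation defined by $f(r) = i_{\pi(r)}$. Then $B_f$ is exactly $A_{I, [k]}$ with its rows permuted by $\pi$, so $\det(B_f) = \sign(\pi) \det(A_{I, [k]})$, while $\prod_{r} A_{f(r), r} = \prod_{r} (A_{I, [k]})_{\pi(r), r}$. Summing over injective $f$ by first fixing the image $I$ and then ranging over $\pi \in S_k$ yields
\[
\det(A^T A) = \sum_{\substack{I \subseteq [m] \\ \card{I} = k}} \det(A_{I, [k]}) \sum_{\pi \in S_k} \sign(\pi) \prod_{r=1}^{k} (A_{I, [k]})_{\pi(r), r} = \sum_{\substack{I \subseteq [m] \\ \card{I} = k}} \det(A_{I, [k]})^2 \ \text{,}
\]
using the identity $\sum_{\pi} \sign(\pi) \prod_{r} N_{\pi(r), r} = \det(N)$ for any $k \times k$ matrix $N$. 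Finally, since $A_{I, [k]}$ is square and $A^T_{[k], I} = (A_{I, [k]})^T$, we have $\det(A_{I, [k]})^2 = \det\bigl((A_{I, [k]})^T A_{I, [k]}\bigr) = \det(A^T_{[k], I} \cdot A_{I, [k]})$, which is the claimed formula.

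The only delicate point will be the bookkeeping in the second step: correctly tracking the sign $\sign(\pi)$ produced when an injective $f$ is decomposed as ``choose the image $I$'' followed by ``order it via $\pi$,'' and verifying that the leftover product $\prod_r A_{f(r), r}$ reassembles into a second copy of $\det(A_{I, [k]})$ (so that the two factors combine to $\det(A_{I,[k]})^2$ rather than, say, a mixed expression). Everything else is a routine interchange of finite sums. An alternative, more conceptual route would observe that $\det(A^T A)$ is the squared norm of $\vec{a}_1 \wedge \cdots \wedge \vec{a}_k \in \Lambda^k \R^m$, where $\vec{a}_1, \ldots, \vec{a}_k$ are the columns of $A$, expand this wedge in the orthonormal basis $\{\vec{e}_{i_1} \wedge \cdots \wedge \vec{e}_{i_k}\}_{I}$ of $\Lambda^k \R^m$ whose coefficients are precisely the minors $\det(A_{I, [k]})$, and read off the identity; this is cleaner but requires setting up the basic machinery of exterior powers, whereas the Leibniz-expansion argument above is entirely self-contained.
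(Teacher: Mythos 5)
Your proof is correct. Note, however, that the paper does not itself prove the Cauchy--Binet formula: it states \cref{thm:cauchy-binet} as a known classical result and points to an external reference (``see, e.g.,~\cite{cauchy-binet} for a proof''), so there is no in-paper argument to compare against. Your Leibniz-expansion derivation is the standard elementary route, and the bookkeeping is handled correctly: the interchange of the permutation sum with the product expansion is valid, non-injective $f$ indeed yield $\det(B_f)=0$ via repeated rows, and the sign tracking in the decomposition of an injective $f$ into an image $I$ and an ordering $\pi$ is right, since $\det(B_f)=\sign(\pi)\det(A_{I,[k]})$ while $\sum_\pi \sign(\pi)\prod_r N_{\pi(r),r}=\det(N^T)=\det(N)$ reassembles the second factor, giving $\det(A_{I,[k]})^2 = \det(A^T_{[k],I}A_{I,[k]})$. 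The exterior-algebra remark at the end is also accurate and is the more conceptual standard alternative. If you wanted to specialize to what the paper actually uses, notice that the authors only invoke the case where $A$ consists of edge-difference vectors of a $k$-simplex, so that $\det(A^T A) = (k!)^2\vol_k^2$ and the formula becomes exactly \cref{cor:cauchy-binet-volume}; but your general proof of \cref{thm:cauchy-binet} is complete as written.
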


Combining \cref{thm:cauchy-binet} with \cref{eq:gram-vol}, we get as a corollary that the squared volume of a $k$-simplex $\sigma \subseteq \R^m$ is equal to the sum of the squared volumes of the $\binom{m}{k}$ projections $\pi_I(\sigma)$ for $I \subseteq [m]$, $\card{I} = k$ of $\sigma$ onto $k$-dimensional axis-aligned hyperplanes.

\begin{corollary} \label{cor:cauchy-binet-volume}
Let $k, m \in \Z^+$ with $m \geq k$, and let $\sigma \subseteq \R^m$ be a $k$-simplex. Then
\begin{equation}
\label{eq:cauchy_bin_volume}
\vol_{k}(\sigma)^2 =
    \sum_{\substack{I \subseteq [m], \\ \card{I} = k}} \vol_k(\pi_I(\sigma))^2
 = \norm{\vol_k(\pi_I(\sigma)_{I \subseteq [m], \card{I} = k})}^2
\ \text{,}
\end{equation}
where $\pi_I$ denotes orthogonal projection onto the axis-aligned hyperplane spanned by the vectors $\1_i$ for $i \in I$.
\end{corollary}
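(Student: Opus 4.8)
The plan is to read the corollary off the Cauchy--Binet formula (\cref{thm:cauchy-binet}) once it has been combined with the Gram-determinant volume formula~\eqref{eq:gram-vol}; the only substantive point is to identify the terms of the Cauchy--Binet expansion with squared volumes of coordinate projections.

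First I would fix $\sigma = \conv(\vec{x}_0, \ldots, \vec{x}_k)$ and set $A := (\vec{x}_1 - \vec{x}_0, \ldots, \vec{x}_k - \vec{x}_0) \in \R^{m \times k}$, so that by~\eqref{eq:gram-vol} we have $\vol_k(\sigma)^2 = \frac{1}{(k!)^2}\det(A^T A)$. Since $m \geq k$, \cref{thm:cauchy-binet} applies and gives
\[
\det(A^T A) = \sum_{\substack{I \subseteq [m],\\ \card{I} = k}} \det\!\big(A^T_{[k], I} \cdot A_{I, [k]}\big) = \sum_{\substack{I \subseteq [m],\\ \card{I} = k}} \det\!\big(B_I^T B_I\big),
\]
where $B_I := A_{I, [k]} \in \R^{k \times k}$ is the submatrix of $A$ keeping the rows indexed by $I$. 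The key claim is then $\det(B_I^T B_I) = (k!)^2\, \vol_k(\pi_I(\sigma))^2$ for each $I$. To see this, identify the coordinate subspace $\mathrm{span}(\{\1_i : i \in I\})$ with $\R^k$ via the obvious coordinate map; then $\pi_I(\sigma) = \conv(\pi_I(\vec{x}_0), \ldots, \pi_I(\vec{x}_k))$, and since $\pi_I$ is linear it commutes with taking differences, so the matrix of edge vectors $(\pi_I(\vec{x}_1) - \pi_I(\vec{x}_0), \ldots, \pi_I(\vec{x}_k) - \pi_I(\vec{x}_0))$ is exactly $B_I$. Applying~\eqref{eq:gram-vol} in $\R^k$ (where the Gram matrix of $B_I$ is $B_I^T B_I$) — equivalently, applying~\eqref{eq:signed-vol} directly since $B_I$ is square and $\det(B_I^T B_I) = \det(B_I)^2$ — yields $\vol_k(\pi_I(\sigma)) = \frac{1}{k!}\sqrt{\det(B_I^T B_I)}$. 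Substituting back and dividing by $(k!)^2$ gives
\[
\vol_k(\sigma)^2 = \sum_{\substack{I \subseteq [m],\\ \card{I} = k}} \vol_k(\pi_I(\sigma))^2,
\]
which is the first equality in~\eqref{eq:cauchy_bin_volume}; the second equality is just the definition of the Euclidean norm of the vector $(\vol_k(\pi_I(\sigma)))_{I \subseteq [m], \card{I} = k}$.

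I do not expect a real obstacle: the statement is essentially a repackaging of Cauchy--Binet. The one place to be careful — and worth spelling out to avoid an off-by-a-change-of-basis slip — is the identification in the last paragraph: one must check that orthogonal projection $\pi_I$ really corresponds to the row-restriction $A \mapsto A_{I,[k]}$ on edge-vector matrices (and not to some rotated copy of it), and that the ``$k$-dimensional volume of a simplex lying in an axis-aligned $k$-dimensional coordinate subspace'' computed via the Gram formula agrees with the elementary determinant formula~\eqref{eq:signed-vol} after that subspace is identified with $\R^k$. Both are routine. (The degenerate case where $A$ has rank $< k$ needs no separate treatment, since then every $B_I$ is singular and both sides vanish.)
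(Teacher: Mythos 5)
Your proof is correct and follows the same route the paper intends (the paper states this corollary as an immediate consequence of combining Cauchy--Binet with the Gram-determinant volume formula, without spelling out a separate argument). Your identification of the row-restriction $A_{I,[k]}$ with the edge-vector matrix of the projected simplex $\pi_I(\sigma)$ is exactly the point that makes the combination work, and you handle it carefully.
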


We note in passing that one could also define the ``$(k, p)$-volume'' of a $k$-simplex $\sigma \subseteq \R^m$ as
\[
\nu_{k,p}(\sigma) := \Big(\sum_{\substack{I \subseteq [m], \\ \card{I} = k}} \vol_k(\pi_I(\sigma))^p \Big)^{1/p} \ \text{,}
\]
and that these functions $\nu_{k,p}$ are also strong pseudo $(k+1)$-metrics.
For example, $\nu_{2,1}$ corresponds to adding up the areas of the projections of a triangle embedded in $\R^m$ onto axis-aligned planes.
These functions might also be of interest.

\subsection{Volume \texorpdfstring{$k$}{k}-metrics}
Now, we are ready to define volume metrics and study their relationship with coboundary metrics under the Euclidean norm.

\begin{definition}[Volume $k$-metrics]
\label{def:volume_norms}
Let $X$ be a finite set of points, and let $d$ be a function on $k$-tuples of $X$.
We say that $(X, d)$ is an $m$-dimensional \EMPH{volume metric} if there exists a map $f:X\rightarrow \R^m$, such that for any $k$ points $x_1, \ldots, x_k \in X$, we have 
\[
d(x_1, \ldots, x_k) = \vol_{k-1}(f(x_1), \ldots, f(x_k)).
\]
\end{definition}
\noindent We define the family of all $m$-dimensional volume metrics by ${\cal V}_{k}^m$. Further, we define ${\cal V}_k = \bigcup_{k\in\Z^+}{{\cal V}_k^m}$.

\label{sec:embedding_volume_to_coboundary_metrics}
We can show that if $(X,d)$ is an $m$-dimensional volume $k$-metric then it belongs to ${\cal C}_{k, 2}^{m'}$ where $m' = \binom{m}{k}$.  So, the family of coboundary metrics is equivalent or richer. The following lemma is the main technical step for proving this result.

\begin{lemma} 
\label{lem:volume_to_1d_cbd}
For any integer $k \geq 2$, the $(k-1)$-dimensional volume $k$-metric is a $1$-dimensional coboundary $k$-metric with respect to the $\ell_2$ norm, i.e.,~${\cal V}_k^{k-1}\subseteq {\cal C}_{k, 2}^1$. 
\end{lemma}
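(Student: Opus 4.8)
We want to show that the $(k-1)$-dimensional volume $k$-metric on a point set $X \subseteq \R^{k-1}$ is a $1$-dimensional coboundary $k$-metric with respect to $\ell_2 = \abs{\cdot}$ on $\R^1$. Since every coboundary $k$-metric value is just the absolute value of a single real number (the coboundary of a $(k-2)$-chain evaluated on a $(k-1)$-simplex), the goal is to exhibit a single $(k-2)$-chain $\vec{f}$ on the complete $(k-1)$-complex $K$ with vertex set $X$ such that for every $(k-1)$-simplex $t = (x_1, \ldots, x_k)$ we have $\abs{\1_t^T \cdot \delta_{k-2} \cdot \vec{f}} = \vol_{k-1}(x_1, \ldots, x_k)$. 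The natural guess, generalizing the $k = 3$ case sketched in the introduction (where $f(x, x')$ was the signed area of $(o, x, x')$), is to let $o$ be the origin and define $f$ on the $(k-2)$-simplex $(x_1, \ldots, x_{k-1})$ to be the \emph{signed} $(k-1)$-volume $\svol_{k-1}(o, x_1, \ldots, x_{k-1})$ (using \cref{eq:signed-vol} with the $k-1$ vectors $x_1, \ldots, x_{k-1} \in \R^{k-1}$). One must check this is a well-defined chain: swapping two of the $x_i$ negates the determinant, matching the sign convention for oriented simplices, so $f$ respects orientation and is a genuine $(k-2)$-chain.

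**Key steps.** First I would fix notation: let $o$ be the origin of $\R^{k-1}$, and for a $(k-2)$-simplex $\sigma = (x_1, \ldots, x_{k-1})$ set $f(\sigma) := \svol_{k-1}(o, x_1, \ldots, x_{k-1}) = \frac{1}{(k-1)!}\det(x_1, \ldots, x_{k-1})$ (columns $x_i$ as vectors in $\R^{k-1}$); verify orientation-compatibility so that $\vec{f} \in C_{k-2}(K)$. Second, compute $\1_t^T \cdot \delta_{k-2} \cdot \vec{f}$ for $t = (x_1, \ldots, x_k)$: since $\delta_{k-2} = \partial_{k-1}^T$, this equals $(\partial_{k-1} \1_t)^T \vec{f} = \sum_{i=1}^{k} (-1)^{i+1} f(x_1, \ldots, \widehat{x_i}, \ldots, x_k)$, i.e., the alternating sum $\sum_{i=1}^k (-1)^{i+1} \svol_{k-1}(o, x_1, \ldots, \widehat{x_i}, \ldots, x_k)$. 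Third — the crux — I would show this alternating sum equals, up to sign, $\svol_{k-1}(x_1, \ldots, x_k)$, the signed volume of the actual $(k-1)$-simplex; taking absolute values then gives $\vol_{k-1}(x_1, \ldots, x_k) = d(t)$, as desired. This identity is precisely the statement that the signed volume of a simplex is additive when you cone all its facets to an arbitrary apex (here $o$), with signs governed by facet orientations; it is the higher-dimensional analogue of the decomposition of a triangle's signed area via the origin illustrated in \cref{fig:area_cbd}. I would prove it by translating so that (say) $x_1$ is at the origin is \emph{not} available since $o$ is fixed, so instead argue directly: expand each $\svol_{k-1}(o, \ldots)$ as a determinant, and recognize the alternating sum as a cofactor/Laplace expansion of the $k \times k$ determinant $\det\begin{pmatrix} 1 & \cdots & 1 \\ x_1 & \cdots & x_k \end{pmatrix}$ (the homogeneous-coordinate matrix), which equals $(k-1)! \cdot \svol_{k-1}(x_1, \ldots, x_k)$ by the standard formula relating the determinant of the augmented matrix to the signed volume of the simplex $\conv(x_1, \ldots, x_k)$. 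Expanding that determinant along the first row of all-ones produces exactly $\sum_{i=1}^k (-1)^{i+1} \det(x_1, \ldots, \widehat{x_i}, \ldots, x_k) = (k-1)! \sum_{i=1}^k (-1)^{i+1} f(x_1, \ldots, \widehat{x_i}, \ldots, x_k)$.

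**The main obstacle.** The delicate part is bookkeeping the signs: one must confirm that the $(-1)^{i+1}$ factors coming from the boundary operator $\partial_{k-1}$ match the $(-1)^{i+1}$ factors from the Laplace expansion of the homogeneous-coordinate determinant along its first row, and that the orientation convention used to define $f$ on $(k-2)$-simplices is consistent throughout (the global vertex ordering fixes standard orientations, and $\svol$ with the vertices listed in that order has sign $+1$ on the reference configuration). Once the sign matching is pinned down, the identity $\sum_{i=1}^k (-1)^{i+1}\svol_{k-1}(o, x_1, \ldots, \widehat{x_i}, \ldots, x_k) = \svol_{k-1}(x_1, \ldots, x_k)$ is immediate from cofactor expansion, and taking absolute values finishes the proof: $d(t) = \vol_{k-1}(x_1,\ldots,x_k) = \abs{\1_t^T \delta_{k-2} \vec{f}}$, so $(X,d) \in \C_{k,2}^1$. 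Everything else (well-definedness of $\vec{f}$, linearity of $\delta_{k-2}$) is routine given the preliminaries.
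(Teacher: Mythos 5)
Your proposal is correct and follows essentially the same route as the paper: both define the $(k-2)$-chain by assigning to $(x_1,\ldots,x_{k-1})$ the signed volume $\svol_{k-1}(\vec{0}, x_1,\ldots,x_{k-1})$, compute $\1_t^T\delta_{k-2}\vec{f}$ as the alternating sum of these signed volumes via $\delta_{k-2} = \partial_{k-1}^T$, and identify that sum with the Laplace expansion of the homogeneous-coordinate determinant $\det\begin{pmatrix}1&\cdots&1\\x_1&\cdots&x_k\end{pmatrix}$ along its all-ones row, which equals $(k-1)!\cdot\svol_{k-1}(x_1,\ldots,x_k)$. The only cosmetic difference is that the paper re-derives the latter identity by subtracting the first column from the others, while you invoke it as standard; the sign-bookkeeping concern you flag is handled identically in the paper.
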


\begin{proof}
We show any $(X, d) \in {\cal V}_k^{k-1}$ belongs to ${\cal C}_{k, 2}^{1}$. 
By the definition of volume $k$-metrics, there exists a function $g:X\rightarrow \R^{k-1}$, such that for any $x_1, \ldots, x_k$, $d(x_1, \ldots, x_k) = \vol_{k-1}(g(x_1), \ldots, g(x_k)) = \vol_{k-1}(\vec{y}_1, \ldots, \vec{y}_k)$, where $\vec{y}_i := g(x_i)$ for $i\in [k]$.

Let $K$ be a complete $(k-1)$-dimensional complex with vertex set $X$. For any $(k-2)$-simplex $s = (x_1, \ldots, x_{k-1})$,
\[
f(s) = f((x_1, \ldots, x_{k-1})) := \text{svol}_{k-1}(\vec{0}, \vec{y}_1, \ldots, \vec{y}_k) \ \text{.}
\]

Note, this is a well-defined chain, as swapping two columns changes the sign of the signed volume and the sign of the corresponding oriented simplex.
We show that $|\1_t^T\cdot \delta_{k-2}\cdot \vec{f}| = d(t)$ for every $(k-1)$-simplex $t = (x_1, \ldots, x_k)$ in $K$.
\begin{align*}
\1_t^T\cdot\delta_{k-2} \vec{f} = (\partial_{k-1}\cdot\1_t)^T\cdot \vec{f}
&= (\partial_{k-1}\cdot\1_{x_1,\ldots, x_{k}})^T\cdot \vec{f} \\
&= \left(\sum_{i=1}^{{k}}{(-1)^{i+1}\1^T_{x_1 \ldots x_{i-1} x_{i+1} \ldots x_{k}}}\right) \cdot \vec{f} \\
&= \sum_{i=1}^{{k}}{(-1)^{i+1}f((x_1, \ldots, x_{i-1}, x_{i+1}, \ldots, x_{k}))} \\
&= \sum_{i=1}^{{k}}{
    (-1)^{j+1}\cdot\text{svol}_{k-1}\left(\vec{0}, \vec{y}_1, \ldots, \vec{y}_{j-1}, \vec{y}_{j+1}, \ldots, \vec{y}_{k}\right)
} \\
&= \frac{1}{(k-1)!}\cdot\sum_{i=1}^{{k}}{
    (-1)^{j+1}\cdot\det\left(\vec{y}_1, \ldots, \vec{y}_{j-1}, \vec{y}_{j+1}, \ldots, \vec{y}_{k})\right)
} \\
&= \frac{1}{(k-1)!}\cdot \det \begin{pmatrix}
        1 & 1 &\ldots  & 1 \\
        \vec{y}_1 & \vec{y}_2 & \ldots & \vec{y}_k
    \end{pmatrix}
\end{align*}
The last equality follows from expanding the determinant using the first (i.e., all-$1$s) row of the matrix on the right-hand side.
Furthermore, by subtracting the first column from all other columns, we have
\begin{align*}
\det \begin{pmatrix}
    1 & \ldots & 1  & 1 \\
    \vec{y}_1 &  \ldots & \vec{y}_{k-1} & \vec{y}_k
\end{pmatrix} & = \det \begin{pmatrix}
    1 & 0 & \ldots & 0 \\
    \vec{y}_1 & \vec{y}_2-\vec{y}_1 & \ldots & \vec{y}_k - \vec{y}_1
\end{pmatrix} \\ &= \det\left(
    \vec{y}_2-\vec{y}_1, \ldots, \vec{y}_k-\vec{y}_1
\right) \\ &= (k-1)!\cdot\text{svol}_{k-1}(\vec{y}_1, \ldots, \vec{y}_k).
\end{align*}
So, we conclude $|\1_t^T\cdot\delta \vec{f}| = \text{vol}_{k-1}(\vec{y}_1,\ldots, \vec{y}_k) = d(t)$, as desired.
\end{proof}

We now show how to use \cref{lem:volume_to_1d_cbd}, which shows $\V_{k}^{k-1}\subseteq\C_{k, 2}^1$, to obtain a general embedding from $\V_k^m$ to ${\cal C}_{k, 2}^{m'}$ with $m' = \binom{m}{k-1}$.

\begin{theorem}
\label{thm:volume_to_cbd}
Any $m$-dimensional volume $k$-metric is a $\binom{m}{k-1}$-dimensional coboundary $k$-metric with respect to $\ell_2$-norm, i.e.~${\cal V}_k^m \subseteq {\cal C}_{k, 2}^{m'}$ with $m' = \binom{m}{k-1}$.  In particular, ${\cal V}_k \subseteq {\cal C}_{k, 2}$ and therefore also $\V_k \subseteq \S_k$.
\end{theorem}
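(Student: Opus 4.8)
The plan is to lift the one-dimensional case (Lemma~\ref{lem:volume_to_1d_cbd}, i.e.~$\V_k^{k-1} \subseteq \C_{k,2}^1$) to arbitrary dimension using the Cauchy--Binet decomposition of volume (Corollary~\ref{cor:cauchy-binet-volume}), following the $k=3$ sketch outlined in~\cref{sec:summary-volume}. Fix $(X,d) \in \V_k^m$, so there is a map $g : X \to \R^m$ with $d(x_1,\dots,x_k) = \vol_{k-1}(g(x_1),\dots,g(x_k))$ for all $k$-tuples; we may assume $m \geq k-1$, since otherwise $d \equiv 0$ and the claim is trivial. For each $(k-1)$-element subset $I \subseteq [m]$, let $\pi_I : \R^m \to \R^{k-1}$ be the orthogonal projection onto the coordinates indexed by $I$, and set $g_I := \pi_I \circ g$. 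Then $(X, d_I)$ with $d_I(x_1,\dots,x_k) := \vol_{k-1}(g_I(x_1),\dots,g_I(x_k))$ is a member of $\V_k^{k-1}$, so by Lemma~\ref{lem:volume_to_1d_cbd} it lies in $\C_{k,2}^1$: there is a $(k-2)$-chain $\vec f_I$ on the complete $(k-1)$-complex $K$ with vertex set $X$ with $|\1_t^T \cdot \delta_{k-2} \cdot \vec f_I| = d_I(t)$ for every $(k-1)$-simplex $t$ of $K$.

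Next I would assemble the $\binom{m}{k-1}$ chains $\vec f_I$ as the columns of a single matrix $F$, whose rows are indexed by the $(k-2)$-simplices of $K$ (note that the \emph{same} complex $K$ is used for every $I$ — this is the only place where it matters that each $d_I$ is a metric on the common point set $X$). For a fixed $(k-1)$-simplex $t = (x_1,\dots,x_k)$ with distinct vertices, the entries of the row $\1_t^T \cdot \delta_{k-2} \cdot F$ are the scalars $\1_t^T \cdot \delta_{k-2} \cdot \vec f_I$, whose absolute values are the $d_I(t)$; hence
\[
\norm{\1_t^T \cdot \delta_{k-2} \cdot F}_2 = \Bigl(\sum_{\substack{I \subseteq [m] \\ \card{I} = k-1}} d_I(t)^2\Bigr)^{1/2} = \vol_{k-1}(g(x_1),\dots,g(x_k)) = d(t) \ \text{,}
\]
where the middle equality is Corollary~\ref{cor:cauchy-binet-volume} applied (with $k$ replaced by $k-1$) to the $(k-1)$-simplex $\conv(g(x_1),\dots,g(x_k)) \subseteq \R^m$, whose $\binom{m}{k-1}$ axis-aligned $(k-1)$-dimensional projections are exactly the simplices $\conv(g_I(x_1),\dots,g_I(x_k))$ with volume $d_I(t)$. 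Together with the fact that $d$ vanishes on non-distinct $k$-tuples, this shows $F$ induces $(X,d)$ as a coboundary $k$-metric with respect to $\norm{\cdot}_2$ in $m' = \binom{m}{k-1}$ dimensions, i.e.~$(X,d) \in \C_{k,2}^{m'}$.

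The two remaining assertions are then immediate: taking the union over $m \in \Z^+$ gives $\V_k = \bigcup_m \V_k^m \subseteq \bigcup_{m'} \C_{k,2}^{m'} = \C_{k,2}$, and $\C_{k,2} \subseteq \S_k$ by Lemma~\ref{lem:k_coboundary_metrics}, whence $\V_k \subseteq \S_k$. I do not expect a genuine obstacle here, since the substantive content is already isolated in Lemma~\ref{lem:volume_to_1d_cbd} and Corollary~\ref{cor:cauchy-binet-volume}; the only things needing care are bookkeeping ones — verifying that each $d_I$ really is a legitimate volume $k$-metric on $X$ so that the chains $\vec f_I$ all live on the common complex $K$ and can be concatenated column-wise, and matching the count $\binom{m}{k-1}$ of axis-aligned $(k-1)$-dimensional projections of an $m$-dimensional simplex against the claimed target dimension $m'$.
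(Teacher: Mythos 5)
Your proof is correct and follows essentially the same route as the paper's: project onto each of the $\binom{m}{k-1}$ axis-aligned $(k-1)$-dimensional subspaces, invoke Lemma~\ref{lem:volume_to_1d_cbd} to get a $(k-2)$-chain for each projection, assemble these as columns of $F$, and combine via the Cauchy--Binet decomposition of volume (Corollary~\ref{cor:cauchy-binet-volume}). Your added remarks — handling $m < k-1$ separately, noting that all chains live on the common complex $K$, and routing $\V_k \subseteq \S_k$ through Lemma~\ref{lem:k_coboundary_metrics} — are minor bookkeeping the paper leaves implicit.
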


\begin{proof}
We show any $(X, d) \in {\cal V}_k^{m}$ belongs to ${\cal C}_{k, 2}^{m'}$. 
By the definition of volume $k$-metrics, there exists a function $g:X\rightarrow \R^{m}$, such that for any $x_1, \ldots, x_k$, $d(x_1, \ldots, x_k) = \vol_{k-1}(g(x_1), \ldots, g(x_k))$. 

Let $\Pi = \{\pi_1, \ldots, \pi_\ell\}$ be the set of $t = \binom{m}{k-1}$ orthogonal projections onto all $(k-1)$-dimensional axis-aligned hyperplanes in $\R^m$.
By \cref{lem:volume_to_1d_cbd}, for any $\pi_i\in \Pi$, there exists a $(k-1)$-chain $f_i$ such that for any $(k-1)$-simplex $t = (x_1, \ldots, x_k)$, $\norm{\1_t^T\cdot \delta \cdot \vec{f}_i} = \text{vol}_k(\pi_i(g(t)))$, which is the $k$th volume of the projection of $g(t) = (g(x_1), \ldots, g(x_k))$ under $\pi_i$.

Now, let $F = (\vec{f}_1, \ldots, \vec{f}_{\ell})$, and observe
\[
\1_t^T\cdot \delta \cdot F
= \1_t^T\cdot \delta \cdot (\vec{f}_1, \ldots, \vec{f}_\ell)
= \left(\1_t^T\cdot \delta \cdot \vec{f}_1, \ldots, \1_t^T\cdot \delta \cdot \vec{f}_\ell\right)
= \left(\vol_{k-1}(\pi_1(g(t)), \ldots, \vol_{k-1}(\pi_\ell(g(t)))\right),
\]
where the last equality follows by \cref{lem:volume_to_1d_cbd}.  Therefore, by \cref{cor:cauchy-binet-volume}, 
\[
\norm{\1_t^T\cdot \delta \cdot F}_2
= \vol_{k-1}((g(t)),
\]
as promised.
\end{proof}

\subsection{Relating Coboundary and Volume \texorpdfstring{$k$}{k}-metrics}
\label{sec:cbd_into_volume_embd}

In this section, we show three results either showing the impossibility of realizing a coboundary metric as a volume metric (i.e., the impossibility of isometrically embedding a coboundary metric space into a volume metric space), or giving a lower bound on the dimension and distortion necessary to embed a coboundary metric space into a volume metric space.
Combined with \cref{thm:volume_to_cbd}, this shows that coboundary metric spaces are \emph{strictly} more expressive than volume metric spaces. These results are formally incomparable.
The two results in \cref{sec:impossibility-coboundary-volume} show \emph{impossibility} results for isometrically embedding $k$-\emph{pseudo}metric spaces, whereas the result in \cref{sec:lower-bound-coboundary-volume} shows a dimension \emph{lower bound} for embedding \emph{non-pseudo} $k$-metric spaces with constant distortion.
(We refer the reader back to \cref{sec:summary-volume} for additional discussion.)

In what follows, we use $\conv(\vec{x}, \vec{y}, \vec{z})$ to refer to the convex hull of three points $\vec{x}, \vec{y}, \vec{z} \in \R^m$, which is simply the triangle that they span. We start with the following elementary claim relating the minimum and maximum side lengths of a triangle with the triangle's area.

\subsubsection{Impossibility Results for Embedding Coboundary \texorpdfstring{$k$}{k}-pseudometrics}
\label{sec:impossibility-coboundary-volume}

We start by showing an impossibility result for realizing a $4$-point $2$-dimensional coboundary $3$-pseudometric space as a volume $3$-pseudometric space.

\begin{theorem} \label{thm:4-point-2d-coboundary}
There exists a $4$-point $\C_{3, 2}^2$ pseudometric space that is not a $\V_3$ pseudometric space, i.e.~$\C_{3,2}\not\subseteq \V_3$.
\end{theorem}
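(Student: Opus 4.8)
The plan is to exhibit a single explicit $4$-point space and check the two required properties directly. Let $X=\{v_1,v_2,v_3,v_4\}$ and define $d$ to be $0$ on every triple that is not all-distinct, $0$ on the distinct triple $\{v_1,v_2,v_3\}$, and $1$ on each of the three remaining distinct triples $\{v_1,v_2,v_4\}$, $\{v_1,v_3,v_4\}$, $\{v_2,v_3,v_4\}$. I will show (i) that $(X,d)\in\C_{3,2}^2$, and (ii) that $(X,d)$ is not a volume $3$-pseudometric; since $\C_{3,2}^2\subseteq\C_{3,2}$, this gives $\C_{3,2}\not\subseteq\V_3$.

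For (i), I would write down an inducing matrix $F$ explicitly. Recall that $F$ has its six rows indexed by the edges of the complete $2$-complex $K$ on $X$, that each column of $F$ is a $1$-chain, and that $d(t)=\norm{\1_t^T\cdot\delta_1\cdot F}_2$, where $\1_t^T\delta_1=(\partial_2\1_t)^T$ simply reads off the signed sum of the three boundary edges of $t$. Take the rows of $F$ for the edges $v_1v_2$, $v_1v_3$, $v_1v_4$, $v_2v_3$ to be $\vec 0$, and take the rows for $v_2v_4$ and $v_3v_4$ to be two vectors $\vec p,\vec q\in\R^2$ with $\norm{\vec p}_2=\norm{\vec q}_2=\norm{\vec p-\vec q}_2=1$, i.e.\ two vertices of a unit equilateral triangle whose third vertex is the origin. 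A direct computation of boundaries then gives: the boundary sum over $\{v_1,v_2,v_3\}$ is $\vec 0$; the boundary sums over $\{v_1,v_2,v_4\}$, $\{v_1,v_3,v_4\}$, $\{v_2,v_3,v_4\}$ are $\vec p$, $\vec q$, $\vec q-\vec p$, each of $\ell_2$-norm $1$. Hence $F$ induces $(X,d)$, so $(X,d)\in\C_{3,2}^2$ (and, by \cref{lem:k_coboundary_metrics}, is in particular a strong pseudo $3$-metric, so there is nothing further to check about the $k$-metric axioms).

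For (ii), suppose toward a contradiction that $f:X\to\R^m$ realizes $(X,d)$ as a volume pseudometric, so $d(v_i,v_j,v_\ell)=\vol_2(f(v_i),f(v_j),f(v_\ell))$. Since $\vol_2(f(v_1),f(v_2),f(v_3))=0$, the points $f(v_1),f(v_2),f(v_3)$ are affinely dependent; and since every triple involving $v_4$ has positive volume, no two of $f(v_1),f(v_2),f(v_3)$ coincide and $f(v_4)$ does not lie on the line through them. Thus $f(v_1),f(v_2),f(v_3)$ are three distinct collinear points, so after relabeling within $\{v_1,v_2,v_3\}$ we may assume $f(v_2)$ lies strictly between $f(v_1)$ and $f(v_3)$. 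Working in the $2$-plane spanned by $f(v_1),f(v_2),f(v_3),f(v_4)$, the triangles $f(v_1)f(v_2)f(v_4)$ and $f(v_2)f(v_3)f(v_4)$ share the apex $f(v_4)$ and have bases partitioning the segment $f(v_1)f(v_3)$, so their areas add: $d(v_1,v_3,v_4)=d(v_1,v_2,v_4)+d(v_2,v_3,v_4)$, i.e.\ $1=1+1$, a contradiction. Hence $(X,d)\notin\V_3$.

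The only subtlety is the bookkeeping in step (ii) for the degenerate configurations of $f$ — coincident images among $f(v_1),f(v_2),f(v_3)$, or $f(v_4)$ lying on their common line — all of which are excluded precisely because the three $d$-values involving $v_4$ are nonzero; everything else is routine. If one prefers an abstract phrasing, the obstruction is that in any volume $3$-pseudometric in which some distinct triple has value $0$, the three triples obtained by replacing one of its points by a fourth point must satisfy ``one value equals the sum of the other two,'' which $(X,d)$ violates. I do not expect any genuine difficulty beyond this case analysis.
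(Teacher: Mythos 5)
Your proposal is correct and follows essentially the same approach as the paper: you use the same $4$-point space (zero on one distinct triple, one on the other three), the same style of equilateral-triangle edge labeling to realize it in $\C_{3,2}^2$, and the same collinearity observation for the non-$\V_3$ direction. The only cosmetic difference is that you derive the contradiction via an "areas add along a partitioned base" argument, whereas the paper writes out coordinates and invokes the algebraic fact that no three reals have all pairwise absolute differences equal and nonzero; these are the same obstruction.
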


\begin{proof}
Let $X = \set{x_1, x_2, x_3, x_4}$ and define $d$ by $d(x_1, x_2, x_3) = 0$ and $d(x_1, x_2, x_4) = d(x_1, x_3, x_4) = d(x_2, x_3, x_4) = 1$.%
\footnote{It is straightforward to check that $(X, d)$ is a $3$-pseudometric space, but it is not a $3$-metric space because $d(x_1, x_2, x_3) = 0$.}

First, we show that $(X, d)$ is a $\C_{3, 2}^2$ space.  Consider the $3$-simplex spanned by $x_1, x_2, x_3, x_4$ in standard orientation. Label each of the edges $(x_1, x_2)$, $(x_1, x_3)$, and $(x_2, x_3)$ with $(0, 0)^T$ and the edges $(x_1, x_4)$, $(x_2, x_4)$, and $(x_3, x_4)$ with $(0, 0)^T$, $(1, 0)^T$, and $(1/2, \sqrt{3}/2)^T$, respectively. One can check that the coboundary metric induced by the standard orientation of the simplex and these edge labels is equal to $d$.%
\footnote{The labels of the edges $(x_1, x_4)$, $(x_2, x_4)$, and $(x_3, x_4)$ correspod to the three vertices of an equilateral triangle in $\R^2$. One can check that this induces the correct coboundary $3$-pseudometric $d$. For example, $d(x_2, x_3, x_4) = \norm{(0, 0)^T + (1/2, \sqrt{3}/2)^T - (1, 0)^T}_2 = 1$.}

Second, we show that $(X, d)$ is not a $\V_3$ space, i.e., that $(X, d)$ is not a $\V_3^m$ space for any $m$.
To that end, suppose for contradiction that there exist $\vec{x}_1, \vec{x}_2, \vec{x}_3, \vec{x}_4 \in \R^m$ for some $m$ (corresponding to the embedding $x_i \mapsto \vec{x}_i$ for $i = 1, 2, 3, 4$) such that $\vol_2(\conv(\vec{x}_i, \vec{x}_j, \vec{x}_k)) = d(x_i, x_j, x_k)$ for $i, j, k \in \set{1, 2, 3, 4}$.
Because $\vol_2(\conv(\vec{x}_1, \vec{x}_2, \vec{x}_3)) = 0$, we must have that $\vec{x}_1, \vec{x}_2, \vec{x}_3$ are collinear. So, assume without loss of generality that $\vec{x}_1 = (a_1, 0), \vec{x}_2 = (a_2, 0), \vec{x}_3 = (a_3, 0), \vec{x}_4 = (a_4, b) \in \R^2$ for $a_1, a_2, a_3, a_4, b \in \R$. Then using the formula for triangle area, we must have $1/2 \cdot \abs{a_1 - a_2} \cdot b = 1/2 \cdot \abs{a_1 - a_3} \cdot b = 1/2 \cdot \abs{a_2 - a_3} \cdot b = 1$, but this is impossible because no three values $a_1, a_2, a_3 \in \R$ satisfy $\abs{a_1 - a_2} = \abs{a_1 - a_3} = \abs{a_2 - a_3} > 0$.
\end{proof}

We next give an elementary proof that no five points in the plane are such that every three of them spans a unit area triangle (but that there are four points in the plane with this property). This implies that the $4$-point ``discrete $3$-metric space'' from \cref{lem:all1s-coboundary} is realizable as a $\V_3^2$ metric but that the corresponding $5$-point metric space is not.
\begin{lemma} \label{lem:unit-triangles-R2}
Let $X \subsetneq \R^2$ be such that for all distinct triples $\vec{x}, \vec{y}, \vec{z} \in X$, $\vol_2(\conv(\vec{x}, \vec{y}, \vec{z})) = 1$. Then $\card{X} \leq 4$. 
Furthermore, there is such a set $X$ with $\card{X} = 4$ satisfying this property.
\end{lemma}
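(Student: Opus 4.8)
The plan is to treat the two halves of the statement separately: first exhibit an explicit $4$-point example, then rule out any $5$-point configuration.

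For the construction, I would start from any triangle $\vec a,\vec b,\vec c\in\R^2$ with $\vol_2(\conv(\vec a,\vec b,\vec c))=1$ and adjoin the point $\vec d:=\vec a+\vec b-\vec c$ (the reflection of $\vec c$ through the midpoint of $\vec a\vec b$); concretely one can take $X=\set{(0,0),(2,0),(1,1),(1,-1)}$. All four points are distinct: $\vec d=\vec c$ would force $\vec c$ to be the midpoint of $\vec a\vec b$, contradicting the area being $1$, while $\vec d=\vec a$ or $\vec d=\vec b$ is immediately impossible. To verify the three new triples, I would write $\vol_2(\conv(\vec p,\vec q,\vec r))=\tfrac12\bigl|\det(\vec q-\vec p,\ \vec r-\vec p)\bigr|$ and use the identities $\vec d-\vec a=\vec b-\vec c$ and $\vec d-\vec b=\vec a-\vec c$: by bilinearity of the determinant and $\det(\vec u,\vec u)=0$, each of $\det(\vec b-\vec a,\vec d-\vec a)$, $\det(\vec c-\vec a,\vec d-\vec a)$, $\det(\vec c-\vec b,\vec d-\vec b)$ collapses to $\pm\det(\vec b-\vec a,\vec c-\vec a)$, so every one of the four triangles has area $1$. (This also reproves that the $4$-point discrete $3$-metric of \cref{lem:all1s-coboundary} is a $\V_3^2$ metric.)

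For the upper bound, suppose toward a contradiction that $X\subseteq\R^2$ has (at least) five points with the stated property. First note no three points of $X$ are collinear, since a collinear triple would span area $0\neq1$. Now fix two distinct points $\vec x,\vec y\in X$, let $\ell$ be the line through them, and set $a:=\norm{\vec x-\vec y}_2>0$. For every other point $\vec z\in X$, $\vol_2(\conv(\vec x,\vec y,\vec z))=\tfrac12\,a\cdot\mathrm{dist}(\vec z,\ell)=1$, so $\vec z$ lies at distance exactly $2/a$ from $\ell$, hence on one of the two lines $\ell^{+},\ell^{-}$ parallel to $\ell$ at distance $2/a$. There are three such points $\vec z_1,\vec z_2,\vec z_3$, so by pigeonhole two of them share a line; after relabelling and choosing coordinates I may assume $\vec x=(0,0)$, $\vec y=(a,0)$, $\vec z_1=(p,2/a)$, $\vec z_2=(q,2/a)$ with $p\neq q$. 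Since no three points are collinear, $\vec z_3$ is not on the line $y=2/a$, so $\vec z_3=(r,-2/a)$.

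The contradiction then falls out of three $2\times2$ determinant evaluations against $\vec x$: one computes $\vol_2(\conv(\vec z_1,\vec z_2,\vec x))=|p-q|/a$, $\vol_2(\conv(\vec z_1,\vec z_3,\vec x))=|p+r|/a$, and $\vol_2(\conv(\vec z_2,\vec z_3,\vec x))=|q+r|/a$. Setting each equal to $1$ gives $|p-q|=a$ and $|p+r|=|q+r|=a$; the last two force $p+r,q+r\in\{a,-a\}$, and they cannot have the same sign (that would give $p=q$), so they have opposite signs and $|p-q|=2a$, contradicting $|p-q|=a$ since $a>0$. Hence $\card X\le4$. I expect the main work here to be organizational rather than a real obstacle: the crux is the reduction — fixing a single pair, extracting the "two parallel lines" structure, and using pigeonhole plus non-collinearity to pin the remaining three points into a $2+1$ split — after which the closing determinant computations are routine. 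A coordinate-free version of the last step (comparing the distance from $\vec x$ to the line $\vec z_1\vec z_3$ with $|\vec z_1\vec z_3|$) is possible but messier, so I would keep explicit coordinates.
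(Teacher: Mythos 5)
Your proof is correct, but the upper-bound argument takes a genuinely different (and somewhat longer) route than the paper's. For the construction, you adjoin the fourth parallelogram vertex $\vec d = \vec a + \vec b - \vec c$ to an arbitrary area-$1$ triangle; the paper instead takes the four vertices of a square of side $\sqrt{2}$, which is the same parallelogram idea in a special case, so no real difference there. For the bound $\card{X}\leq 4$, you fix an \emph{arbitrary} pair $\vec x, \vec y \in X$, observe that every other point must lie on one of the two lines $\ell^\pm$ parallel to $\overline{\vec x \vec y}$ at distance $2/\norm{\vec x - \vec y}$, use pigeonhole plus non-collinearity to force a $2{+}1$ split across $\ell^+$ and $\ell^-$, and then close with three explicit determinant equations $|p-q| = |p+r| = |q+r| = a$ leading to $a = 2a$. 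The paper instead chooses the base pair $\vec x, \vec y$ to be an edge of the convex hull of $X$ (so that all of $X$ lies in one closed half-plane of the line $\overline{\vec x\vec y}$); this immediately confines every other point to the \emph{single} line $\ell^+$, and non-collinearity then gives $\card{X} \leq 2 + 2 = 4$ with no further computation. In short, the paper's choice of base edge eliminates the two-line case analysis and the final determinant computation entirely, while your argument trades that observation for a pigeonhole step and a short algebraic contradiction; both are complete and sound, and yours has the mild virtue of not invoking the convex hull.
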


\begin{proof}
Assume without loss of generality that $X$ is contained in the upper half-plane of $\R^2$, and that $(0, 0)^T, \ell \cdot \vec{e}_1 \in X$ for some $\ell \neq 0$. (One can enforce this assumption by noting that (i) distances are invariant under translation and rotation, and (ii) there exists a pair $\vec{x}, \vec{y} \in X$ such that all points of $X$ reside in one of the closed half planes of the $(\vec{x}, \vec{y})$ line.)
Then because all triples of points in $X$ span triangles with area $1$, all points in $X \setminus \set{(0, 0)^T, \ell \cdot \vec{e}_1}$ must lie on the horizontal line $L = \set{(x, 2/\ell)^T : x \in \R}$.
Moreover, any three collinear points in $\R^2$ span a triangle with area $0$, and so $\card{L \cap X} \leq 2$. It follows that $\card{X} \leq 4$.

On the other hand, let $X$ be the point set of four points that are the vertices of a square with side length $\sqrt{2}$.  It is easy to check that the area of any triangle whose vertices are triples in $X$ is $(\sqrt{2} \cdot \sqrt{2})/2 = 1$.
\end{proof}

We now show that ``the discrete $3$-metric'' $d$ with $d(x_i, x_j, x_k) = 1$ for distinct $x_i, x_j, x_k$ and $d(x_i, x_j, x_k) = 0$ otherwise is a coboundary metric.

\begin{lemma} \label{lem:all1s-coboundary}
For every $n \geq 3$, there is an $n$-point $3$-metric space $(X, d)$ in $\C_{3, |\cdot|}^1$ such that $d(x, y, z) = 1$ for any distinct $x, y, z \in X$.
\end{lemma}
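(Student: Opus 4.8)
The plan is to write down an explicit $1$-chain on the edges of the complete $2$-complex whose coboundary has absolute value exactly $1$ on every triangle, which by \cref{def:coboundary_metrics} immediately yields the desired $\C_{3, |\cdot|}^1$ metric. Concretely, fix $X = \{x_1, \dots, x_n\}$ together with the global vertex ordering $x_1 < \cdots < x_n$, let $K$ be the complete $2$-simplicial complex on $X$, and let $\vec{f} \in C_1(K)$ be the $1$-chain assigning the value $1$ to \emph{every} edge of $K$ in its standard orientation. (Equivalently, orient each edge from its lower-indexed endpoint to its higher-indexed endpoint; the $1$-skeleton becomes an acyclic tournament and $\vec{f}$ is its indicator chain, matching the construction sketched in \cref{sec:summary-volume}.) I would then define $(X, d)$ to be the coboundary $3$-metric induced by the single column $\vec{f}$ with respect to the norm $|\cdot|$ on $\R^1$, i.e.\ $d(x,y,z) = |\1_{xyz}^T \cdot \delta_1 \cdot \vec{f}|$ for distinct $x,y,z$ and $d = 0$ otherwise.

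Next I would evaluate $d$ on an arbitrary triple of distinct points. For $i < j < k$, using $\delta_1 = \partial_2^T$ one has $\1_{x_i x_j x_k}^T \cdot \delta_1 \cdot \vec{f} = (\partial_2 \cdot \1_{x_i x_j x_k})^T \cdot \vec{f}$, and since $\partial_2 \cdot \1_{x_i x_j x_k} = \1_{x_j x_k} - \1_{x_i x_k} + \1_{x_i x_j}$ this equals $\vec{f}[x_j x_k] - \vec{f}[x_i x_k] + \vec{f}[x_i x_j] = 1 - 1 + 1 = 1$; hence $d(x_i, x_j, x_k) = |1| = 1$. The only thing to get right is precisely this sign bookkeeping: in the boundary of any triangle exactly one of the three edges appears with coefficient $-1$, and because $\vec{f}$ assigns the same value to all three standard-oriented edges the three terms collapse to $+1$ regardless of which triangle we pick. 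For a non-standard ordering of the same three vertices the quantity $\1_{xyz}^T \cdot \delta_1 \cdot \vec{f}$ merely changes sign, so its absolute value is still $1$; alternatively, permutation invariance and the remaining $k$-metric axioms come for free from \cref{lem:k_coboundary_metrics}, which states that every coboundary $k$-metric is a strong pseudo $k$-metric.

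Finally I would observe that $(X, d)$ is not merely a pseudometric: $d$ takes the value $1 \neq 0$ on each triple of distinct points and $0$ on every other triple, so $d(x,y,z) = 0$ if and only if $x, y, z$ are not all distinct, which is exactly the defining property \cref{item:weak-repeats-zero} of a (non-pseudo) $3$-metric. This finishes the proof. I do not expect any genuine obstacle here; the entire content of the lemma is spotting that the all-ones chain in the standard orientation (equivalently, the acyclic-tournament orientation) is what forces the coboundary to equal $\pm 1$ on every $2$-simplex.
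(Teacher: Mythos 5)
Your proposal is correct and takes essentially the same approach as the paper: the all-ones $1$-chain on standard-oriented edges, the computation $\1_{x_i x_j x_k}^T \cdot \delta_1 \cdot \vec{f} = 1 - 1 + 1 = 1$ for $i < j < k$, and an appeal to \cref{lem:k_coboundary_metrics} for the remaining axioms. The one small thing you do beyond the paper's proof is to note explicitly that $d$ is a non-pseudo $3$-metric (vanishing exactly on non-distinct triples), which the lemma statement asserts but the paper's proof leaves implicit.
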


\begin{proof}
Let $X = \set{x_1, \ldots, x_n}$, and let $K$ be the complete $2$-simplicial complex with vertex set $X$.
Let $\alpha$ be the $1$-chain such that $\vec{\alpha}[x_i, x_j] = 1$ for any $1 \leq i < j \leq n$ with induced coboundary metric $d(x_i, x_j, x_k) = (\delta_1 \cdot \vec{\alpha})[x_i, x_j, x_k]$ for $x_i, x_j, x_k\in X$. 
By~\cref{lem:k_coboundary_metrics}, $d$ is a $3$-metric and hence is invariant under permutation of its arguments.
Therefore, it suffices to show that $d(x_i, x_j, x_k) = 1$ for $i < j < k$.
By the definition of the coboundary operator, for such $i, j, k$ we have
\[
d(x_i, x_j, x_k) = \abs{\vec{\alpha}[x_i, x_j] - \vec{\alpha}[x_i, x_k] + \vec{\alpha}[x_j, x_k]} = \abs{1 - 1 + 1} = 1 \ \text{,}
\]
as needed.
\end{proof}

We next show how to combine \cref{lem:unit-triangles-R2,lem:all1s-coboundary} and apex extension to get an impossibility result for isometrically embedding coboundary metric spaces into volume metric spaces.

\begin{theorem} \label{thm:6-point-1d-coboundary}
There exists a $6$-point $\C_{4, \abs{\cdot}}^1$ space that is not a $\V_4$ space, i.e.~$\C_{4,|\cdot|}\not\subseteq\V_4$.
\end{theorem}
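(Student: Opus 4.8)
The plan is to take the $5$-point ``discrete $3$-metric'' (the $3$-metric in $\C_{3, \abs{\cdot}}^1$ whose value is $1$ on every triple of distinct points, which exists by \cref{lem:all1s-coboundary} with $n = 5$) and apply \emph{apex extension} to it once to obtain a $6$-point $4$-metric. By \cref{lem:apex_extension_coboundary_to_coboundary} applied with the norm $\abs{\cdot}$, the apex extension of a $\C_{3, \abs{\cdot}}^1$ space is a $\C_{4, \abs{\cdot}}^1$ space, so membership in $\C_{4, \abs{\cdot}}^1$ is immediate. It therefore remains to show that this apex extension is \emph{not} a volume $4$-metric, i.e.,~not a $\V_4^m$ space for any $m$.

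To do this, I would suppose for contradiction that there is a map $g : X' \to \R^m$ realizing the apex extension $(X', d')$ as a volume $4$-metric, where $X' = X \cup \set{a}$ with $X = \set{x_1, \ldots, x_5}$. Write $\vec{p}_i = g(x_i)$ and $\vec{q} = g(a)$. By the definition of apex extension, the only $4$-tuples with nonzero $d'$-value are those containing $a$, and for such a tuple $d'(x_i, x_j, x_k, a) = d(x_i, x_j, x_k) = 1$ for distinct $i, j, k$. On the volume side, $\vol_3(\conv(\vec{p}_i, \vec{p}_j, \vec{p}_k, \vec{q})) = 1$ for every distinct triple $i, j, k$, while the $4$-tuple $(x_1, \ldots, x_4)$ (not containing $a$) has $d'$-value $0$, forcing $\vec{p}_1, \ldots, \vec{p}_4$ to lie in a common affine hyperplane; likewise every $4$-subset of $\set{\vec{p}_1, \ldots, \vec{p}_5}$ is affinely degenerate, so in fact all five points $\vec{p}_1, \ldots, \vec{p}_5$ lie in a common $2$-flat $H$ (any four affinely dependent points already span at most a $2$-flat, and having every $4$-subset dependent pins all five into a single plane). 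Now restrict attention to that $2$-flat: since $\vec{q}$ is at a fixed distance $h$ from $H$ (the height of each tetrahedron over its base in $H$), and $\vol_3 = \tfrac13 \cdot (\text{base area}) \cdot h$, the condition $\vol_3(\conv(\vec{p}_i, \vec{p}_j, \vec{p}_k, \vec{q})) = 1$ becomes $\vol_2(\conv(\vec{p}_i, \vec{p}_j, \vec{p}_k)) = 3/h$, a \emph{constant}, for all distinct $i, j, k$. Rescaling $H$ by a factor $\sqrt{h/3}$ gives five points in $\R^2$ every triple of which spans a unit-area triangle, contradicting \cref{lem:unit-triangles-R2}, which asserts that at most four points in the plane can have this property. (One subtlety: if $h = 0$, i.e.,~$\vec{q} \in H$, then all tetrahedra are degenerate and have volume $0 \neq 1$, an immediate contradiction; so $h > 0$.)

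The only mildly delicate step is the claim that the degeneracy conditions force all five base points into a single $2$-dimensional flat rather than merely each $4$-subset into its own hyperplane. This follows because four affinely dependent points in $\R^m$ span an affine subspace of dimension $\le 2$; if $\vec{p}_1, \ldots, \vec{p}_5$ did not all lie in a common $2$-flat, their affine hull would have dimension $3$ or $4$, but an affine hull of dimension $\ge 3$ of five points must contain an affinely independent $4$-subset (a basis-extraction argument), contradicting that every $4$-subset is dependent. Everything else is routine: membership in $\C_{4, \abs{\cdot}}^1$ is a direct invocation of \cref{lem:all1s-coboundary,lem:apex_extension_coboundary_to_coboundary}, and the reduction to the planar unit-triangle obstruction uses only the elementary base-times-height volume formula. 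I would also remark, as the paper does in \cref{sec:summary-volume}, that this result is incomparable to \cref{thm:4-point-2d-coboundary}: it needs $6$ points and arity $4$ rather than $4$ points and arity $3$, but works in ambient dimension $m = 1$ on the coboundary side.
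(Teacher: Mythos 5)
Your proposal is correct and follows essentially the same path as the paper's own proof: apply apex extension to the $5$-point discrete $3$-metric from \cref{lem:all1s-coboundary}, invoke \cref{lem:apex_extension_coboundary_to_coboundary} for membership in $\C_{4, \abs{\cdot}}^1$, then argue that a hypothetical $\V_4$ realization would force all five base points into a common $2$-flat and yield five coplanar points with all triple-areas equal, contradicting \cref{lem:unit-triangles-R2}. Your version is a touch more careful than the paper's terse write-up, spelling out the basis-extraction argument for why every $4$-subset being affinely dependent pins all five base points into a single $2$-flat, and explicitly handling the $h = 0$ edge case; the only nit is the phrase ``common affine hyperplane,'' which should read ``$2$-flat'' (as you indeed say in the next clause).
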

\begin{proof}
Let $(X=\{x_1, \ldots, x_5\}, d)$ be the $5$-point $3$-metric space such that $d(x_{i_1}, x_{i_2}, x_{i_3}) = 1$ for all distinct $i_1, i_2, i_3\in [5]$.
By \cref{lem:all1s-coboundary}, $(X, d)$ is a $\C_{3,\abs{\cdot}}^1$ space.

Let $(X' = X \cup \set{a}, d')$ be the apex extension of $(X, d)$ with apex $a$.
Specifically, (i) $d'(x_{i_1}, x_{i_2}, x_{i_3}, x_{i_4}) = 0$ for all distinct $i_1, i_2, i_3, i_4\in [5]$, and (ii) $d'(x_{i_1}, x_{i_2}, x_{i_3}, a) = 1$ for all distinct $i_1, i_2, i_3\in [5]$ (the metric is zero on $4$-tuples containing non-distinct points).
By \cref{lem:apex_extension_coboundary_to_coboundary}, $(X', d')$ is a $6$-point $\C_{4, \abs{\cdot}}^1$ space. 

We now show that $(X', d')$ is not a $\V_4$ metric. Suppose for contradiction that $(X', d')\in \V_4$, and let $f:X'\rightarrow \R^m$ be the mapping that realizes $d'$.
Because of (i), any four points in $f(x_1), \ldots, f(x_5)$ span a tetrahedron with volume $0$, and so $f(x_1), \ldots, f(x_5)$ all lie in a plane $P$.
Let $h$ be the distance of $f(a)$ from $P$. Because of (ii), the area of any triangle spanned by $f(x_{i_1}), f(x_{i_2}), f(x_{i_3})$, where $i_1, i_2, i_3\in[5]$, must be $3/h$. But, by \cref{lem:unit-triangles-R2}, there can be at most four coplanar points with this property, which is a contradiction.
\end{proof}

\subsubsection{A Lower Bound for Embedding Coboundary \texorpdfstring{$k$}{k}-metrics}
\label{sec:lower-bound-coboundary-volume}

In this subsection, we show that for every triple of distinct points in $X = \set{\vec{x}_1, \ldots, \vec{x}_n} \subseteq \R^m$ to span a triangle of area at least $c_1$ and at most $c_2$ for constants $0 < c_1 \leq c_2$ we must have $m \geq \Omega(\log n)$. From this, we conclude that the discrete $3$-metric, which is a coboundary $3$-metric (\cref{lem:all1s-coboundary}), is not embeddable into a volume $3$-metric space with $m = o(\log n)$ dimensions using constant distortion.

The idea behind the proof is as follows.
By rotation and shift invariance of distance and area, we assume without loss of generality that $\vec{0}, \ell \vec{e}_1 \in X$, where $\ell > 0$ is the maximum distance between two points in $X$, and $\vec{e}_1$ is the unit vector whose first coordinate is $1$ (with all other coordinates equal to $0$). Because each triple of points $\vec{0}, \ell \vec{e}_1, \vec{z}$ for $\vec{z} \in X \setminus \set{\vec{0}, \ell \vec{e}_1}$ spans a triangle of area between $c_1$ and $c_2$, we have that all such $\vec{z}$ lie in a cylinder of radius $2 c_2/\ell$ whose axis is the line segment $(\vec{0}, \ell \vec{e}_1)$.
Then, by applying an appropriate linear transformation $D$ to $X$ (and implicitly the cylinder containing the points in $X$), we get a constant upper bound on the distance between pairs of points in $DX$ and retain a constant lower bound on the area of triangles spanned by distinct triples of points in $DX$. From this, we then additionally get a constant lower bound on $\norm{D(\vec{x} - \vec{y})}$ for distinct $\vec{x}, \vec{y} \in X$, from which we derive an upper bound on $n$ using a packing argument.

We start with an elementary claim relating the lengths of the sides of a triangle to the area of a triangle.

\begin{claim} \label{clm:vol-versus-side-length}
Let $m \in \Z^+$, let $\vec{x}, \vec{y}, \vec{z} \in \R^m$ be distinct, let $T = \conv(\vec{x}, \vec{y}, \vec{z})$, and let \begin{align*}
    \mu^+ :=& \max \set{\norm{\vec{y} - \vec{x}}, \norm{\vec{z} - \vec{x}}, \norm{\vec{z} - \vec{y}}} \ \text{,} \\
    \mu^- :=& \min \set{\norm{\vec{y} - \vec{x}}, \norm{\vec{z} - \vec{x}}, \norm{\vec{z} - \vec{y}}}\ \text{.}
\end{align*}Then $\mu^- \geq 2 \vol_2(T)/\mu^+$.
\end{claim}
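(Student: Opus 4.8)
The plan is to reduce everything to the standard base-times-height formula for the area of a triangle, together with the observation that every altitude of a triangle is bounded above by the two sides incident to the corresponding vertex. First I would fix a side to use as the base: let the base be the \emph{longest} side, so its length is $\mu^+$, and let $h$ be the corresponding altitude (the distance from the opposite vertex to the line through the base). Then $\vol_2(T) = \tfrac12 \mu^+ h$, hence $h = 2\vol_2(T)/\mu^+$. It remains to show $\mu^- \geq h$.

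For that step, note that the foot of the altitude divides the base into two segments, and the altitude together with one of these segments and one of the two non-base sides forms a right triangle. In that right triangle the altitude $h$ is a leg and the non-base side is the hypotenuse, so $h$ is at most the length of that non-base side; in particular $h \leq \mu^+$ trivially, but more to the point $h$ is at most \emph{each} of the two non-base side lengths (apply the same right-triangle argument at each endpoint of the base). Since the two non-base sides have lengths that are both $\geq \mu^-$ by definition of $\mu^-$ only in the degenerate reading — actually what we get directly is that $h$ is at most the \emph{minimum} of the two non-base side lengths, and that minimum is at least $\mu^-$ — wait, I should be careful: $\mu^-$ is the minimum over \emph{all three} sides, so $\mu^-$ could be the base itself. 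The cleaner route: $h \leq$ (either non-base side) $\leq \mu^+$, and also, choosing instead the \emph{shortest} side as the base, the corresponding altitude $h'$ satisfies $\vol_2(T) = \tfrac12 \mu^- h'$ and $h' \leq \mu^+$ (the altitude to the shortest side is bounded by the longest side). Therefore $\vol_2(T) = \tfrac12 \mu^- h' \leq \tfrac12 \mu^- \mu^+$, which rearranges to $\mu^- \geq 2\vol_2(T)/\mu^+$, exactly the claim.

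So concretely the steps are: (1) pick the shortest side as base, with length $\mu^-$; (2) write $\vol_2(T) = \tfrac12 \mu^- h'$ where $h'$ is the altitude to that side; (3) argue $h' \leq \mu^+$ because an altitude of a triangle never exceeds the longest side (it's a leg of a right triangle whose hypotenuse is one of the other two sides, each of which is $\leq \mu^+$); (4) combine to get $\vol_2(T) \leq \tfrac12 \mu^- \mu^+$ and rearrange. One should also dispose of the degenerate case where the three points are collinear, but then $\vol_2(T) = 0$ and the inequality $\mu^- \geq 0$ is immediate, so we may assume $T$ is a genuine triangle and all the right-triangle reasoning applies. I do not expect any real obstacle here; the only thing to be careful about is step (3) — making the ``altitude $\leq$ longest side'' claim rigorous — which is a one-line appeal to the Pythagorean theorem (or to the fact that the foot of the perpendicular is the closest point on the base line to the apex, so the apex-to-foot distance is at most the apex-to-endpoint distance).
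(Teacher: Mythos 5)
Your proof is correct, but it takes a genuinely different route from the paper's. The paper relabels so that the longest and shortest sides share a vertex $\vec{x}$ (always possible, since among three sides of a triangle any two share a vertex), forms the Gram matrix $M^T M$ of the edge vectors $\vec{y}-\vec{x}$ and $\vec{z}-\vec{x}$, and bounds $\vol_2(T) = \tfrac12\sqrt{\det(M^T M)}$ by $\tfrac12\mu^+\mu^-$ using the Cauchy--Schwarz inequality applied to the cross term $\langle \vec{y}-\vec{x},\,\vec{z}-\vec{x}\rangle^2$. You instead use the synthetic base-times-height decomposition: take the shortest side as base, note that the corresponding altitude is at most either non-base side (since the foot of the perpendicular is the nearest point on the base line to the apex), hence at most $\mu^+$, and rearrange. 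Both proofs are one-liners once set up; the paper's algebraic version has the minor advantage of generalizing more cleanly to higher-dimensional volumes via Gram determinants and Hadamard-type bounds, while yours is more elementary and geometrically transparent and avoids the need to argue that the extremal sides share a vertex. Your handling of the collinear case and of obtuse triangles (the foot of the altitude may land outside the base segment) is correct, though you should present the final ``foot-of-perpendicular is the nearest point on the line'' argument rather than the earlier right-triangle phrasing, which implicitly assumes the foot lies inside the segment.
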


\begin{proof}
Assume without loss of generality that $\mu^+ = \norm{\vec{y} - \vec{x}}$ and that $\mu^- = \norm{\vec{z} - \vec{x}}$. Let $M := (\vec{y} - \vec{x}, \vec{z} - \vec{x})$. Then
\[
\vol_2(T) 
= \half \sqrt{\det(M^T M)}
= \half (\norm{\vec{y} - \vec{x}}^2 \norm{\vec{z} - \vec{x}}^2 - \iprod{\vec{y} - \vec{x}, \vec{z} - \vec{x}}^2)^{1/2}
\leq \half \norm{\vec{y} - \vec{x}} \norm{\vec{z} - \vec{x}} = \half \mu^+ \mu^- \ \text{.}
\]
The claim follows by multiplying through by $2/\mu^+$.
\end{proof}

We now prove our main theorem.

\begin{theorem} \label{thm:unit-triangles}
Let $X \subseteq \R^m$ with $n := \card{X}$ satisfying $3 \leq n < \infty$, and suppose that there exist constants $c_1, c_2$ satisfying $0 < c_1 \leq c_2$ such that $c_1 \leq \vol_2(\conv(\vec{x}, \vec{y}, \vec{z})) \leq c_2$ for all distinct triples of points $\vec{x}, \vec{y}, \vec{z} \in X$. Then $m \geq \Omega(\log n)$.
\end{theorem}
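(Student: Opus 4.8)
The plan is to follow the strategy outlined in \cref{sec:summary-volume}: use the area bounds to trap $X$ inside a thin cylinder, straighten that cylinder into a ball of constant radius by a diagonal linear map that does not shrink the area of any triangle, and then finish with a volume‑packing argument.

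First I would normalize. Scaling all of $X$ by the scalar $1/\sqrt{c_1}$ multiplies every triangle area by $1/c_1$ and the diameter by $1/\sqrt{c_1}$, and changes neither $m$ nor $n$; so I may assume $c_1 = 1$ and set $C := c_2 \geq 1$. Since $X$ is finite, the diameter $\ell := \max_{\vec x,\vec y \in X} \norm{\vec x - \vec y}$ is attained, and after a translation and rotation (isometries of $\R^m$ preserve all distances and areas) I may assume $\vec 0, \ell \vec e_1 \in X$ (note $m \geq 2$, since otherwise $X$ is collinear and no triple spans a positive‑area triangle). Applying \cref{clm:vol-versus-side-length} to any triple of distinct points of $X$ — whose longest side is at most $\ell$ and whose area is at least $c_1 = 1$ — gives $\ell \geq \mu^- \geq 2/\mu^+ \geq 2/\ell$, hence $\ell \geq \sqrt 2$. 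Next, for every $\vec z \in X \setminus \set{\vec 0, \ell \vec e_1}$ we have $\vol_2(\conv(\vec 0, \ell\vec e_1, \vec z)) = \tfrac12 \ell\, r(\vec z)$, where $r(\vec z)$ is the distance from $\vec z$ to the $\vec e_1$‑axis; since this area is at most $C$, we get $r(\vec z) \leq 2C/\ell$. Also, because $\vec 0 \in X$, every $\vec z \in X$ satisfies $\norm{\vec z} \leq \ell$, so $\abs{\iprod{\vec z}{\vec e_1}} \leq \ell$. Thus $X$ lies in the cylinder of radius $2C/\ell$ whose axis is the segment from $\vec 0$ to $\ell \vec e_1$.

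Next I would apply the linear map $D := \diag(1/\ell, \ell, \ell, \dots, \ell)$ to $X$. For every $\vec z \in X$ the point $D\vec z$ has first coordinate in $[-1,1]$ and orthogonal part of norm at most $2C$, so $DX$ is contained in a ball of radius $R := \sqrt{1 + 4C^2}$, a constant depending only on $c_1, c_2$. The crucial point — and the step I expect to be the main obstacle — is that $D$ \emph{does not shrink the area of any triangle}. To prove this I would use that the area of the triangle with vertices $\vec a, \vec b, \vec c$ equals $\tfrac12 \norm{(\vec b - \vec a) \wedge (\vec c - \vec a)}$ in $\bigwedge^2 \R^m$, so the area of its image under $D$ is $\tfrac12 \norm{\bigl(\bigwedge^2 D\bigr)\bigl((\vec b - \vec a)\wedge(\vec c - \vec a)\bigr)}$. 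In the orthonormal basis $\set{\vec e_i \wedge \vec e_j : i < j}$ the operator $\bigwedge^2 D$ is diagonal with entries $D_{ii} D_{jj} \in \set{1, \ell^2}$, every one of which is at least $1$ because $\ell \geq \sqrt 2 > 1$; hence $\norm{\bigl(\bigwedge^2 D\bigr)\omega} \geq \norm{\omega}$ for all $\omega \in \bigwedge^2 \R^m$. Therefore every triangle spanned by three distinct points of $DX$ still has area at least $c_1 = 1$.

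Finally I would conclude by a packing argument. By \cref{clm:vol-versus-side-length} applied inside $DX$, for any two distinct points $\vec p, \vec q \in DX$ and any third point $\vec r \in DX$ (which exists since $n \geq 3$) we have $\norm{\vec p - \vec q} \geq \mu^-(\vec p, \vec q, \vec r) \geq 2/\mu^+(\vec p, \vec q, \vec r) \geq 2/(2R) = 1/R$, since the longest side of any triangle in $DX$ is at most its diameter $2R$. So $DX$ is a set of $n$ points lying in a ball of radius $R$ and pairwise at distance at least $1/R$. The open balls of radius $1/(2R)$ centered at these points are then pairwise disjoint and all contained in the concentric ball of radius $R + 1/(2R)$, so comparing Lebesgue volumes gives $n \cdot \bigl(1/(2R)\bigr)^m \leq \bigl(R + 1/(2R)\bigr)^m$, i.e.\ $n \leq (2R^2 + 1)^m$. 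Hence $m \geq \log n / \log(2R^2 + 1) = \Omega(\log n)$, since $R = \sqrt{1 + 4C^2}$ is a constant depending only on $c_1$ and $c_2$.
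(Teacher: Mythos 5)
Your proof is correct and follows essentially the same route as the paper: trap $X$ in a thin cylinder using the area upper bound, straighten it with the diagonal map $D = \diag(1/\ell, \ell, \ldots, \ell)$, argue that $D$ does not shrink triangle areas (your $\bigwedge^2 D$ computation is the same content as the paper's Cauchy--Binet step), and finish with a ball-packing argument. One genuinely worthwhile refinement is your explicit normalization to $c_1 = 1$, which via \cref{clm:vol-versus-side-length} forces $\ell \geq \sqrt{2} > 1$: the paper asserts $\det(D_{I,I}) \geq \min\{1/\ell \cdot \ell, \ell^2\} = 1$ without establishing $\ell \geq 1$, and for small $c_1$ this equality can fail before rescaling, so your normalization quietly patches a small gap.
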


\begin{proof}
Let $\ell := \max \norm{\vec{x} - \vec{y}}$, where the maximum is taken over all distinct pairs of points $\vec{x}, \vec{y} \in X$, and assume without loss of generality that $\vec{0}, \ell \vec{e}_1 \in X$.
Let $D := \diag(1/\ell, \ell, \ldots, \ell) \in \R^{m \times m}$. 
Let $\pi_1$ denote projection onto $\lspan(\vec{e}_1)$ and let $\pi_1^{\perp}$ denote projection onto $\lspan(\vec{e}_1)^{\perp} = \lspan(\vec{e}_2, \ldots, \vec{e}_m)$.

We claim that for all $\vec{x} \in X$, $\norm{\pi_1^{\perp}(\vec{x})} \leq 2 c_2/\ell$. The claim is clearly true for $\vec{x} = \vec{0}$ and $\vec{x} = \ell \vec{e}_1$, and so assume $\vec{x} \in X \setminus \set{\vec{0}, \ell \vec{e}_1}$. Then using the formula for the area of a triangle, $\vol_2(\conv(\vec{0}, \ell \vec{e}_1, \vec{x})) = \ell/2 \cdot \norm{\pi_1^{\perp}(\vec{x})} \leq c_2$, from which the claim follows by rearranging.

Then for any $\vec{x}, \vec{y} \in X$,
\begin{align} \label{eq:Dxy-dist-ub}
\begin{split}
\norm{D(\vec{x} - \vec{y})}^2 &= \norm{\pi_1(D(\vec{x} - \vec{y}))}^2 + \norm{\pi_1^{\perp}(D(\vec{x} - \vec{y}))}^2 \\
&= 1/\ell^2 \cdot \norm{\pi_1(\vec{x} - \vec{y})}^2 + \ell^2 \cdot \norm{\pi_1^{\perp}(\vec{x} - \vec{y})}^2 \\
&\leq 1/\ell^2 \cdot \ell^2 + \ell^2 \cdot \norm{\pi_1^{\perp}(\vec{x} - \vec{y})}^2 \\
&\leq 1/\ell^2 \cdot \ell^2 + \ell^2 \cdot (4 c_2/\ell)^2 \\
&= 16 (c_2)^2 + 1 \ \text{.}
\end{split}
\end{align}

The first inequality uses $\norm{\pi_1(\vec{x} - \vec{y})} \leq \norm{\vec{x} - \vec{y}} \leq \ell$, which follows from the definition of $\ell$. The second inequality uses triangle inequality and the claim above, which implies that  $\norm{\pi_1^{\perp}(\vec{x})} \leq 2 c_2/\ell$ and  $\norm{\pi_1^{\perp}(\vec{y})} \leq 2 c_2/\ell$.

Now, let $\vec{x}, \vec{y}, \vec{z} \in X$, and let $T := \conv(\vec{x}, \vec{y}, \vec{z})$. Then
\begin{equation} \label{eq:vol-DT-lb}
    \vol_2(DT)^2 = \sum_{\substack{I \subseteq [m], \\ \card{I} = 2}} \vol_2(\pi_I(DT))^2 
    = \sum_{\substack{I \subseteq [m], \\ \card{I} = 2}} \vol_2(D_{I, I} \cdot \pi_I(T))^2
    \geq \sum_{\substack{I \subseteq [m], \\ \card{I} = 2}} \vol_2(\pi_I(T))^2
    = \vol_2(T)^2 
    \geq (c_1)^2 \ \text{,}
\end{equation}
where the first and third equalities use the Cauchy-Binet formula (\cref{thm:cauchy-binet}), and the inequality uses the determinant formula for volume (\cref{eq:volume}) and the fact that $\det(D_{I, I}) \geq \min \set{1/\ell \cdot \ell, \ell^2} = 1$ for all $I \subseteq [m]$ with $\card{I} = 2$.

Combining \cref{eq:Dxy-dist-ub,eq:vol-DT-lb,clm:vol-versus-side-length}, we get that for all $\vec{x}, \vec{y} \in X$ with $\vec{x} \neq \vec{y}$, $\norm{D(\vec{x} - \vec{y})} \geq C$, where
\[
C := 2 c_1/\sqrt{16 (c_2)^2 + 1} \ \text{.}
\]
Therefore, the scaled, shifted Euclidean balls $(C/2 \cdot \B_2^m) + D \vec{x}$, $(C/2 \cdot \B_2^m) + D \vec{y}$ for such distinct $\vec{x}, \vec{y}$ are interior disjoint, and, again using \cref{eq:Dxy-dist-ub}, we have that all balls $(C/2 \cdot \B_2^m) + D \vec{x}$ for $\vec{x} \in X$ are contained in a larger Euclidean ball of radius $1/2 \cdot (\sqrt{16 (c_2)^2 + 1} + C)$.\footnote{Here $\B_2^m$ is the closed unit Euclidean ball in $m$ dimensions.}
It follows that the number $n$ of balls of the form $(C/2 \cdot \B_2^m) + D \vec{x}$ and hence points in $X$ is bounded as 
\[
n \leq \frac{\vol_m(1/2 \cdot (\sqrt{16 (c_2)^2 + 1} + C) \cdot \B_2^m)}{\vol_m(C/2 \cdot \B_2^m)} 
= (1/C \cdot \sqrt{16 (c_2)^2 + 1} + 1)^m
= \Big( \frac{16 (c_2)^2 + 1}{2 c_1} + 1 \Big)^m \ \text{,}
\]
and the claim follows by taking logs.
\end{proof}

We then get the following corollary, which says that the all-ones metric in \cref{lem:all1s-coboundary}, which is a coboundary $3$-metric, is not embeddable as a $o(\log n)$-dimensional volume $3$-metric with constant distortion.

\begin{corollary} \label{cor:one-cbd-many-vol}
There exists a $\C_{3, |\cdot|}^1$ coboundary metric space $(X, d)$ with $n = \card{X}$ vertices with the following property. If there exist a space $(X', d')$ in $\V_3^m$, a function $g: X \to X'$, and constants $c_1, c_2$ satisfying $0 < c_1 \leq c_2$ such that for all $x, y, z \in X$, $c_1 \cdot d(x, y, z) \leq d'(g(x), g(y), g(z)) \leq c_2 \cdot d(x, y, z) $ then $m \geq \Omega(\log n)$.
\end{corollary}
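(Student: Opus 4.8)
The plan is to take $(X, d)$ to be the ``discrete $3$-metric'' on $n$ points guaranteed by \cref{lem:all1s-coboundary}, i.e., the $3$-metric with $d(x, y, z) = 1$ for every triple of distinct $x, y, z \in X$, which that lemma shows belongs to $\C_{3, |\cdot|}^1$. With this choice, the corollary reduces almost immediately to \cref{thm:unit-triangles}, and the only real work is unwinding the definitions so that the distortion hypothesis becomes a statement about areas of triangles spanned by points in $\R^m$.

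Suppose then that $(X', d') \in \V_3^m$, $g : X \to X'$, and constants $0 < c_1 \le c_2$ witness an embedding of the kind in the statement. By \cref{def:volume_norms} there is a map $f : X' \to \R^m$ with $d'(x', y', z') = \vol_2(\conv(f(x'), f(y'), f(z')))$ for all $x', y', z' \in X'$. Let $h := f \circ g : X \to \R^m$. Since $d$ is identically $1$ on triples of distinct points, the distortion hypothesis gives, for every distinct triple $x, y, z \in X$,
\[
c_1 \le \vol_2(\conv(h(x), h(y), h(z))) \le c_2 \ \text{.}
\]

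Next I would check that $h$ is injective: if $h(x) = h(y)$ with $x \ne y$, then choosing any third point $z \in X$ (which exists since $n \ge 3$) makes $\conv(h(x), h(y), h(z))$ degenerate, contradicting $\vol_2(\conv(h(x), h(y), h(z))) \ge c_1 > 0$. Hence $h(X) \subseteq \R^m$ is a set of exactly $n$ points, every distinct triple of which spans a triangle of area in $[c_1, c_2]$. Applying \cref{thm:unit-triangles} to the point set $h(X)$ then yields $m \ge \Omega(\log n)$ (with the hidden constant depending on $c_1$ and $c_2$), which is exactly the claim.

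I do not anticipate any genuine obstacle here: the substantive content is carried entirely by \cref{lem:all1s-coboundary} and \cref{thm:unit-triangles}, and the steps requiring (minor) care are only verifying injectivity of $h$ so that the set passed to \cref{thm:unit-triangles} really has $n$ elements, and observing that the upper and lower area bounds transfer cleanly through the composition $f \circ g$ precisely because $d \equiv 1$ on distinct triples.
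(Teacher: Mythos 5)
Your proof is correct and takes essentially the same route as the paper, which simply says ``Combine \cref{lem:all1s-coboundary} and \cref{thm:unit-triangles}.'' You fill in the details the paper leaves implicit (unwinding the definition of $\V_3^m$, noting that the distortion hypothesis on the discrete $3$-metric becomes a two-sided area bound for every distinct triple, and checking injectivity of $h = f \circ g$ so that $h(X)$ genuinely has $n$ points), all of which is exactly the intended reasoning.
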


\begin{proof}
Combine \cref{lem:all1s-coboundary} and \cref{thm:unit-triangles}.
\end{proof}

We remark that a substantial amount of work has gone into bounding the maximum possible number of unit-area triangles spanned by $n$ points in fixed low-dimensional space (in $\R^2$ or $\R^3$)~\cite{ERDOS1971246,journals/jct/DumitrescuST09,journals/dcg/ApfelbaumS10,journals/combinatorica/RazS17}.
The special case of \cref{thm:unit-triangles} where $c_1 = c_2 = 1$ gives an asymptotic lower bound on $m$ as a function of $n$ and to isometrically embedding the all-ones $3$-metric in \cref{lem:all1s-coboundary} as a volume metric.%
\footnote{We note that it is \emph{possible} for all triples of $n$ points to span unit-area triangles when $m$ is sufficiently large. To see this, consider the triangles spanned by triples of (scalings of) the standard-normal basis vectors $\vec{e}_1, \ldots, \vec{e}_n \in \R^n$. By shifting by $-\vec{e}_1$ and applying dimension reduction, it is possible to use $m = n-1$ dimensions, which one might conjecture is optimal. However, in general it is not. As noted by \cref{lem:unit-triangles-R2}, it is possible to achieve $n = 4$ in $m = 2$ dimensions by taking the vertices of a square.}

\bibliographystyle{alpha}
\bibliography{bibliographies/distortion,bibliographies/k-metrics}

\end{document}